\newtheorem{theorem}{Theorem}[section]
\newtheorem{definition}{Definition}
\newtheorem{lemma}{Lemma}[section]
\newtheorem{corollary}{Corollary}[section]
\newtheorem{observation}{Observation}[section]
\newtheorem{proposition}{Proposition}[section]
\newtheorem*{conjecture}{Conjecture}
\renewcommand{\P}{\mathtt{P}}
\newcommand{\NP}{\mathtt{NP}}
\renewcommand{\Pr}{\mathrm{Pr}}
\newcommand{\E}{\mathbf{E}}
\newcommand{\R}{\mathbb{R}}
\newcommand{\MIS}{\texttt{MIS}}
\begin{document}

\title{Bilu-Linial Stability, Certified Algorithms and the Independent Set Problem}

\author[1]{\normalsize Haris Angelidakis\thanks{Part of this work was done while the author was a student at the Toyota Technological Institute at Chicago, and was supported by the National Science Foundation under the grant CCF-1718820.}}
\author[2]{Pranjal Awasthi}
\author[3]{Avrim Blum\thanks{This work was supported in part by the National Science Foundation under grants CCF-1733556, CCF-1800317, CCF-1815011, and CCF-1535967.}}
\author[4]{Vaggos Chatziafratis\thanks{Part of this work was done while the author was a student at Stanford University.}}
\author[5]{Chen Dan}
\affil[1]{\small Eindhoven University of Technology}
\affil[2]{\small Rutgers University}
\affil[3]{\small Toyota Technological Institute at Chicago}
\affil[4]{\small Northwestern University}
\affil[5]{\small Carnegie Mellon University}

\renewcommand\Authands{ and }
\date{}

\maketitle

\normalsize
\begin{abstract}
We study the classic Maximum Independent Set problem under the notion of \textit{stability} introduced by Bilu and Linial (2010): a weighted instance of Independent Set is $\gamma$-stable if it has a unique optimal solution that remains the unique optimal solution under multiplicative perturbations of the weights by a factor of at most $\gamma\geq 1$. The goal then is to efficiently recover this ``pronounced'' optimal solution exactly. In this work, we solve stable instances of Independent Set on several classes of graphs: we improve upon previous results by solving $\widetilde{O}(\Delta/\sqrt{\log \Delta})$-stable instances on graphs of maximum degree $\Delta$, $(k - 1)$-stable instances on $k$-colorable graphs and $(1 + \varepsilon)$-stable instances on planar graphs (for any fixed $\varepsilon > 0$), using both combinatorial techniques as well as LPs and the Sherali-Adams hierarchy. %We emphasize that this is the first time that exact recovery is possible for $(1+\varepsilon)$-stable instances of a natural problem (which is optimal for stability), and simultaneously the first time that Bilu-Linial stability is successfully combined with the Sherali-Adams hierarchy to obtain exact recovery results.

For general graphs, we present a strong lower bound showing that there are no efficient algorithms for $O(n^{\frac{1}{2} - \varepsilon})$-stable instances of Independent Set, assuming the planted clique conjecture. To complement our negative result, we give an algorithm for $(\varepsilon n)$-stable instances, for any fixed $\varepsilon > 0$.  As a by-product of our techniques, we give algorithms as well as lower bounds for stable instances of Node Multiway Cut (a generalization of Edge Multiway Cut), by exploiting its connections to Vertex Cover. Furthermore, we prove a general structural result showing that the integrality gap of convex relaxations of several maximization problems reduces dramatically on stable instances. %this is in stark contrast with previous works, where abrupt phase transitions occur only when the stability factor $\gamma$ crosses a given threshold. 

Moreover, we initiate the study of \textit{certified} algorithms for Independent Set. The notion of a $\gamma$-certified algorithm was introduced very recently by Makarychev and Makarychev (2018) and it is a class of $\gamma$-approximation algorithms that satisfy one crucial property: the solution returned is optimal for a perturbation of the original instance, where perturbations are again multiplicative up to a factor of $\gamma \geq 1$ (hence, such algorithms not only solve $\gamma$-stable instances optimally, but also have guarantees even on unstable instances). Here, we obtain $\Delta$-certified algorithms for Independent Set on graphs of maximum degree $\Delta$, and $(1+\varepsilon)$-certified algorithms on planar graphs. Finally, we analyze the algorithm of Berman and F{\"{u}}rer (1994) and prove that it is a $\left(\frac{\Delta + 1}{3} + \varepsilon\right)$-certified algorithm for Independent Set on graphs of maximum degree $\Delta$ where all weights are equal to 1.
 \end{abstract}

%\newpage
%\pagenumbering{arabic} 
\section{Introduction}
The Maximum Independent Set problem (simply \MIS{} from now on) is a central problem in theoretical computer science and has been the subject of extensive research over the last few decades. As a result we now have a thorough understanding of the \textit{worst-case} behavior of the problem. In general graphs, the problem is $n^{1-\varepsilon}$-hard to approximate, assuming that $\P \neq \NP$~\cite{DBLP:conf/focs/Hastad96, DBLP:conf/stoc/Zuckerman06}, and $n/2^{(\log n)^{3/4 + \varepsilon}}$-hard to approximate, assuming that $\NP \not\subseteq \texttt{BPTIME}(2^{ (\log n)^{O(1)} })$~\cite{DBLP:conf/icalp/KhotP06}. On the positive side, the current best algorithm is due to Feige~\cite{DBLP:journals/siamdm/Feige04} achieving a $\widetilde{O}(n / \log^3 n)$-approximation (the notation $\widetilde{O}$ hides some $\mathrm{poly}(\log \log n)$ factors). In order to circumvent the strong lower bounds, many works have focused on special classes of graphs, such as bounded-degree graphs (see, e.g.,~\cite{DBLP:journals/mp/AlonK98, DBLP:journals/toc/AustrinKS11, DBLP:conf/soda/Bansal15, DBLP:conf/stoc/BansalGG15, DBLP:journals/jacm/Chan16, DBLP:journals/jgaa/Halldorsson00, DBLP:journals/njc/HalldorssonR94, DBLP:journals/siamcomp/Halperin02}), planar graphs~(\cite{DBLP:journals/jacm/Baker94}) etc. In this work, we build upon this long line of research and study \MIS{} under the beyond worst-case framework introduced by Bilu and Linial~\cite{DBLP:conf/innovations/BiluL10}.

In an attempt to capture real-life instances of combinatorial optimization problems, Bilu and Linial proposed a notion of stability, which we now instantiate in the context of \MIS{} (from now on, we will always assume weighted instances of \MIS{}).
\begin{definition}[$\gamma$-perturbation~\cite{DBLP:conf/innovations/BiluL10}]\label{def:perturbation}
Let $G = (V,E,w)$, $w: V \to \R_{>0}$, be an instance of \MIS{}. An instance $G'=(V,E,w')$ is a $\gamma$-perturbation of $G$, for some parameter $\gamma \geq 1$, if for every $u \in V$ we have $w_u \leq w'_u \leq \gamma \cdot w_u$.
\end{definition}

\begin{definition}[$\gamma$-stability~\cite{DBLP:conf/innovations/BiluL10}]\label{def:stability}
Let $G = (V,E,w)$, $w: V \to \R_{>0}$, be an instance of \MIS{}. The instance $G$ is $\gamma$-stable, for some parameter $\gamma \geq 1$, if:
\begin{enumerate}
    \item it has a unique maximum independent set $I^*$,
    \item every $\gamma$-perturbation $G'$ of $G$ has a unique maximum independent set equal to $I^*$.
\end{enumerate}
\vspace*{0.1cm}
Equivalently, $G$ is $\gamma$-stable if it has an independent set $I^*$ such that $w(I^* \setminus S) > \gamma \cdot w(S \setminus I^*)$ for every feasible independent set $S \neq I^*$ (we use the notation $w(Q):= \sum_{u \in Q} w_u$ for $Q \subseteq V$).
\end{definition}

This definition of stability is motivated by the empirical observation that in many real-life instances, the optimal solution stands out from the rest of the solution space, and thus is not sensitive to small perturbations of the parameters. This suggests that the optimal solution does not change (structurally) if the parameters of the instance are perturbed (even adversarially). Observe that the smaller the so-called \emph{stability threshold} $\gamma$ is, the less severe the restrictions imposed on the instance are; for example, $\gamma=1$ is the case where we only require the optimal solution to be unique. Thus, the main goal in this framework is to recover the optimal solution in polynomial time, for as small $\gamma \geq 1$ as possible. An ``optimal'' result would translate to $\gamma$ being $1+\varepsilon$, for small $\varepsilon>0$, since assuming uniqueness of the optimal solution is not believed to make the problems easier (see, e.g.,~\cite{DBLP:journals/tcs/ValiantV86}), and thus $\varepsilon$ is unlikely to be zero. We note that perturbations are scale-invariant, and so it suffices to consider perturbations that only scale up. Moreover, we observe that an algorithm for $\gamma$-stable instances of \MIS{} solves $\gamma$-stable instances of Minimum Vertex Cover, and vice versa.

Stability was first introduced for Max Cut~\cite{DBLP:conf/innovations/BiluL10}, but the authors note that it naturally extends to other problems, such as \MIS{}, and, moreover, they prove that the greedy algorithm for \MIS{} solves $\Delta$-stable instances on graphs of maximum degree $\Delta$. The work of Bilu and Linial has inspired numerous works on stable instances of various optimization problems; we give an overview of the literature in the next page.

% such as Max Cut~\cite{DBLP:conf/stacs/BiluDLS13, DBLP:conf/soda/MakarychevMV14}, Edge Multiway Cut~\cite{DBLP:conf/soda/MakarychevMV14, DBLP:conf/stoc/AngelidakisMM17}, TSP~\cite{DBLP:conf/sofsem/MihalakSSW11}, clustering problems such as $k$-median, $k$-means, and $k$-center~\cite{DBLP:journals/ipl/AwasthiBS12, DBLP:journals/siamcomp/BalcanL16, DBLP:conf/icalp/BalcanHW16, DBLP:conf/stoc/AngelidakisMM17, Chekuri-Approx18, pmlr-v89-deshpande19a, DBLP:conf/soda/FriggstadKS19}, submodular maximization~\cite{DBLP:conf/esa/ChatziafratisRV17} and more recently MAP inference~\cite{lang2018block,lang2018optimality}.

Prior works on stability have also studied \emph{robust} algorithms~\cite{DBLP:conf/soda/MakarychevMV14, DBLP:conf/stoc/AngelidakisMM17}; these are algorithms that either output an optimal solution or provide a polynomial-time verifiable certificate that the instance is not $\gamma$-stable (see Section~\ref{sec:prelim} for a definition). Motivated by the notion of stability, Makarychev and Makarychev~\cite{MM18} recently introduced an intriguing class of algorithms, namely $\gamma$-\emph{certified} algorithms.

\begin{definition}[$\gamma$-certified algorithm~\cite{MM18}]\label{def:certified}
An algorithm for \MIS{} is called $\gamma$-certified, for some parameter $\gamma \geq 1$, if for every instance $G = (V,E,w)$, $w: V \to \R_{>0}$, it computes
\begin{enumerate}
    \item a feasible independent set $S \subseteq V$ of $G$,
    \item a $\gamma$-perturbation $G' = (V,E,w')$ of $G$ such that $S$ is a maximum independent set of $G'$.
\end{enumerate}
\vspace*{0.1cm}
Equivalently, Condition (2) can be replaced by the following: $\gamma \cdot w(S \setminus I) \geq w(I \setminus S)$ for every independent set $I$ of $G$.
\end{definition}

We highlight that a certified algorithm works for \emph{every} instance; if the instance is $\gamma$-stable, then the solution returned is the optimal one, while if it is not stable, the solution is within a $\gamma$-factor of optimal. Hence a $\gamma$-certified algorithm is also a $\gamma$-approximation algorithm.

\paragraph{Motivation.} Stability is especially natural for problems where the given objective function may be a proxy for a true goal of identifying a hidden correct solution. For \MIS{}, a natural such scenario is applying a machine learning algorithm in the presence of pairwise constraints. Consider, for instance, an algorithm that scans news articles on the web and aims to extract events such as ``athlete X won the Olympic gold medal in Y''. For each such statement, the algorithm gives a confidence score (e.g., it might be more confident if it saw this listed in a table rather than inferring it from a free-text sentence that the algorithm might have misunderstood). But in addition, the algorithm might also know logical constraints such as ``at most one person can win a gold medal in any given event''.  These logical constraints would then become edges in a graph, and the goal of finding the most likely combination of events would become a \MIS{} problem. Stability would be natural to assume in such a setting since the exact confidence weights are somewhat heuristic, and the goal is to recover an underlying ground truth. It is also easy to see the usefulness of a certified algorithm in this setting. Given a certified algorithm that outputs a $\gamma$-perturbation, the user of the machine learning algorithm can further test and debug the system by trying to gather evidence for events on which the perturbation puts higher weight. 
%%%%%%%%%%%%%%%%%%%%%%%%%%%%%%%%%%%%%%%%%%%%%%%%%%%%%%%%%%%%%%%%%%%%%%%%%%%%%%%%

\paragraph{Related Work.} There have been many works on the worst-case complexity of \MIS{} and the current best known algorithms give $\widetilde{O}(n / \log^3 n)$-approximation~\cite{DBLP:journals/siamdm/Feige04}, and $\widetilde{O}(\Delta/\log \Delta)$-approximation~\cite{DBLP:journals/jgaa/Halldorsson00, DBLP:journals/siamcomp/Halperin02, DBLP:journals/dam/KakoOHH09}), where $\Delta$ is the maximum degree. The problem has also been studied from the lens of beyond worst-case analysis. For random graphs with a planted independent set, \MIS{} is equivalent to the classic planted clique problem. Inspired by semi-random models of~\cite{blum1995coloring}, Feige and Killian~\cite{feige2001heuristics} designed SDP-based algorithms for computing large independent sets in semi-random graphs. Finally, there has been work on \MIS{} under noise~\cite{DBLP:conf/approx/MagenM09, bansal2017lp}.

The notion of Bilu-Linial stability goes beyond random/semi-random models and proposes deterministic conditions that give rise to non worst-case, real-life instances. The study of this notion has led to insights into the complexity of many problems in optimization and machine learning. For \MIS{}, Bilu~\cite{Bilu} analyzed the greedy algorithm and showed that it recovers the optimal solution for $\Delta$-stable instances of graphs of maximum degree $\Delta$. The same result is also a corollary of a general theorem about the greedy algorithm and $p$-extendible independence systems proved by Chatziafratis et al.~\cite{DBLP:conf/esa/ChatziafratisRV17}. On the negative side, Angelidakis et al.~\cite{DBLP:conf/stoc/AngelidakisMM17} showed that there is no robust algorithm for $n^{1 - \varepsilon}$-stable instances of \MIS{} on general graphs (unbounded degree), assuming that $\P \neq \NP$. 

The work of Bilu and Linial has inspired a sequence of works about stable instances of various combinatorial optimization problems. There are now algorithms that solve $O(\sqrt{\log n} \log \log n)$-stable instances of Max Cut~\cite{DBLP:conf/stacs/BiluDLS13, DBLP:conf/soda/MakarychevMV14}, $(2 - 2/k)$-stable instances of Edge Multiway Cut, where $k$ is the number of terminals~\cite{DBLP:conf/soda/MakarychevMV14, DBLP:conf/stoc/AngelidakisMM17}, and $1.8$-stable instances of symmetric TSP~\cite{DBLP:conf/sofsem/MihalakSSW11}. There has also been extensive work on stable instances of clustering problems (usually called perturbation-resilient instances) with many positive results for problems such as $k$-median, $k$-means, and $k$-center~\cite{DBLP:journals/ipl/AwasthiBS12, DBLP:journals/siamcomp/BalcanL16, DBLP:conf/icalp/BalcanHW16, DBLP:conf/stoc/AngelidakisMM17, DBLP:conf/focs/Cohen-AddadS17, Chekuri-Approx18, pmlr-v89-deshpande19a, DBLP:conf/soda/FriggstadKS19}, and more recently on MAP inference~\cite{lang2018block,lang2018optimality}.

%Finally, there has also been work on studying \MIS{} under adversarial perturbations to the graph~\cite{DBLP:conf/approx/MagenM09, bansal2017lp}.

\paragraph{Our results.} We explore the notion of stability in the context of \MIS{} and significantly improve our understanding of its behavior on stable instances; we design algorithms for stable instances on different graph classes, and also initiate the study of certified algorithms for \MIS{}. More specifically, we obtain the following results.
\begin{itemize}
\item \textbf{Planar graphs:} We show that on planar graphs, any constant stability suffices to solve the problem exactly in polynomial time. More precisely, we provide robust and certified algorithms for $(1+\varepsilon)$-stable instances of planar \MIS{}, for any fixed $\varepsilon > 0$. To obtain these results, we utilize the Sherali-Adams hierarchy, demonstrating that hierarchies may be helpful for solving stable instances.

\item \textbf{Graphs with small chromatic number or bounded degree:} We provide robust algorithms for solving $(k-1)$-stable instances of \MIS{} on $k$-colorable graphs (where the algorithm does not have access to a $k$-coloring of the graph) and ($\Delta - 1)$-stable instances of \MIS{} on graphs of maximum degree $\Delta$. Both results are based on LPs. For bounded-degree graphs, we then turn to combinatorial techniques and design a (non-robust) algorithm for $\widetilde{O}(\Delta/\sqrt{\log \Delta})$-stable instances; this is the first algorithm that solves $o(\Delta)$-stable instances. Moreover, we show that the standard greedy algorithm is a $\Delta$-certified algorithm for \MIS{}, whereas for unweighted instances of \MIS{}, the algorithm of Berman and F\"{u}rer (1994) is a $\left(\frac{\Delta + 1}{3} + \varepsilon\right)$-certified algorithm.

\item \textbf{General graphs:} For general graphs, we show that solving $o(\sqrt{n})$-stable instances is hard assuming the hardness of finding maximum cliques in a random graph. To the best of our knowledge, this is only the second case of a lower bound for stable instances of a graph optimization problem that applies to any polynomial-time algorithm and not only to robust algorithms~\cite{DBLP:conf/soda/MakarychevMV14, DBLP:conf/stoc/AngelidakisMM17} (the first being the lower bound for Max $k$-Cut~\cite{DBLP:conf/soda/MakarychevMV14}). We complement this lower bound by giving an algorithm for $(\varepsilon n)$-stable instances of \MIS{} on graphs with $n$ vertices, for any fixed $\varepsilon > 0$.

\item \textbf{Convex relaxations and stability:} We present a structural result for the integrality gap of convex relaxations of maximization problems on stable instances: if the integrality gap of a relaxation is at most $\alpha$, then it is at most $\min \left\{\alpha, 1 + \frac{1}{\beta - 1} \right\}$ for $(\alpha \beta)$-stable instances, for any $\beta > 1$. This result demonstrates a smooth trade-off between stability and the performance of a convex relaxation, %\textcolor{red}{(which is in stark contrast with previous works' abrupt phase transition, see e.g.~\cite{DBLP:conf/esa/ChatziafratisRV17})},
and also implies $(1 + \varepsilon)$-estimation algorithms\footnote{An $\alpha$-estimation algorithm returns a value that is within a factor of $\alpha$ from the optimum, but not necessarily a corresponding solution that realizes this value.} for $O(\alpha/\varepsilon)$-stable instances.

\item \textbf{Node Multiway Cut:} We give the first results on stable instances of Node Multiway Cut, a strict generalization of the well-studied (under stability) Edge Multiway Cut problem~\cite{DBLP:conf/soda/MakarychevMV14, DBLP:conf/stoc/AngelidakisMM17}. In particular, we give a robust algorithm for $(k-1)$-stable instances, where $k$ is the number of terminals, and show that all negative results on stable instances of \MIS{} directly apply to Node Multiway Cut.
\end{itemize}

\paragraph{Organization of material.} Section~\ref{sec:prelim} provides definitions and related facts. Section~\ref{sec:stable} contains the algorithms for stable instances of \MIS{} on bounded-degree, small chromatic number and planar graphs. Section~\ref{sec:general-graphs} contains our results for stable instances on general graphs. Section~\ref{sec:integrality-gaps} demonstrates how the performance of convex relaxations improves as stability increases. Section~\ref{section:mc} contains a brief description of our results on the Node Multiway Cut problem. Section~\ref{sec:certified} contains various certified algorithms for \MIS{}. We conclude with a short discussion in Section~\ref{sec:summary}. Some proofs and results have been moved to the Appendix.

\section{Preliminaries and definitions}\label{sec:prelim}

Given a $\gamma$-stable instance, our goal is to design polynomial-time algorithms that recover the unique optimal solution, for as small $\gamma \geq 1$ as possible. A special class of such algorithms that is of particular interest is the class of robust algorithms, introduced by Makarychev~et al.~\cite{DBLP:conf/soda/MakarychevMV14}.

\begin{definition}[robust algorithm~\cite{DBLP:conf/soda/MakarychevMV14}]\label{def:robust}
Let $G = (V,E,w)$, $w: V \to \R_{>0}$, be an instance of \MIS{}. An algorithm $\mathcal{A}$ is a robust algorithm for $\gamma$-stable instances if:
\begin{enumerate}
    \item it always returns the unique optimal solution of $G$, when $G$ is $\gamma$-stable,
    \item it either returns an optimal solution of $G$ or reports that $G$ is not stable, when $G$ is not $\gamma$-stable.
\end{enumerate}
\end{definition}

Note that a robust algorithm is not allowed to err, while a non-robust algorithm is allowed to return a suboptimal solution, if the instance is not $\gamma$-stable. We now present a useful lemma about stable instances of \MIS{} that is used in several of our results. From now on, we denote the neighborhood of a vertex $u$ of a graph $G = (V,E)$ as $N(u) = \{v: (u,v) \in E\}$, and the neighborhood of a set $S \subseteq V$ as $N(S) = \{v \in V \setminus S: \exists u \in S \textrm{ s.t.}~(u,v) \in E\}$.
\begin{lemma}\label{lem:delete-points-inside}
Let $G = (V,E,w)$ be a $\gamma$-stable instance of \MIS{} whose optimal independent set is $I^*$. Then, for any $v \in I^*$, the induced instance $\widetilde{G} = G[V\setminus (\{v\} \cup N(v))]$ is $\gamma$-stable, and its unique maximum independent set is $I^* \setminus \{v\}$.
\end{lemma}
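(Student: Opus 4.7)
The plan is to work directly from the equivalent formulation of $\gamma$-stability given at the end of Definition~\ref{def:stability}, namely that $w(I^* \setminus S) > \gamma \cdot w(S \setminus I^*)$ for every independent set $S \neq I^*$. I would set $\widetilde{I} := I^* \setminus \{v\}$ as the candidate optimum of $\widetilde{G}$ and try to establish the same inequality inside $\widetilde{G}$ by lifting an arbitrary competitor $S$ in $\widetilde{G}$ to a competitor $S \cup \{v\}$ in $G$.

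First I would observe that $\widetilde{I}$ is indeed an independent set in $\widetilde{G}$: it is a subset of $I^*$ and is contained in $V \setminus (\{v\} \cup N(v))$. Next, fix an arbitrary independent set $S$ of $\widetilde{G}$ with $S \neq \widetilde{I}$, and define $S' := S \cup \{v\}$. Because every vertex of $S$ lies outside $\{v\} \cup N(v)$, the vertex $v$ is non-adjacent to all of $S$, so $S'$ is an independent set of $G$. Moreover $S' \neq I^*$, since $S' \setminus \{v\} = S \neq \widetilde{I} = I^* \setminus \{v\}$.

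Now I would compute the two symmetric differences. Since $v \in I^*$, we have $I^* \setminus S' = I^* \setminus (S \cup \{v\}) = (I^* \setminus \{v\}) \setminus S = \widetilde{I} \setminus S$; and since $v \notin S$, we have $S' \setminus I^* = (S \cup \{v\}) \setminus I^* = S \setminus I^* = S \setminus \widetilde{I}$ (the last equality using that every $u \in S$ satisfies $u \neq v$, so $u \notin I^*$ iff $u \notin I^* \setminus \{v\}$). Applying $\gamma$-stability of $G$ to $S'$ then yields
\[
w(\widetilde{I} \setminus S) \;=\; w(I^* \setminus S') \;>\; \gamma \cdot w(S' \setminus I^*) \;=\; \gamma \cdot w(S \setminus \widetilde{I}),
\]
which is exactly the stability inequality for $\widetilde{I}$ in $\widetilde{G}$. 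Since this holds for every independent set $S \neq \widetilde{I}$ of $\widetilde{G}$, the equivalent characterization gives both uniqueness (by taking $\gamma = 1$ in the consequence, using $w > 0$) and $\gamma$-stability.

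There is no real obstacle here; the only place to be careful is the set-algebra step verifying $I^* \setminus S' = \widetilde{I} \setminus S$ and $S' \setminus I^* = S \setminus \widetilde{I}$, which relies crucially on the fact that $N(v)$ has been deleted (so $v$ can be freely added back to any independent set of $\widetilde{G}$) and on $v \in I^*$. The proof is essentially a one-line reduction once the right equivalent definition of stability is adopted.
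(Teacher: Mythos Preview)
Your proof is correct and uses essentially the same idea as the paper: lift a competitor $S$ in $\widetilde{G}$ to $S \cup \{v\}$ in $G$ and invoke the stability of $G$. The only cosmetic difference is that the paper argues via the perturbation formulation of Definition~\ref{def:stability} (extending a $\gamma$-perturbation of $\widetilde{G}$ to one of $G$), whereas you use the equivalent symmetric-difference inequality; the underlying reduction is identical.
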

\begin{proof}
It is easy to see that $I^* \setminus \{v\}$ is a maximum independent set of $\widetilde{G}$. We now prove that the instance is $\gamma$-stable. Let's assume that there exists a perturbation $w'$ of $\widetilde{G}$ such that $I' \neq (I^* \setminus \{v\})$ is a maximum independent set of $\widetilde{G}$. This means that $w'(I') \geq w'(I^* \setminus \{v\})$. We now extend $w'$ to the whole vertex set $V$ by setting $w_u' = w_u$ for every $u \in \{v\} \cup N(v)$. It is easy to verify that $w'$ is a $\gamma$-perturbation for $G$. Observe that $I' \cup \{v\}$ is a feasible independent set of $G$, and we have $w'(I' \cup \{v\})  = w'(I') + w_v' \geq w'(I^* \setminus \{v\}) + w_v' = w'(I^*)$. Thus, we get a contradiction.
\end{proof}

Regarding certified algorithms (see Definition~\ref{def:certified}), it is easy to observe the following.
\begin{observation}[\cite{MM18}]
A $\gamma$-certified algorithm for \MIS{} satisfies the following:
\begin{enumerate}
    \item returns the unique optimal solution, when run on a $\gamma$-stable instance,
    \item is a $\gamma$-certified algorithm for Vertex Cover, and vice versa,
    \item is a $\gamma$-approximation algorithm for \MIS{} (and Vertex Cover).
\end{enumerate}
\end{observation}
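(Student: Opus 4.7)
The proof plan is to verify the three items separately, using only Definitions~\ref{def:perturbation}--\ref{def:certified} and the equivalent arithmetic form of the certified-algorithm condition stated at the end of Definition~\ref{def:certified}. In each case the starting point is the output of the algorithm on $G=(V,E,w)$: an independent set $S$ together with a $\gamma$-perturbation $G'=(V,E,w')$ for which $S$ is a maximum independent set of $G'$.

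For item (1), the plan is to suppose $G$ is $\gamma$-stable with unique maximum independent set $I^*$. By Definition~\ref{def:stability}, every $\gamma$-perturbation of $G$ admits $I^*$ as its unique maximum independent set, so in particular $I^*$ is the unique maximum independent set of $G'$. But $S$ is also a maximum independent set of $G'$ by the certification guarantee, so uniqueness forces $S=I^*$, as required.

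For item (2), the plan is to exploit the bijection between independent sets and vertex covers via complementation. Since the weight function on $V$ is unchanged by complementation, $S$ is a maximum independent set of $(V,E,w')$ if and only if $V\setminus S$ is a minimum vertex cover of $(V,E,w')$. The same $w'$ is a $\gamma$-perturbation of $w$ under the identical condition $w_u\le w'_u\le \gamma w_u$, which is the notion of perturbation used for Vertex Cover as well. Hence the algorithm that returns $(V\setminus S,w')$ is a $\gamma$-certified algorithm for Vertex Cover; the converse direction is symmetric.

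For item (3), the plan is to use the equivalent inequality $\gamma\cdot w(S\setminus I)\ge w(I\setminus S)$ holding for every independent set $I$. Writing $w(I)-w(S)=w(I\setminus S)-w(S\setminus I)$ and substituting yields $w(I)\le w(S)+(\gamma-1)w(S\setminus I)\le\gamma\cdot w(S)$, which gives the $\gamma$-approximation for \MIS{}. For Vertex Cover one applies item (2) first to obtain a certified algorithm for Vertex Cover, and then runs the analogous arithmetic: from $\gamma\cdot w(C''\setminus C)\ge w(C\setminus C'')$ (the Vertex Cover analogue of the certified condition, obtained from $w\le w'\le\gamma w$ and the optimality of $C=V\setminus S$ in $(V,E,w')$), and from $w(C)-w(C'')=w(C\setminus C'')-w(C''\setminus C)$, one derives $w(C)\le\gamma\cdot w(C'')$ for every vertex cover $C''$. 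The only delicate point throughout is keeping straight which direction of the inequalities $w\le w'\le \gamma w$ is being used: the lower bound to pass from $w'$ to $w$ on one side of each inequality, and the upper bound on the other.
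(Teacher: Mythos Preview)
Your proof is correct. The paper itself does not prove this observation; it is stated without argument (the text just says ``it is easy to observe the following'' and attributes the observation to~\cite{MM18}). There is a commented-out proof of part~(3) in the appendix source, which works directly with the perturbed weights $w'$ (from $w'(S)\ge w'(I^*)$ together with $w'(S)\le\gamma w(S)$ and $w'(I^*)\ge w(I^*)$), rather than via the symmetric-difference inequality as you do; both arguments are equally short and valid. Your treatment of items (1) and (2) is exactly the intended one.
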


We stress that not all algorithms for stable instances are certified, so there is no equivalence between the two notions. Some examples (communicated to us by Yury Makarychev~\cite{Mak18}) include the algorithms for stable instances of TSP~\cite{DBLP:conf/sofsem/MihalakSSW11}, Max Cut (the GW SDP with triangle inequalities), and clustering. All these algorithms solve stable instances but are not certified. Thus, designing a certified algorithm is, potentially, a harder task than designing an algorithm for stable instances.

From now on, if an algorithm for \MIS{} only returns a feasible solution $S$, it will be assumed to be ``candidate'' $\gamma$-certified that also returns the perturbed weight function $w'$ with $w_u' = \gamma \cdot w_u$ for $u \in S$ and $w_u' = w_u$, otherwise.

\section{Stable instances of \MIS{} on special classes of graphs}\label{sec:stable}

In the next few sections, we obtain algorithms for stable instances of \MIS{} on several natural classes of graphs, by using convex relaxations and combinatorial techniques.

%%%%%%%%%%%%%%%%%%%%%%%%%%%%%%%%%%%%%%%%%%%%%%%%%%%%%%%%%%%%%%%%%%%%%%%%%%%%%%%%%%%%%%%%%%%%%%%%%%%%%%%%%%%%

\subsection{Convex relaxations and robust algorithms}

The starting point for the design of robust algorithms via convex relaxations is the structural result of Makarychev et al.~\cite{DBLP:conf/soda/MakarychevMV14}, that gives sufficient conditions for the integrality of convex relaxations on stable instances. We now introduce a definition and restate their theorem in the setting of \MIS{}.

\begin{definition}[$(\alpha,\beta)$-rounding]
Let $x: V \to [0,1]$ be a feasible fractional solution of a convex relaxation of \MIS{} whose objective value for an instance $G = (V, E,w)$ is $\sum_{u \in V} w_u x_u$. A randomized rounding scheme for $x$ is an $(\alpha,\beta)$-rounding, for some parameters $\alpha, \beta \geq 1$, if it always returns a feasible independent set $S$, such that the following two properties hold for every vertex $u \in V$:
\begin{enumerate}
    \item $\Pr[u \in S] \geq \frac{1}{\alpha} \cdot x_u$,
    \item $\Pr[u \notin S] \leq \beta \cdot (1 - x_u)$.
\end{enumerate}
\end{definition}

\begin{theorem}[\cite{DBLP:conf/soda/MakarychevMV14}]\label{thm:MMV}
Let $x: V \to [0,1]$ be an optimal fractional solution of a convex relaxation of \MIS{} whose objective value for an instance $G = (V, E,w)$ is $\sum_{u \in V} w_u x_u$. Suppose that there exists an $(\alpha,\beta)$-rounding for $x$, for some $\alpha,\beta \geq 1$. Then, $x$ is integral for $(\alpha \beta)$-stable instances.
\end{theorem}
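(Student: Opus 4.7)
The plan is to show that the only optimal fractional solution is the indicator vector of $I^*$. So let $x$ be an optimal fractional solution of the relaxation (with value $\mathrm{OPT}_{LP} = \sum_u w_u x_u$), and let $I^*$ be the unique maximum independent set guaranteed by $(\alpha\beta)$-stability. Since $\mathbf{1}_{I^*}$ is integral feasible, we have the trivial direction $\mathrm{OPT}_{LP} \geq w(I^*)$, so it suffices to prove $\mathrm{OPT}_{LP} \leq w(I^*)$ and then use tightness of all intermediate inequalities to pin down $x$.

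First I would invoke the $(\alpha,\beta)$-rounding on $x$ to produce a random feasible independent set $S$, and compute the two quantities that the stability condition naturally controls:
\begin{align*}
\E[w(I^* \setminus S)] \;&=\; \sum_{u \in I^*} w_u \Pr[u \notin S] \;\leq\; \beta \sum_{u \in I^*} w_u (1 - x_u), \\
\E[w(S \setminus I^*)] \;&=\; \sum_{u \notin I^*} w_u \Pr[u \in S] \;\geq\; \tfrac{1}{\alpha} \sum_{u \notin I^*} w_u x_u,
\end{align*}
using properties (2) and (1) of the rounding respectively. Next I would exploit stability: by the equivalent reformulation in Definition~\ref{def:stability}, every feasible independent set $S \neq I^*$ satisfies $w(I^* \setminus S) > \alpha\beta \cdot w(S \setminus I^*)$, and when $S = I^*$ both sides are $0$. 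Taking expectations pointwise yields $\E[w(I^* \setminus S)] \geq \alpha\beta \cdot \E[w(S \setminus I^*)]$, with strict inequality whenever $\Pr[S \neq I^*] > 0$ (since on that event the integrand $w(I^* \setminus S) - \alpha\beta\, w(S \setminus I^*)$ is strictly positive).

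Chaining the three bounds, I would obtain
\[
\beta \sum_{u \in I^*} w_u (1-x_u) \;\geq\; \E[w(I^*\setminus S)] \;\geq\; \alpha\beta\, \E[w(S\setminus I^*)] \;\geq\; \beta \sum_{u \notin I^*} w_u x_u,
\]
which simplifies to $w(I^*) \geq \sum_u w_u x_u = \mathrm{OPT}_{LP}$. Combined with the trivial lower bound, equality holds throughout, and $\mathrm{OPT}_{LP} = w(I^*)$.

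Finally I would backtrack through the chain of equalities to recover $x$ itself. Equality in the middle step, $\E[w(I^*\setminus S)] = \alpha\beta\,\E[w(S\setminus I^*)]$, together with the strict stability inequality on $\{S \neq I^*\}$, forces $\Pr[S = I^*] = 1$, hence $\Pr[u \in S] = 1$ for $u \in I^*$ and $\Pr[u \in S] = 0$ for $u \notin I^*$. Plugging these back into the rounding guarantees gives $\beta(1 - x_u) \geq \Pr[u \notin S] = 0$ and $x_u/\alpha \leq \Pr[u \in S] = 0$, which combined with $x_u \in [0,1]$ yields $x_u = 1$ for $u \in I^*$ and $x_u = 0$ otherwise. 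Thus $x = \mathbf{1}_{I^*}$, proving integrality. The only subtle point is the second step: making sure the strict inequality from stability transfers correctly to expectations, which is what rules out any non-integral optimum — everything else is bookkeeping on the two rounding inequalities.
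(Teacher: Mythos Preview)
Your approach is essentially the paper's: both compute $\E[w(I^*\setminus S)]$ and $\E[w(S\setminus I^*)]$ via the two rounding guarantees, invoke the stability inequality in expectation, and deduce $w(I^*) \geq \sum_u w_u x_u$. The paper packages this as a direct contradiction (assume $x$ non-integral, find some $u\notin I^*$ with $x_u>0$, hence $\Pr[S\neq I^*]>0$ makes the middle inequality strict, yielding $w(I^*) > \mathrm{OPT}_{LP}$), whereas you first establish $\mathrm{OPT}_{LP}=w(I^*)$ and then backtrack through the equalities to pin down $x$.

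One small slip in your last paragraph: for $u\in I^*$, the rounding bound $\Pr[u\notin S]\leq\beta(1-x_u)$ together with $\Pr[u\notin S]=0$ gives only the vacuous $\beta(1-x_u)\geq 0$, not $x_u=1$. The conclusion is still immediate, though: you have already shown $x_u=0$ for all $u\notin I^*$ and $\sum_u w_u x_u = w(I^*)$, which forces $x_u=1$ on $I^*$ since $w_u>0$ and $x_u\le 1$. With that fix the argument is complete.
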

For completeness, the proof of Theorem~\ref{thm:MMV} is given in Appendix~\ref{appendix:proofs-stable}. The theorem suggests a simple robust algorithm: solve the relaxation, and if the solution is integral, report it, otherwise report that the instance is not stable (observe that the rounding scheme is used only in the analysis).

%%%%%%%%%%%%%%%%%%%%%%%%%%%%%%%%%%%%%%%%%%%%%%%%%%%%%%%%%%%%%%%%%%%%%%%%%%%%%%%%%%%%%%%%%%%%%%%
\subsection{A robust algorithm for $(k-1)$-stable instances of \MIS{} on $k$-colorable graphs}

In this section, we give a robust algorithm for $(k-1)$-stable instances of \MIS{} on $k$-colorable graphs by utilizing Theorem~\ref{thm:MMV} and the standard LP for \MIS{}. For a graph $G = (V,E,w)$, the standard LP  has an indicator variable $x_u$ for each vertex $u\in V$, and is given in Figure~\ref{fig:standard-LP}.

\begin{figure}
\begin{align*}
    \max:           &\quad \sum_{u \in V} w_u x_u\\
    \textrm{s.t.:}  &\quad x_u + x_v \leq 1,     \,\,\,\,\,\; \forall (u,v) \in E,\\
                    &\quad x_u \in [0, 1],       \quad \quad \forall u \in V.
\end{align*}
\caption{The standard LP relaxation for \MIS{}.}
\label{fig:standard-LP}
\end{figure}

The corresponding polytope is half-integral~\cite{Nemhauser1975}, and so we always have an optimal solution $x$ with $x_u \in \left\{0,\frac{1}{2},1 \right\}$ for every $u \in V$. This is useful for designing $(\alpha,\beta)$-rounding schemes, as it allows us to consider randomized combinatorial algorithms and easily present them as rounding schemes.

The crucial observation that we make is that the rounding scheme in Theorem~\ref{thm:MMV} is only used in the analysis and is not part of the algorithm, and so it can run in super-polynomial time. We also note that the final (polynomial-time) algorithm does not need to have a $k$-coloring of the graph. Let $G = (V,E,w)$ be a $k$-colorable graph, and let $x$ be an optimal half-integral solution. Let $V_i = \{u \in V: x_u = i\}$ for $i \in\{0,1/2,1\}$. We consider the rounding scheme of Hochbaum~\cite{HOCHBAUM1983243} (see Algorithm~\ref{alg:k-colorable-rounding}). We use the notation $[k] = \{1, ..., k\}$.

\begin{algorithm}[h]
\begin{enumerate}
    \item Compute a $k$-coloring $f: V_{1/2} \to [k]$ of the induced graph $G[V_{1/2}]$.
    \item Pick $j$ uniformly at random from the set $[k]$, and set $V_{1/2}^{(j)} := \{u \in V_{1/2}: f(u) = j\}$.
    \item Return $S:= V_{1/2}^{(j)} \cup V_1$.
\end{enumerate}
\caption{Hochbaum's $k$-colorable rounding scheme}
\label{alg:k-colorable-rounding}
\end{algorithm}

\begin{theorem}\label{thm:color-rounding}
Let $G = (V,E,w)$ be a $k$-colorable graph. Given an optimal half-integral solution $x$, the rounding scheme of Algorithm~\ref{alg:k-colorable-rounding} is a $\left(\frac{k}{2}, \frac{2(k-1)}{k}\right)$-rounding for $x$.
\end{theorem}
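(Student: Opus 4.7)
The plan is to verify the two defining conditions of an $(\alpha,\beta)$-rounding directly, treating each of the three possible values of $x_u \in \{0, 1/2, 1\}$ separately. First, though, I would confirm that the output $S = V_{1/2}^{(j)} \cup V_1$ is always a feasible independent set, since the definition of $(\alpha,\beta)$-rounding requires this. The set $V_1$ is independent because any edge $(u,v)$ with $u,v \in V_1$ would violate $x_u + x_v \le 1$; likewise there can be no edge between $V_1$ and $V_{1/2}$, since such an edge would give $x_u + x_v = 3/2 > 1$; and $V_{1/2}^{(j)}$ is independent because it is a single color class of a proper $k$-coloring of $G[V_{1/2}]$. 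Note that existence of such a $k$-coloring is guaranteed because $G$ is $k$-colorable (so any induced subgraph is too); the fact that computing this coloring may be hard is harmless, since the rounding is used only in the analysis via Theorem~\ref{thm:MMV}.

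Next I would compute $\Pr[u \in S]$ and $\Pr[u \notin S]$ in the three cases. If $u \in V_1$, then $u \in S$ deterministically, so both conditions are trivial (they reduce to $1 \ge 2/k$ and $0 \le 0$). If $u \in V_0$, then $u \notin S$ deterministically, and we need $1 \le \beta = 2(k-1)/k$, which holds whenever $k \ge 2$ (the case $k=1$ is trivial since the graph is edgeless and the LP is integral). The interesting case is $u \in V_{1/2}$: here $u \in S$ exactly when $j = f(u)$, which happens with probability $1/k$ since $j$ is uniform on $[k]$. Plugging in $x_u = 1/2$, one checks that both required inequalities
\[
\tfrac{1}{k} \;\ge\; \tfrac{2}{k}\cdot\tfrac{1}{2} \qquad \text{and} \qquad 1-\tfrac{1}{k} \;\le\; \tfrac{2(k-1)}{k}\cdot\tfrac{1}{2}
\]
hold with equality, which pins down why the $(k/2, 2(k-1)/k)$ parameters are tight for this scheme.

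There is no serious obstacle in this proof; the only subtle point is bookkeeping the three vertex classes and noting that the tightest bounds come from $V_{1/2}$, while the $V_0$ case is what prevents $\beta$ from being pushed below $1$. Combined with Theorem~\ref{thm:MMV}, this yields that the standard LP is integral on $(k-1)$-stable $k$-colorable instances, giving the desired robust algorithm: solve the LP, output it if integral, and otherwise declare non-stability.
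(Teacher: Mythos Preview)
Your proposal is correct and follows essentially the same approach as the paper's own proof: verify feasibility of $S$ via the LP constraints and the coloring, then check the two probability bounds case-by-case on $V_0$, $V_{1/2}$, $V_1$, with the $V_{1/2}$ case pinning down the parameters. Your write-up is in fact slightly more careful than the paper's (you make explicit the $k\ge 2$ requirement coming from the $V_0$ case and note where the bounds are tight), but the argument is the same.
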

\begin{proof}
The set $S$ is feasible, as there is no edge between $V_1$ and $V_{1/2}$ and $f$ is a valid coloring. For $u \in V_0$, we have $\Pr[u \in S] = 0 = x_u$ and $\Pr[u \notin S] = 1 = 1 - x_u$. For $u \in V_1$, we have $\Pr[u \in S] = 1 = x_u$ and $\Pr[u \notin S] = 0 = 1 - x_u$. Let $u \in V_{1/2}$. We have $\Pr[u \in S] \geq \frac{1}{k} = \frac{2}{k} \cdot x_u$ and $\Pr[u \notin S] \leq 1 - \frac{1}{k} = \frac{2(k-1)}{k} \cdot (1-x_u)$. The result follows.
\end{proof}

Theorems~\ref{thm:MMV} and \ref{thm:color-rounding} now imply the following theorem, which is tight.
\begin{theorem}\label{thm:colorable-robust}
The standard LP for \MIS{} is integral for $(k-1)$-stable instances of $k$-colorable graphs.
\end{theorem}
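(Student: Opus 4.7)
The plan is to combine Theorem~\ref{thm:MMV} with Theorem~\ref{thm:color-rounding} in a completely direct way, using the half-integrality of the standard LP polytope to produce a fractional solution on which the rounding of Algorithm~\ref{alg:k-colorable-rounding} can be analyzed.

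First I would solve the standard LP of Figure~\ref{fig:standard-LP} on the given $(k-1)$-stable instance $G=(V,E,w)$ and invoke half-integrality (Nemhauser--Trotter) to obtain an optimal solution $x$ with $x_u\in\{0,\tfrac12,1\}$ for every $u\in V$. Partition $V$ into $V_0,V_{1/2},V_1$ according to $x$. Since $G$ is $k$-colorable, so is the induced subgraph $G[V_{1/2}]$; hence a proper $k$-coloring $f\colon V_{1/2}\to[k]$ exists. This is the only step that is not obviously polynomial, but it does not need to be: the rounding of Algorithm~\ref{alg:k-colorable-rounding} enters the argument only through the analysis of Theorem~\ref{thm:MMV}, so it is perfectly legitimate for the colouring to be obtained by brute force.

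Next I would apply Theorem~\ref{thm:color-rounding} to conclude that Algorithm~\ref{alg:k-colorable-rounding} is an $(\alpha,\beta)$-rounding for $x$ with $\alpha=k/2$ and $\beta=2(k-1)/k$, so that
\[
\alpha\beta \;=\; \frac{k}{2}\cdot\frac{2(k-1)}{k} \;=\; k-1.
\]
Theorem~\ref{thm:MMV} then asserts that on any $(\alpha\beta)$-stable instance the optimal fractional solution $x$ must in fact be integral. Since our instance is $(k-1)$-stable and $\alpha\beta=k-1$, this gives integrality of $x$ immediately, which is precisely the statement of Theorem~\ref{thm:colorable-robust}.

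The only point that needs a sentence of care is the separation between the (possibly exponential) rounding scheme used in the analysis and the polynomial-time LP-based algorithm that the theorem is really about. I would emphasise that, because the hypothesis of Theorem~\ref{thm:MMV} only requires the \emph{existence} of an $(\alpha,\beta)$-rounding, the algorithmic consequence is that on every $(k-1)$-stable $k$-colourable instance, solving the LP directly (without any knowledge of a $k$-colouring) already yields the optimal integral independent set. I do not expect any genuine obstacle beyond this bookkeeping; the content of the theorem has already been packaged into Theorems~\ref{thm:MMV} and~\ref{thm:color-rounding}.
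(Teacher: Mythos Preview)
Your proposal is correct and is exactly the paper's approach: the theorem is stated immediately after Theorems~\ref{thm:MMV} and~\ref{thm:color-rounding} as a direct consequence of the two, with $\alpha\beta=\tfrac{k}{2}\cdot\tfrac{2(k-1)}{k}=k-1$. Your remarks about half-integrality and about the colouring being needed only in the analysis (not in the algorithm) match the paper's own framing verbatim.
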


%%%%%%%%%%%%%%%%%%%%%%%%%%%%%%%%%%%%%%%%%%%%%%%%%%%%%%%%%%%%%%%%%%%%%%%%%%%%%%%%%%%%%%%%%%%%%%%

%%%%%%%%%%%%%%%%%%%%%%%%%%%%%%%%%%%%%%%%%%%%%%%%%%%%%%%%%%%%%%%%%

%%%%%%%%%%%%%%%%%%%%%%%%%%%%%%%%%%%%%%%%%%%%%%%%%%%%%%%%%%%%%%%%%%%%%%%%%%%%%%%%%%%%%%%%%%%%%%%
\subsection{Algorithms for stable instances of \MIS{} on bounded-degree graphs}

Throughout this section, we assume that all graphs have maximum degree $\Delta$. The only result (prior to our work) for stable instances on such graphs was using the greedy algorithm and was given by Bilu~\cite{Bilu}.
\begin{algorithm}[h]
\begin{enumerate}
    \item Let $S := \emptyset$ and $X := V$.
    \item while $(X \neq \emptyset)$:\\
            \hspace*{20pt}Set $S := S \cup \{u\}$ and $X := X \setminus (\{u\} \cup N(u))$, where $u := \arg\max_{v \in X} \{w_v\}$.
    \item Return $S$.
\end{enumerate}
\caption{The greedy algorithm for \MIS{}}
\label{alg:greedy}
\end{algorithm}

\begin{theorem}[\cite{Bilu}]\label{thm:bilu-greedy}
The greedy algorithm (see Algorithm~\ref{alg:greedy}) solves $\Delta$-stable instances of \MIS{} on graphs of maximum degree $\Delta$.
\end{theorem}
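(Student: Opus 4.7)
The plan is to prove this by induction on the number of vertices, showing at each stage that the maximum-weight vertex that greedy selects necessarily belongs to the unique optimal independent set $I^*$. Once that is established, the rest of the argument falls out of Lemma~\ref{lem:delete-points-inside}.

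First I would let $u^* \in \arg\max_{v \in V} w_v$ be the vertex greedy picks in its first iteration, and argue that $u^* \in I^*$. Suppose toward a contradiction that $u^* \notin I^*$, and set $N^* := N(u^*) \cap I^*$. Because $I^*$ is independent, no edge of $I^* \setminus N^*$ is incident to $u^*$, so $S := (I^* \setminus N^*) \cup \{u^*\}$ is a feasible independent set of $G$ distinct from $I^*$. Applying the equivalent definition of $\Delta$-stability from Definition~\ref{def:stability} to $S$ gives $w(I^* \setminus S) > \Delta \cdot w(S \setminus I^*)$, i.e., $w(N^*) > \Delta \cdot w_{u^*}$. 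On the other hand, $|N^*| \leq |N(u^*)| \leq \Delta$ and $w_v \leq w_{u^*}$ for every $v \in N^*$ by the choice of $u^*$ as a maximum-weight vertex, so $w(N^*) \leq \Delta \cdot w_{u^*}$. This contradicts the stability inequality, and therefore $u^* \in I^*$.

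Having placed $u^*$ in $I^*$, I would invoke Lemma~\ref{lem:delete-points-inside} to conclude that the induced subinstance $\widetilde{G} := G[V \setminus (\{u^*\} \cup N(u^*))]$ is itself $\Delta$-stable with unique maximum independent set $I^* \setminus \{u^*\}$. The maximum degree of $\widetilde{G}$ is at most $\Delta$, so the inductive hypothesis applies; moreover, after its first iteration, the greedy algorithm proceeds exactly as if it were freshly invoked on $\widetilde{G}$. The induction then yields that greedy returns $I^* \setminus \{u^*\}$ on $\widetilde{G}$, so overall it outputs $I^*$. The base case (empty or single-vertex graphs) is immediate.

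There is no real obstacle beyond the exchange argument in the first paragraph; the only mild subtlety is what happens if several vertices tie for maximum weight. The contradiction above applies uniformly to any maximum-weight vertex, so every such vertex must lie in $I^*$, and thus the greedy algorithm's tie-breaking rule is irrelevant to the correctness of the output.
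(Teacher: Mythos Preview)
Your proof is correct. The paper does not actually supply its own proof of Theorem~\ref{thm:bilu-greedy}; it attributes the result to Bilu and later subsumes it via the stronger Theorem~\ref{thm:certified-greedy} (greedy is $\Delta$-certified), whose proof appears in Appendix~\ref{appendix:proofs-certified}. That argument is different in structure from yours: rather than proceeding inductively and peeling off one vertex at a time, it runs a single global exchange argument. It fixes the greedy output $S$, defines the perturbation $w'$ that scales every $v\in S$ by $\Delta$, and then, for an arbitrary competing independent set $S^*$, scans the elements of $S\setminus S^*$ in decreasing $w$-order, pairing each such $v$ with a set $Z\subseteq S^*\setminus S$ of size at most $\Delta$ so that $(S^*\setminus Z)\cup\{v\}$ is independent; the greedy criterion then gives $w'(v)=\Delta\cdot w(v)\ge w(Z)=w'(Z)$, and summing yields $w'(S)\ge w'(S^*)$.

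Your route---show the first pick lies in $I^*$ via the local swap $S=(I^*\setminus N^*)\cup\{u^*\}$, then invoke Lemma~\ref{lem:delete-points-inside} to recurse---is more elementary and directly tailored to the stability statement, and it leans nicely on machinery already set up in the paper. The paper's exchange argument is a bit more work but buys the stronger certified guarantee (optimality under an explicit $\Delta$-perturbation, for \emph{every} instance), which your inductive argument does not by itself deliver.
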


We first note that, since the maximum degree is $\Delta$, the chromatic number is at most $\Delta + 1$, and so Theorem~\ref{thm:colorable-robust} implies a robust algorithm for $\Delta$-stable instances, giving a robust analog of Bilu's result. In fact, we can slightly improve upon that by using Brook's Theorem~\cite{brooks_1941}, which states that the chromatic number is at most $\Delta$, unless the graph is complete or an odd cycle. We can then prove following theorem.
\begin{theorem}\label{thm:delta_minus_one_robust}
There exists a robust algorithm for $(\Delta - 1)$-stable instances of \MIS{}, where $\Delta$ is the maximum degree.
\end{theorem}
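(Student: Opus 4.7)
The plan is to derive Theorem~\ref{thm:delta_minus_one_robust} from Theorem~\ref{thm:colorable-robust} by invoking Brook's Theorem, which states that every connected graph of maximum degree $d$ has chromatic number at most $d$ except for the clique $K_{d+1}$ and, when $d=2$, the odd cycles. Thus, if we can handle these exceptional components directly, every remaining component of $G$ has chromatic number at most $\Delta$, and Theorem~\ref{thm:colorable-robust} with $k=\Delta$ gives a robust LP-based algorithm for each of them.

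The algorithm first decomposes $G$ into connected components and processes each one separately. For a component isomorphic to $K_{\Delta+1}$, the maximum independent set is a single vertex, so the algorithm selects the maximum-weight vertex $v^*$ of the clique and checks the condition $w_{v^*} > (\Delta-1)\,w_v$ for every other $v$ in the clique, reporting non-stability if it fails. For an odd-cycle component (only possible when $\Delta=2$, in which case $(\Delta-1)$-stability is just uniqueness of the MIS), a simple dynamic program solves MIS on the cycle exactly and verifies uniqueness. Every remaining component $C$ is $\Delta$-colorable by Brook's Theorem, either because its local maximum degree is strictly below $\Delta$ or because, at maximum degree $\Delta$, it is connected, non-complete and not an odd cycle. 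On each such $C$, the algorithm solves the standard LP, outputs the optimum if integral, and otherwise reports non-stability; crucially, no $\Delta$-coloring needs to be constructed, since in the proof of Theorem~\ref{thm:color-rounding} the coloring is used only in the analysis.

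Correctness rests on the easy observation that $(\Delta-1)$-stability of $G$ descends to every connected component $C$: any $(\Delta-1)$-perturbation of $C$ extends to a $(\Delta-1)$-perturbation of $G$ by leaving the weights on the other components untouched, so the unique global maximum independent set forces $I^*\cap C$ to be the unique maximum IS of the perturbed $C$. Consequently, on a $(\Delta-1)$-stable instance the clique weight-gap check holds, the odd-cycle optimum is unique, and Theorem~\ref{thm:colorable-robust} forces LP integrality on every other component, so the algorithm never falsely declares non-stability; conversely, on an unstable instance it either returns a true optimum (the LP upper-bounds the IS value, and the exceptional cases are solved directly) or issues an honest non-stability report. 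The main obstacle I foresee is precisely the handling of Brook's exceptional cases within the robust framework, but once one commits to the component-wise dispatch above, both cases reduce to short direct arguments and the theorem becomes a corollary of Theorem~\ref{thm:colorable-robust}.
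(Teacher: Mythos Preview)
Your proposal is correct and follows essentially the same approach as the paper: invoke Brook's Theorem to conclude that, after dispatching the exceptional components (cliques $K_{\Delta+1}$ and, when $\Delta=2$, odd cycles) by solving them directly, the remaining graph is $\Delta$-colorable, so Theorem~\ref{thm:colorable-robust} applies. The only minor difference is that you add explicit stability checks on the exceptional components (the clique weight-gap test and the odd-cycle uniqueness test), whereas the paper simply returns the exact optimum on those components without any check---since a robust algorithm is permitted to return an optimal solution even on unstable instances, those checks are unnecessary, though harmless.
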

Before giving the proof of the above theorem, we formally state Brook's theorem.
\begin{theorem}[Brook's theorem~\cite{brooks_1941}]
The chromatic number of a graph is at most the maximum degree $\Delta$, unless the graph is complete or an odd cycle, in which case it is $\Delta + 1$.
\end{theorem}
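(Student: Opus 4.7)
The plan is to prove Brook's theorem by the classical strategy of combining greedy coloring with a carefully chosen vertex ordering $v_1, \ldots, v_n$ having two properties: (i) every $v_i$ with $i < n$ has at least one neighbor among $\{v_{i+1}, \ldots, v_n\}$ (so at most $\Delta - 1$ of its neighbors are colored before it), and (ii) by the time we reach $v_n$ its neighborhood uses at most $\Delta - 1$ distinct colors, leaving one free color for $v_n$. I would first dispose of small cases: for $\Delta = 0$ or $\Delta = 1$ the claim is immediate, and for $\Delta = 2$ every component is a path or a cycle, each of which is $2$-colorable unless it is an odd cycle, matching the stated exception. So from now on assume $\Delta \geq 3$ and $G$ is connected and not isomorphic to $K_{\Delta+1}$; the goal reduces to $\chi(G) \leq \Delta$.

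Next I would handle the non-regular case. If some vertex $v_n$ has $\deg(v_n) < \Delta$, I fix a BFS tree of $G$ rooted at $v_n$ and list the vertices in decreasing order of BFS depth, with $v_n$ last. Property (i) holds for every $v_i$ with $i < n$ since its parent appears later, and $v_n$ has fewer than $\Delta$ neighbors in total, so greedy coloring along this order succeeds with $\Delta$ colors.

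The heart of the argument is the $\Delta$-regular case. Here I want a triple $(v_n, v_1, v_2)$ such that $v_1, v_2$ are non-adjacent neighbors of $v_n$ and $G \setminus \{v_1, v_2\}$ is connected. Given such a triple, I order $v_3, \ldots, v_{n-1}$ by reverse BFS of $G \setminus \{v_1, v_2\}$ rooted at $v_n$, assign color $1$ to both $v_1$ and $v_2$ (legal since they are non-adjacent), then greedily color in order. Each intermediate vertex has at most $\Delta - 1$ earlier-colored neighbors by (i), and when we reach $v_n$ its $\Delta$ neighbors use at most $\Delta - 1$ distinct colors because $v_1$ and $v_2$ share color $1$, so a free color remains for $v_n$.

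The main obstacle is producing the triple $(v_n, v_1, v_2)$, which I would resolve by a case split on the vertex connectivity of $G$. If $G$ is $3$-connected, I pick any vertex $u$ with two non-adjacent neighbors (which must exist because $G$ is connected and $G \neq K_{\Delta+1}$), set $v_n = u$, and let $v_1, v_2$ be those two neighbors; $3$-connectivity automatically keeps $G \setminus \{v_1, v_2\}$ connected. If $G$ has a cut vertex, I argue inductively on the blocks: each block is strictly smaller, and $\Delta$-regularity together with connectivity forbid a block being $K_{\Delta+1}$ (such a block would exhaust the degree of the cut vertex and leave it unable to join another block), so by induction each block is $\Delta$-colorable, and I patch the colorings together by permuting colors to agree at cut vertices. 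The delicate sub-case is when $G$ is $2$-connected but not $3$-connected; here I would fix a separating pair $\{a,b\}$, take $v_n = a$ and choose $v_1, v_2$ to be neighbors of $a$ in two different components of $G \setminus \{a,b\}$ (using $\Delta \geq 3$ to guarantee enough such neighbors exist), and verify that removing $v_1, v_2$ preserves connectivity because the remaining path through $b$ glues the components back together. Once this structural lemma is established, the greedy-plus-ordering argument above completes the proof.
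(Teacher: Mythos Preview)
The paper does not actually prove Brooks' theorem: it merely states the result with a citation to~\cite{brooks_1941} and then invokes it as a black box in the proof of Theorem~\ref{thm:delta_minus_one_robust}. So there is no ``paper's own proof'' to compare against here.

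That said, your outline is the classical Lov\'asz-style argument and is essentially correct. The reduction to the connected $\Delta$-regular case with $\Delta \geq 3$ is standard, and the greedy-on-reverse-BFS trick with two same-colored non-adjacent neighbors of the root is exactly the right engine. Your case split on vertex connectivity is the usual one; the $3$-connected case and the cut-vertex (block-decomposition) case are fine as sketched. The $2$-connected but not $3$-connected case is indeed the delicate one, and your sketch (``the remaining path through $b$ glues the components back together'') is correct but would need one more sentence of justification in a full write-up: you should note that $2$-connectivity forces both $a$ and $b$ to have a neighbor in every component of $G\setminus\{a,b\}$, and that any vertex $w \in C_i \setminus \{v_i\}$ has two internally vertex-disjoint paths to $b$ in $G$, at least one of which avoids $v_i$ and stays inside $C_i \cup \{b\}$, so $G \setminus \{v_1,v_2\}$ remains connected. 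With that filled in, the argument is complete.
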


\begin{proof}[Proof of Theorem~\ref{thm:delta_minus_one_robust}]
\MIS{} is easy to compute on cliques and cycles. Thus, by Brook's theorem, every ``interesting" instance of maximum degree $\Delta$ is $\Delta$-colorable. We can now state the following simple algorithm. If $\Delta \leq 2$, the graph is a collection of paths and cycles, and we can find the optimal solution in polynomial time. Let's assume that $\Delta > 2$. In this case, we first separately solve all $K_{\Delta + 1}$ disjoint components, if any (we pick the heaviest vertex of each $K_{\Delta + 1}$), and then solve the standard LP on the remaining graph (whose stability is the same as the stability of the whole graph). By Brook's theorem, the remaining graph is $\Delta$-colorable. If the LP is integral, we return the solution for the whole graph, otherwise we report that the instance is not stable.
\end{proof}

We now turn to non-robust algorithms and present an algorithm that solves $o(\Delta)$-stable instances, as long as the weights are polynomially-bounded integers. The core of the algorithm is a procedure that uses an $\alpha$-approximation algorithm as a black-box in order to recover the optimal solution, when the instance is stable. Let $G = (V, E, w)$ be a graph with $n = |V|$ and $w: V \to \{1, ..., \texttt{poly}(n)\}$. Let $\mathcal{A}$ denote an $\alpha$-approximation algorithm for \MIS{}. We will give an algorithm for $\gamma$-stable instances with $\gamma = \left\lceil \sqrt{2\Delta\alpha} \right \rceil$. Note that we can assume that $\alpha\leq \Delta$ and $\gamma \leq \Delta$. These assumptions hold for the rest of this section. Algorithm~\ref{alg:o(delta)-alg} is the main algorithm, and it uses Algorithm~\ref{alg:purify} as a subroutine.

\begin{algorithm}[h]
\noindent \texttt{Bounded-Alg}$(G(V,E,w))$:
\begin{enumerate}
    \item If $w(V) \leq \gamma$, then return $V$.
    \item Run $\alpha$-approximation algorithm $\mathcal{A}$ on $G$ to get an independent set $I$.
    \item Let $S := \texttt{PURIFY}(G, I, \gamma)$. 
    \item Let $S':= \texttt{Bounded-Alg}(G[V \setminus (S \cup N(S))])$.
    \item Return $S \cup S'$.
\end{enumerate}
\caption{Algorithm for $\gamma$-stable instances, where $\gamma = \left\lceil \sqrt{2\Delta\alpha} \right \rceil$}
\label{alg:o(delta)-alg}
\end{algorithm}

\begin{algorithm}[h]
\noindent\texttt{INPUT}: Graph $G = (V, E, w)$, independent set $I \subseteq V$ and factor $\gamma \geq 1$.

\begin{enumerate}
    \item Create a bipartite unweighted graph $G_0 = (L \cup R, E_0)$, where $L$ contains $\gamma \cdot w(u)$ copies of each $u \in I$ and $R$ contains $w(v)$ copies of each $v \in V \setminus I$. The set $E_0$ is defined as follows: if $(u, v)$ is an edge in $G$ with $u \in I$ and $v \notin I$, then add edges from each copy of $u$ in $L$ to each copy of $v$ in $R$.
    \item Compute a maximum cardinality matching $M$ of $G_0$.
    \item Return the set of all vertices $u \in I$ that have at least one unmatched copy in $L$ w.r.t.~$M$.
\end{enumerate}
\caption{The \texttt{PURIFY} procedure}
\label{alg:purify}
\end{algorithm}

To prove the algorithm's correctness, we need some lemmas
\begin{lemma}\label{lemma:small-weight-is-empty}
Let $G = (V, E, w)$ be $\gamma$-stable, with $w_u \geq 1$, for every $u \in V$. If $w(V) \leq \gamma$, then $E = \emptyset$.
\end{lemma}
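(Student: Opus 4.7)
The plan is to prove the contrapositive by contradiction: assume $E \neq \emptyset$ and derive a violation of $\gamma$-stability via the equivalent characterization in Definition~\ref{def:stability}. Let $I^*$ denote the unique maximum independent set guaranteed by stability.

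First I would pick an arbitrary edge $(u,v) \in E$. Since $I^*$ is independent, it cannot contain both endpoints, so at least one of $u,v$ lies outside $I^*$; call that vertex $u$, so $u \notin I^*$. Then $S := \{u\}$ is a feasible independent set distinct from $I^*$, so stability forces
\[
w(I^* \setminus S) \;>\; \gamma \cdot w(S \setminus I^*).
\]

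Next I would evaluate both sides. Since $u \notin I^*$ we have $S \setminus I^* = \{u\}$ and $I^* \setminus S = I^*$, hence the inequality becomes $w(I^*) > \gamma \cdot w_u$. Using $w_u \geq 1$, this gives $w(I^*) > \gamma$. But $I^* \subseteq V$ and all weights are positive, so $w(I^*) \leq w(V) \leq \gamma$, a contradiction. Therefore no edge can exist and $E = \emptyset$.

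The argument is essentially a one-line application of stability to the single-vertex ``swap'' $S = \{u\}$, so there is no real obstacle; the only thing to be careful about is making sure $S \neq I^*$ (which holds because $u \notin I^*$) and that $\{u\}$ is indeed a feasible independent set (trivially true, assuming no self-loops, which is standard for \MIS{}). The lower bound $w_u \geq 1$ is precisely what allows the stability inequality to be converted into a bound on $w(I^*)$ that contradicts $w(V) \leq \gamma$.
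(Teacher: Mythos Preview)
Your proof is correct and follows essentially the same approach as the paper: pick an edge, choose an endpoint $u \notin I^*$, and use stability applied to the single-vertex set $\{u\}$ to force $w(I^*) > \gamma$, contradicting $w(I^*) \leq w(V) \leq \gamma$. The only cosmetic difference is that the paper phrases the contradiction via the perturbation formulation (boosting $w_u$ by $\gamma$ and noting $w'(u) \geq w'(I^*)$) whereas you invoke the equivalent inequality $w(I^* \setminus S) > \gamma \cdot w(S \setminus I^*)$ from Definition~\ref{def:stability}.
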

\begin{proof}
Suppose there exists an edge $(u,v) \in E$. Let $I^*$ be the maximum independent set. We have $w(I^*) \leq \gamma$. Wlog, let's assume that $u \notin I^*$. We define the perturbation $w'$ where $w'(u) = \gamma \cdot w(u) \geq \gamma$ and $w'(q) = w(q)$, for all $q \neq u$. We have $w'(u) \geq w'(I^*)$, and so we get a contradiction.
\end{proof}
The above lemma justifies Step 1 of Algorithm~\ref{alg:o(delta)-alg}.

\begin{lemma}\label{lemma:large-opt-bounded-degree}
Let $G = (V, E, w)$ be $\gamma$-stable, and let $I^*$ be its maximum independent set. Then $w(I^*) > \frac{\gamma}{2\Delta} \cdot w(V)$.
\end{lemma}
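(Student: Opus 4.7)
The plan is to exploit stability by a local swap argument: for each vertex outside $I^*$ I will trade it against its neighbors inside $I^*$, get a weight inequality from Definition~\ref{def:stability}, and then double count.

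More concretely, first observe that since $I^*$ is a maximum independent set, for every $v \in V \setminus I^*$ the set $N^*(v) := N(v) \cap I^*$ is nonempty (otherwise $I^* \cup \{v\}$ would be independent and strictly heavier, contradicting maximality and uniqueness). Then the set $S_v := (I^* \setminus N^*(v)) \cup \{v\}$ is feasible, so applying $\gamma$-stability to $S_v$ using the equivalent formulation in Definition~\ref{def:stability} yields
\[
    w(N^*(v)) \;=\; w(I^* \setminus S_v) \;>\; \gamma \cdot w(S_v \setminus I^*) \;=\; \gamma \cdot w_v.
\]

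Next I would sum this inequality over all $v \in V \setminus I^*$ and swap the order of summation, which gives
\[
    \gamma \cdot w(V \setminus I^*) \;<\; \sum_{v \in V \setminus I^*} w(N^*(v)) \;=\; \sum_{u \in I^*} w_u \cdot |N(u) \cap (V \setminus I^*)| \;\leq\; \Delta \cdot w(I^*),
\]
where the last inequality uses the maximum degree bound. Rearranging, $w(V \setminus I^*) < (\Delta/\gamma) \cdot w(I^*)$, and therefore
\[
    w(V) \;=\; w(I^*) + w(V \setminus I^*) \;<\; \left(1 + \frac{\Delta}{\gamma}\right) w(I^*) \;=\; \frac{\gamma+\Delta}{\gamma} \cdot w(I^*) \;\leq\; \frac{2\Delta}{\gamma} \cdot w(I^*),
\]
where the last step uses the standing assumption $\gamma \leq \Delta$. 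This yields the claimed bound $w(I^*) > \frac{\gamma}{2\Delta} \cdot w(V)$.

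\textbf{Potential obstacle.} There is nothing technically deep here; the only thing to verify carefully is that $S_v$ is genuinely a \emph{different} feasible independent set (so that the strict stability inequality applies), which holds because $v \notin I^*$ implies $S_v \neq I^*$, and independence of $S_v$ follows from $v$ having no neighbors in $I^* \setminus N^*(v)$. The only other subtle point is the factor $2$ in the denominator, which comes from the inequality $\gamma \leq \Delta$ already assumed for the algorithm; without this assumption one only gets the slightly weaker $w(I^*) > \frac{\gamma}{\gamma+\Delta} \cdot w(V)$, but the stated bound is exactly what is needed for the recursive analysis of Algorithm~\ref{alg:o(delta)-alg}.
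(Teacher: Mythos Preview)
Your proof is correct, but it takes a different route than the paper's. The paper argues globally: it observes that $G[V\setminus I^*]$ has maximum degree at most $\Delta-1$ (every vertex outside $I^*$ has a neighbor inside, by maximality), hence is $\Delta$-colorable and contains an independent set $I'$ with $w(I')\ge w(V\setminus I^*)/\Delta$; a \emph{single} application of stability to $I'$ (which is disjoint from $I^*$) then gives $w(I^*)>\gamma\, w(I')\ge \frac{\gamma}{\Delta}\,w(V\setminus I^*)$. You instead apply stability \emph{locally}, once per vertex $v\notin I^*$, to the swap set $S_v$, and recover the same inequality $\gamma\, w(V\setminus I^*)<\Delta\, w(I^*)$ by double counting. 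Both arguments arrive at the identical intermediate bound $w(I^*)>\frac{\gamma}{\gamma+\Delta}\,w(V)$ before invoking $\gamma\le\Delta$. Your argument is a bit more elementary in that it avoids the auxiliary coloring/large-independent-set step; the paper's argument is cleaner in that stability is invoked only once, against a single competitor solution.
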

\begin{proof}
We look at the induced subgraph $G[V \setminus I^*]$. It has maximum degree at most $\Delta - 1$, and thus it has an independent set $I'$ of weight at least $w(V \setminus I^*)/ \Delta$. By stability, and since $I^* \cap I' = \emptyset$, we get that $w(I^*) > \gamma \cdot \frac{w(V) - w(I^*)}{\Delta}$, which implies that $w(I^*) > \frac{\gamma}{\gamma + \Delta} \cdot w(V) \geq \frac{\gamma}{2\Delta} \cdot w(V)$, where the last inequality follows from the fact that $\gamma \leq \Delta$.
\end{proof}

\begin{lemma}\label{lemma:a-approx-intersects-opt}
Let $G = (V, E, w)$ be a $\gamma$-stable instance, let $I^*$ be its maximum independent set and let $I'$ be an $\alpha$-approximate independent set. Then $I^* \cap I' \neq \emptyset$.
\end{lemma}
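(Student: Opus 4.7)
The plan is to argue by contradiction, using only the stability hypothesis, the definition of an $\alpha$-approximation, and the standing assumption that $\alpha \leq \Delta$. First, I would suppose that $I^* \cap I' = \emptyset$. Since $I^*$ is the \emph{unique} maximum independent set (by $\gamma$-stability) and $I'$ is a feasible independent set with $I' \neq I^*$, the equivalent form of stability in Definition~\ref{def:stability} gives
\begin{equation*}
w(I^* \setminus I') > \gamma \cdot w(I' \setminus I^*).
\end{equation*}
Using the assumption $I^* \cap I' = \emptyset$, this simplifies to $w(I^*) > \gamma \cdot w(I')$.

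Next, I would invoke the $\alpha$-approximation guarantee for $I'$, which yields $w(I') \geq w(I^*)/\alpha$. (We may assume $w(I^*) > 0$, for otherwise the instance is trivial: since $w$ is strictly positive, $w(I^*) = 0$ forces $V = \emptyset$, in which case the claim is vacuous.) Substituting gives
\begin{equation*}
w(I^*) > \gamma \cdot \frac{w(I^*)}{\alpha},
\end{equation*}
and canceling $w(I^*)$ leads to the inequality $\alpha > \gamma$.

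Finally, I would derive a contradiction by lower-bounding $\gamma$. Since $\gamma = \lceil \sqrt{2\Delta\alpha}\rceil$ and, by the standing assumption, $\alpha \leq \Delta$, we obtain
\begin{equation*}
\gamma \geq \sqrt{2\Delta\alpha} \geq \sqrt{2\alpha^2} = \sqrt{2}\,\alpha > \alpha,
\end{equation*}
contradicting $\alpha > \gamma$. I do not foresee any real obstacle here: the only subtlety is making sure that the strict inequality in stability is applied correctly (which requires $I' \neq I^*$, guaranteed by the emptiness of the intersection together with $I^* \neq \emptyset$) and that the assumption $\alpha \leq \Delta$ is used to turn the definition of $\gamma$ into a bound strictly larger than $\alpha$. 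Note that Lemma~\ref{lemma:large-opt-bounded-degree} is not needed for this particular statement; it will presumably be used elsewhere in the correctness analysis of Algorithm~\ref{alg:o(delta)-alg} (e.g., in bounding the weight removed at each recursive step).
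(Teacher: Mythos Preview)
Your proof is correct and follows essentially the same route as the paper: assume disjointness, use stability to get $w(I^*) > \gamma\, w(I')$, combine with the $\alpha$-approximation bound, and then use $\alpha \le \Delta$ to show $\gamma \ge \sqrt{2\Delta\alpha} \ge \sqrt{2}\,\alpha > \alpha$ for the contradiction. Your added remarks (that $I' \neq I^*$ is needed for the strict stability inequality, and that Lemma~\ref{lemma:large-opt-bounded-degree} is not used here) are accurate.
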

\begin{proof}
Suppose that $I^* \cap I' = \emptyset$. Then, by stability, we have that $w(I^*) > \gamma w(I')$. This implies that $w(I^*) > \frac{\gamma}{\alpha} \cdot w(I^*)$, which is a contradiction, since $    \frac{\gamma}{\alpha} \geq \frac{\sqrt{2\Delta\alpha}}{\alpha} = \frac{\sqrt{2\Delta}}{\sqrt{a}} \geq \frac{\sqrt{2\Delta}}{\sqrt{\Delta}} = \sqrt{2} > 1.$
\end{proof}

We now analyze the \texttt{PURIFY} procedure (Algorithm~\ref{alg:purify}).
\begin{lemma}
Let $G$ be a $\gamma$-stable instance that is given as input to the \texttt{PURIFY} procedure (see Algorithm~\ref{alg:purify}), along with an $\alpha$-approximate independent set $I$, and let $I^*$ be its maximum independent set. If $I \neq I^*$, then the set $S$ returned by the procedure always satisfies the following two properties:
\begin{enumerate}
    \item $S \neq \emptyset$, 
    \item $S \subseteq I^*$.
\end{enumerate}
\end{lemma}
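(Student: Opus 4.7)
My plan for Part~1 ($S \neq \emptyset$) is a direct counting argument. The $L$-side of $G_0$ has $|L| = \gamma \cdot w(I)$ copies and the $R$-side has $|R| = w(V \setminus I)$, so any matching satisfies $|M| \leq |R|$; hence if $|L| > |R|$, some copy in $L$ must be unmatched and the corresponding vertex enters $S$. The inequality $|L| > |R|$ is equivalent to $(\gamma + 1) w(I) > w(V)$, and it follows by chaining the $\alpha$-approximation guarantee $w(I) \geq w(I^*)/\alpha$, Lemma~\ref{lemma:large-opt-bounded-degree} (which gives $w(I^*) > \frac{\gamma}{2\Delta} w(V)$), and the choice $\gamma^2 \geq 2\Delta\alpha$ which implies $\gamma(\gamma+1) > 2\Delta\alpha$.

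For Part~2 ($S \subseteq I^*$), the plan is to combine a max-flow / min-cut analysis of the matching $M$ with $\gamma$-stability. View $G_0$ as an $s$-$t$ flow network with source-capacities $\gamma w(u)$ for $u \in I$ and sink-capacities $w(v)$ for $v \in V \setminus I$; then $|M|$ equals the max flow and min-cut gives
\[
|M| \;=\; \min_{X \subseteq I} \bigl[\gamma w(I \setminus X) + w(N(X))\bigr], \qquad |U| \;=\; \max_{X \subseteq I} \bigl[\gamma w(X) - w(N(X))\bigr],
\]
where $U$ is the set of unmatched copies in $L$. A standard alternating-BFS from $U$ produces a specific maximizer $X^*$ of the right-hand expression (the vertices whose $L$-copies are reachable), and since every $u \in S$ has an unmatched copy in $U$, we automatically get $S \subseteq X^*$.

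The heart of the proof is to show that every maximizer $X^*$ is contained in $I^*$. Assume for contradiction that $T := X^* \setminus I^* \neq \emptyset$. Maximality of $X^*$ against the alternative $X^* \setminus T \subseteq I$ gives
$$\gamma w(T) \;\geq\; w(N(X^*)) - w(N(X^* \setminus T)) \;=\; w\bigl(N(T) \setminus N(X^* \setminus T)\bigr),$$
using $N(X^*) = N(T) \cup N(X^* \setminus T)$ (which holds because $X^* \subseteq I$ is independent, so every neighbor of $X^*$ lies in $V \setminus I$). Independently, since $T$ is a nonempty independent subset of $I$ disjoint from $I^*$, the set $I^{**} := (I^* \setminus N(T)) \cup T$ is an independent set different from $I^*$, and $\gamma$-stability yields $w(I^* \cap N(T)) > \gamma w(T)$. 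Chaining, $w(I^* \cap N(T)) > w(N(T) \setminus N(X^* \setminus T))$, and since $I^* \cap N(T) \subseteq N(T)$, some $v \in I^* \cap N(T) \cap N(X^* \setminus T)$ must exist. But then $v \in I^*$ is adjacent to some $u \in X^* \setminus T = X^* \cap I^* \subseteq I^*$, contradicting the independence of $I^*$; hence $X^* \subseteq I^*$, so $S \subseteq X^* \subseteq I^*$.

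The main obstacle is locating the right auxiliary set: the max-flow / min-cut viewpoint naturally surfaces $X^*$ as the argmax of $\gamma w(X) - w(N(X))$, and the whole argument hinges on the clean tension between the maximality inequality (bounding the ``exclusive'' neighborhood $N(T) \setminus N(X^* \setminus T)$) and the stability inequality (lower-bounding the $I^*$-mass of the full neighborhood $N(T)$), which together force a forbidden edge inside $I^*$. Everything else, including Part~1, is routine verification driven by the earlier lemmas and the choice of $\gamma$.
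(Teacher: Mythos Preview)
Your proof is correct, but it takes a genuinely different route from the paper for Part~2.

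\medskip

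\textbf{Part~1.} Your counting argument is essentially the paper's: assuming $S=\emptyset$ forces $\gamma\, w(I)\le w(V\setminus I)$, and chaining the $\alpha$-approximation bound with Lemma~\ref{lemma:large-opt-bounded-degree} and $\gamma^2\ge 2\Delta\alpha$ gives a contradiction. (The paper in fact obtains the slightly stronger $\gamma\, w(I)>w(V)$ rather than $(\gamma+1)w(I)>w(V)$, but both suffice.)

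\medskip

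\textbf{Part~2.} Here the two proofs diverge. The paper argues \emph{directly on the bipartite graph $G_0$}: setting $A_0=$ copies of $I\setminus I^*$ and $B_0=$ copies of $I^*\setminus I$, it verifies Hall's condition for $A_0$ into $B_0$ straight from $\gamma$-stability (for any set of copies $Z\subseteq A_0$, stability applied to the swap $(I^*\setminus F)\cup I(Z)$ gives $|N(Z)\cap B_0|\ge w(F)>\gamma\,w(I(Z))\ge|Z|$). It then observes that $B_0$ has neighbors only in $A_0$ (because both $I^*\setminus I$ and $I\cap I^*$ lie in $I^*$), so any maximum matching must saturate $A_0$, forcing $S\subseteq I\cap I^*$.

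Your approach instead passes through max-flow/min-cut (K\"onig) duality: you identify the dual quantity $\gamma\,w(X)-w(N(X))$ over $X\subseteq I$, take a maximizer $X^*$ obtained from the standard alternating-BFS construction (which contains $S$ because every $u\in S$ has an unmatched copy), and then show $X^*\subseteq I^*$ by a neat contradiction: maximality of $X^*$ upper-bounds $w(N(T)\setminus N(X^*\setminus T))$ by $\gamma\,w(T)$, while stability lower-bounds $w(I^*\cap N(T))$ by $\gamma\,w(T)$, forcing an edge inside~$I^*$. This is correct; the one step you leave implicit is that in $G_0$ all copies of the same vertex share the same neighborhood, so the K\"onig reachable set respects the copy structure and really corresponds to a subset $X^*\subseteq I$ --- but this is routine.

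\medskip

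\textbf{Comparison.} The paper's proof is shorter and more elementary: one application of Hall's theorem plus one structural observation. Your proof uses more machinery (min-cut duality and the K\"onig construction) but surfaces an interesting intermediate object --- the maximizer $X^*$ of $\gamma\,w(X)-w(N(X))$ --- and establishes the stronger conclusion that \emph{every} such maximizer lies inside $I^*$, which is a cleaner dual characterization even if not needed here.
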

\begin{proof}
We first prove Property (1). Let's assume that $S = \emptyset$. This means that all vertices in $L$ are matched. By construction, this implies that $\gamma \cdot w(I) \leq w(V \setminus I)$. Since $I$ is an $\alpha$-approximation, we have that $\gamma \cdot w(I) \geq \frac{\gamma}{\alpha} \cdot w(I^*) > \frac{\gamma \cdot \gamma}{2\Delta \alpha} w(V) \geq \frac{2\Delta\alpha}{2\Delta\alpha} w(V) = w(V)$, where the second inequality is due to Lemma~\ref{lemma:large-opt-bounded-degree}. We conclude that $w(V \setminus I) > w(V)$, which is a contradiction. Thus, $S \neq \emptyset$.

We turn to Property (2). Let $A = I \setminus I^*$ and $B = I^* \setminus I$. Let $A_0 \subseteq L$ be the copies of the vertices of set $A$ in $G_0$, and let $B_0 \subseteq R$ be the copies of the vertices of set $B$ in $G_0$. We will show that for every $Z \subseteq A_0$, we have $|N(Z) \cap B_0| \geq |Z|$. To see this, let $Z \subseteq A_0$, and let $I(Z) \subseteq A$ be the distinct vertices of $A$ whose copies (not necessarily all of them) are included in $Z$. Since the instance is $\gamma$-stable, this implies that the weight of the neighbors $F \subseteq B$ of $I(Z)$ in $I^*$ is strictly larger than $\gamma \cdot w(I(Z))$. By construction, we have that $|Z| \leq \gamma \cdot w(I(Z))$, and the number of vertices in $G_0$ corresponding to vertices of $F$ is equal to $w(F)$. Moreover, all of these $w(F)$ vertices are connected with at least one vertex in $Z$, which means that $w(F) = |N(Z) \cap B_0|$. This implies that $|N(Z) \cap B_0| > |Z|$. Thus, Hall's condition is satisfied, and so there exists a perfect matching between the vertices of $A_0$ and (a subset of the vertices of) $B_0$. 

We observe now that the neighbors of all vertices in $B_0$ are only vertices in $A_0$ and not in $L \setminus A_0$. This means that any maximum matching matches all vertices of $A_0$ (otherwise, we could increase the size of the matching by matching all vertices in $A_0$). Thus, $S \subseteq I \cap I^* \subseteq I^*$.
\end{proof}

Putting everything together, and by utilizing the $\widetilde{O}(\Delta / \log \Delta)$-approximation algorithm of Halld{\'{o}}rsson~\cite{DBLP:journals/jgaa/Halldorsson00} or Halperin~\cite{DBLP:journals/siamcomp/Halperin02} as a black-box, it is easy to  prove the following theorem.
\begin{theorem}
Algorithm~\ref{alg:o(delta)-alg} correctly solves $\left\lceil \sqrt{2\Delta\alpha} \right \rceil$-stable instances in polynomial time. In particular, there is an algorithm that solves $\widetilde{O}(\Delta/\sqrt{\log \Delta})$-stable instances.
\end{theorem}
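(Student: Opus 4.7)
The plan is to prove correctness by induction on $|V|$ and then verify polynomial runtime, pulling together the four preceding lemmas. For the base case, if $w(V) \leq \gamma$, then Lemma~\ref{lemma:small-weight-is-empty} (together with the standing assumption that integer weights satisfy $w_u \geq 1$) gives $E = \emptyset$, so $V$ itself is the unique maximum independent set and Step 1 returns the correct answer.

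For the inductive step, let $I^*$ be the unique maximum independent set of the $\gamma$-stable instance $G$. The approximation algorithm $\mathcal{A}$ returns some feasible $I$, and I would invoke the \texttt{PURIFY} lemma to conclude $\emptyset \neq S \subseteq I^*$ (the case $I = I^*$ still fits within the stated bounds since Property (1) is proven via a weight inequality that does not require $I \neq I^*$, and then $S \subseteq I \cap I^* = I^*$ trivially). Next, I would apply Lemma~\ref{lem:delete-points-inside} iteratively to the vertices of $S$: removing $v_1 \in S$ together with $N(v_1)$ yields a $\gamma$-stable instance whose unique optimum is $I^* \setminus \{v_1\}$, which still contains $S \setminus \{v_1\}$; continuing $|S|$ times shows that $\widetilde{G} = G[V \setminus (S \cup N(S))]$ is $\gamma$-stable with unique optimum $I^* \setminus S$. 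The inductive hypothesis then delivers $I^* \setminus S$ from the recursive call, and the union in Step 5 produces $I^*$.

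For runtime, the key observation is that the recursion has depth at most $n$, because $|S| \geq 1$ strips off at least one vertex per call. Each recursive call invokes the black-box algorithm $\mathcal{A}$ (polynomial by assumption) and the \texttt{PURIFY} subroutine, whose bottleneck is a maximum bipartite matching on a graph with $O(w(V)) = \mathrm{poly}(n)$ vertices (here we crucially use the assumption that the weights are polynomially bounded integers). Hence the total running time is polynomial.

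To obtain the headline corollary, instantiate $\mathcal{A}$ with the $\widetilde{O}(\Delta/\log \Delta)$-approximation of Halld\'orsson~\cite{DBLP:journals/jgaa/Halldorsson00} or Halperin~\cite{DBLP:journals/siamcomp/Halperin02}; plugging $\alpha = \widetilde{O}(\Delta/\log \Delta)$ into $\gamma = \lceil \sqrt{2 \Delta \alpha} \rceil$ yields $\gamma = \widetilde{O}(\Delta/\sqrt{\log \Delta})$, as claimed. The main conceptual step is the inductive reduction via \texttt{PURIFY} plus Lemma~\ref{lem:delete-points-inside}; I do not foresee a real obstacle since all of the required structural statements have already been established, and the argument is essentially a clean assembly of those pieces.
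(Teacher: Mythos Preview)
The proposal is correct and follows exactly the approach the paper intends: the paper presents the theorem immediately after the supporting lemmas with the phrase ``Putting everything together \ldots it is easy to prove,'' and your argument is precisely that assembly---induct on $|V|$, use Lemma~\ref{lemma:small-weight-is-empty} for the base case, the \texttt{PURIFY} lemma to extract a nonempty $S \subseteq I^*$, iterate Lemma~\ref{lem:delete-points-inside} over the vertices of $S$ to pass to the $\gamma$-stable subinstance, and bound the recursion depth by $n$. Your handling of the edge case $I = I^*$ (not covered by the hypothesis of the \texttt{PURIFY} lemma as stated) is also correct.
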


\begin{comment}
\begin{proof}

%\textcolor{red}{HAS TO BE FIXED -- TOO INFORMAL!! --}
From the above discussion, it is easy to see that at each step, the algorithm only picks vertices that belong to the optimal independent set. Since at each iteration the algorithm picks at least  one such vertex, it is clear that the algorithm will do at most $n$ iterations.
\end{proof}
%\textcolor{red}{++ADD COROLLARY THAT COMBINES ABOVE THEOREM WITH EXISTING APPROX ALGORITHMS}
\end{comment}

%%%%%%%%%%%%%%%%%%%%%%%%%%%%%%%%%%%%%%%%%%%%%%%%%%%%%%%%%%%%%%%%%%%%%%%%%%%%%%%%%%%%%%%%%%%%%%%
\subsection{Robust algorithms for $(1+\varepsilon)$-stable instances of \MIS{} on planar graphs}

In this section, we design a robust algorithm for $(1+\varepsilon)$-stable instances of \MIS{} on planar graphs. Theorem~\ref{thm:colorable-robust} already implies a robust algorithm for $3$-stable instances of planar \MIS{}, but we will use the Sherali-Adams hierarchy (denoted as SA from now on) to reduce this threshold down to $1 + \varepsilon$, for any fixed $\varepsilon > 0$. In particular, we show that $O(1/\varepsilon)$ rounds of SA suffice to optimally solve $(1+\varepsilon)$-stable planar instances. We will not introduce the SA hierarchy formally, and we refer the reader to the many available surveys about it (see, e.g.,~\cite{Chlamtac2012}). The $t$-th level of SA for \MIS{} has a variable $Y_S$ for every subset $S \subseteq V$ of size at most $|S| \leq t + 1$, whose intended value is $Y_S = \prod_{u \in S} x_u$, where $x_u$ is the indicator of whether $u$ belongs to the independent set. The relaxation has size $n^{O(t)}$, and thus can be solved in time $n^{O(t)}$. For completeness, we give the relaxation in Figure~\ref{fig:SA}.

\begin{figure}
\begin{align*}
    \max: &\quad \sum_{u \in V} w_u Y_{\{u\}} \\
    \textrm{s.t.:}  &\quad \sum_{T' \subseteq T} (-1)^{|T'|} \cdot \left(Y_{S \cup T' \cup \{u\}} + Y_{S \cup T' \cup \{v\}} - Y_{S \cup T'} \right) \leq 0, && \forall (u,v) \in E, |S| + |T| \leq t,\\
                    &\quad 0 \leq \sum_{T' \subseteq T} (-1)^{|T'|} \cdot Y_{S \cup T' \cup \{u\}} \leq \sum_{T' \subseteq T} (-1)^{|T'|} \cdot Y_{S \cup T'}, && \forall u \in V, |S| + |T| \leq t,\\
                    &\quad Y_{\emptyset} = 1,\\
                    &\quad Y_{S} \in [0,1], && \forall S \subseteq V, |S| \leq t + 1.
\end{align*}
\caption{The Sherali-Adams relaxation for Independent Set.}
\label{fig:SA}
\end{figure}

Our starting point is the work of Magen and Moharrami~\cite{DBLP:conf/approx/MagenM09}, which gives a SA-based PTAS for \MIS{} on planar graphs, inspired by Baker's technique~\cite{DBLP:journals/jacm/Baker94}. In particular, \cite{DBLP:conf/approx/MagenM09} gives a rounding scheme for the $O(t)$-th round of SA that returns a $(1 + O(1/t))$-approximation. In this section, we slightly modify and analyze their rounding scheme, and prove that it satisfies the conditions of Theorem~\ref{thm:MMV}. For that, we need a theorem of Bienstock and Ozbay~\cite{DBLP:journals/disopt/BienstockO04}. For any subgraph $H$ of a graph $G = (V,E)$, let $V(H)$ denote the set of vertices contained in $H$.
\begin{theorem}[\cite{DBLP:journals/disopt/BienstockO04}]\label{thm:SA-treewidth}
Let $t \geq 1$ and $Y$ be a feasible vector for the $t$-th level SA relaxation of the standard Independent Set LP for a graph $G$. Then, for any subgraph $H$ of $G$ of treewidth at most $t$, the vector $(Y_{\{u\}})_{u \in V(H)}$ is a convex combination of independent sets of $H$.
\end{theorem}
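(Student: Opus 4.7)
The plan is to use the standard ``moment'' interpretation of the Sherali--Adams hierarchy combined with a junction-tree-style gluing argument. Fix a tree decomposition $(T, \{B_i\}_{i \in V(T)})$ of $H$ of width at most $t$, so every bag satisfies $|B_i| \leq t+1$. The first step is to extract, from the level-$t$ SA vector $Y$, a local probability distribution $\mu_S$ on $\{0,1\}^S$ for every $S \subseteq V$ with $|S| \leq t+1$; the inclusion--exclusion SA constraints (the $(-1)^{|T'|}$ alternating sums in the statement of the relaxation) are precisely designed to make this well-defined. Explicitly, one declares the probability that the random $\{0,1\}^S$-assignment drawn from $\mu_S$ equals $1$ on $R \subseteq S$ and $0$ on $S \setminus R$ to be $\sum_{T' \subseteq S \setminus R} (-1)^{|T'|} \, Y_{R \cup T'}$, and the SA constraints ensure these quantities are nonnegative and sum to $1$ over $R$.

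Two properties of these local distributions must then be verified. First, \textbf{feasibility inside each bag}: for every bag $B_i$ and every edge $(u,v) \in E(H) \subseteq E(G)$ with $\{u,v\} \subseteq B_i$, the edge SA constraint forces $\mu_{B_i}$ to place zero mass on any assignment that sets both $x_u = x_v = 1$; hence $\mu_{B_i}$ is supported on independent sets of $H[B_i]$. Second, \textbf{pairwise consistency}: for any bags $B_i, B_j$, both marginals of $\mu_{B_i}$ and $\mu_{B_j}$ restricted to $B_i \cap B_j$ equal the distribution read off the lower-order moments $\{Y_R : R \subseteq B_i \cap B_j\}$, so they agree. With these two ingredients in hand, I would invoke the junction tree theorem (or reprove it by a short induction on $|V(T)|$): root $T$ arbitrarily and iteratively glue bags onto the current joint distribution by conditioning on the separator with the parent bag. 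The running-intersection property of tree decompositions, together with the consistency above, makes these conditional distributions compose into a well-defined probability distribution $\mu$ on $\{0,1\}^{V(H)}$ whose marginal on every $B_i$ equals $\mu_{B_i}$.

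It then remains to read off the conclusion. Every edge of $H$ lies inside some bag by the definition of a tree decomposition, so $\mu$ is supported on independent sets of $H$; and the singleton marginal $\E_{\mu}[x_u]$ equals $Y_{\{u\}}$ for each $u \in V(H)$ by consistency with the level-$1$ moments. Therefore the vector $(Y_{\{u\}})_{u \in V(H)}$ is the claimed convex combination of indicators of independent sets. The main obstacle is the first step: translating the SA inequalities into the statement ``$\mu_S$ is a genuine probability distribution'' requires careful M\"obius/inclusion--exclusion bookkeeping, and one must check consistency on overlapping supports of size up to $t+1$ rather than only on small singleton or edge moments. Once the measure-theoretic interpretation is set up, the gluing step is entirely standard graphical-models folklore and the rest is combinatorial.
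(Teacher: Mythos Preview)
The paper does not prove this theorem at all: it is quoted from Bienstock and Ozbay~\cite{DBLP:journals/disopt/BienstockO04} and used as a black box, so there is no ``paper's own proof'' to compare against. Your proposal is the standard (and correct) argument for this result: interpret the level-$t$ SA variables as moments to obtain a genuine probability distribution on $\{0,1\}^{B_i}$ for each bag $B_i$ of a width-$\leq t$ tree decomposition, observe that the lifted edge constraints force each such distribution to be supported on independent sets of $H[B_i]$, note that overlapping bags have consistent marginals because those marginals are determined by the shared lower-order $Y$-variables, and then glue via the running-intersection property of the tree decomposition (the junction-tree argument) to a global distribution on $\{0,1\}^{V(H)}$ supported on independent sets of $H$ with singleton marginals $Y_{\{u\}}$. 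Each of these steps is sound as you describe it; in particular your identification of the M\"obius/inclusion--exclusion bookkeeping as the only nontrivial point is accurate, and the SA constraints listed in Figure~\ref{fig:SA} are exactly the nonnegativity conditions needed.
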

The above theorem implies that the $t$-th level SA polytope is equal to the convex hull of all independent sets of the graph, when the graph has treewidth at most $t$.

\paragraph{The rounding scheme of Magen and Moharrami~\cite{DBLP:conf/approx/MagenM09}.}
Let $G = (V,E,w)$ be a planar graph and $\{Y_S\}_{S \subseteq V: |S| \leq t + 1}$ be an optimal $t$-th level solution of SA. We denote $Y_{\{u\}}$ as $y_u$, for any $u \in V$. We first fix a planar embedding of $G$. $V$ can then be naturally partitioned into sets $V_0, V_1, ..., V_L$, for some $L \in \{0, ..., n-1\}$, where $V_0$ is the set of vertices in the boundary of the outerface, $V_1$ is the set of vertices in the boundary of the outerface after $V_0$ is removed, and so on. Note that for any edge $(u,v) \in E$, we have $u \in V_i$ and $v \in V_j$ with $|i - j| \leq 1$. We will assume that $L \geq 4$, since, otherwise, the graph is at most $4$-outerplanar and the problem can then be solved optimally~\cite{DBLP:journals/jacm/Baker94}.

Following~\cite{DBLP:journals/jacm/Baker94}, we fix a parameter $k \in \{1, ...,L\}$, and for every $i \in \{0, ..., k-1\}$, we define $B(i) = \bigcup_{j \equiv i (\textrm{mod } k)} V_j$. We now pick an index $j \in \{0, ..., k-1\}$ uniformly at random. Let $G_0 = G[V_0 \cup V_1 ... \cup V_j]$, and for $i \geq 1$, $G_i = G[\bigcup_{q = (i-1)k + j}^{ik + j} V_q]$, where for a subset $X \subseteq V$, $G[X]$ is the induced subgraph on $X$. Observe that every edge and vertex of $G$ appears in one or two of the subgraphs $\{G_i\}$, and every vertex $u \in V \setminus B(j)$ appears in exactly one $G_i$.

Magen and Moharrami observe that for every subgraph $G_i = (V(G_i), E(G_i))$, the set of vectors $\{Y_{S}\}_{S \subseteq V(G_i): |S| \leq t + 1}$ is a feasible solution for the $t$-th level SA relaxation of the graph $G_i$. This is easy to see, as the LP associated with $G_i$ is weaker than the LP associated with $G$ (on all common variables), since $G_i$ is a subgraph of $G$, and this extends to SA as well. We need one more observation: a $k$-outerplanar graph has treewidth at most $3k-1$ (see~\cite{BODLAENDER19981}). By construction, each $G_i$ is a $(k+1)$-outerplanar graph. Thus, by setting $t = 3k+2$, Theorem~\ref{thm:SA-treewidth} implies that the vector $\{y_u\}_{u \in V(G_i)}$ can be written as a convex combination of independent sets of $G_i$. Let $p_i$ be the corresponding distribution of independent sets of $G_i$, implied by $\{y_u\}_{u \in V(G_i)}$.

We now consider the following rounding scheme. For each $G_i$, we (independently) sample an independent set $S_i$ of $G_i$ according to $p_i$. Each vertex $u \in V \setminus B(j)$ belongs to exactly one $G_i$ and is included in the final independent set $S$ if $u \in S_i$. A vertex $u \in B(j)$ might belong to two different graphs $G_i, G_{i+1}$, and is included in $S$ only if $u \in S_i \cap S_{i+1}$. The algorithm then returns $S$.

Before analyzing the algorithm, we note that standard tree-decomposition based arguments show that the rounding is constructive (i.e.~polynomial-time; this fact is not needed for the algorithm for stable instances of planar \MIS{}, but will be used when designing certified algorithms).
\begin{theorem}\label{thm:planar_rounding}
The above randomized rounding scheme always returns a feasible independent set $S$, such that for every vertex $u \in V$,
\begin{enumerate}
    \item $\Pr[u \in S] \geq \frac{k-1}{k} \cdot y_u + \frac{1}{k} \cdot y_u^2$, 
    \item $\Pr[u \notin S] \leq \left(1 + \frac{1}{k} \right) \cdot (1 - y_u)$.
\end{enumerate}
\end{theorem}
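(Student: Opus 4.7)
The plan is to verify the three requirements of an $(\alpha,\beta)$-rounding one at a time: feasibility of $S$, a lower bound on $\Pr[u \in S]$, and an upper bound on $\Pr[u \notin S]$. The key structural facts we rely on are (i) every edge and every vertex of $G$ is covered by the subgraphs $\{G_i\}$ in a controlled way, (ii) by Theorem~\ref{thm:SA-treewidth} applied to each $(k+1)$-outerplanar $G_i$ with $t = 3k+2$, the distribution $p_i$ is a distribution over independent sets of $G_i$ whose marginals on singletons are exactly $y_u$ for $u \in V(G_i)$, and (iii) the $S_i$'s are sampled independently across $i$.

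For feasibility, I would argue that every edge $(u,v) \in E$ is entirely contained in at least one subgraph $G_i$: since edges only connect consecutive layers $V_\ell, V_{\ell+1}$, and each $G_i$ is the induced subgraph on a contiguous block of $k+1$ consecutive layers, any edge lies inside some block. In the construction, a vertex is placed into the final set $S$ only if it is chosen in every $G_i$ that contains it, so if $(u,v)$ were both in $S$, both would lie in $S_i$ for some common $i$, contradicting independence of $S_i$ in $G_i$. Hence $S$ is independent in $G$.

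For the inclusion probabilities, I would classify $u$ by whether it falls on the boundary set $B(j)$. Fix $u \in V_\ell$. Since $j$ is uniform on $\{0,\ldots,k-1\}$, $\Pr[u \in B(j)] = \Pr[j \equiv \ell \pmod k] = 1/k$. Conditioned on $u \notin B(j)$, the vertex $u$ belongs to exactly one $G_i$, and $\Pr[u \in S \mid u \notin B(j)] = \Pr[u \in S_i] = y_u$ by the marginal property of $p_i$. Conditioned on $u \in B(j)$, the vertex $u$ lies in exactly two subgraphs $G_i$ and $G_{i+1}$, and by independence of the samples, $\Pr[u \in S \mid u \in B(j)] = \Pr[u \in S_i]\cdot \Pr[u \in S_{i+1}] = y_u^2$. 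Combining,
\[
\Pr[u \in S] \;=\; \tfrac{k-1}{k}\, y_u \;+\; \tfrac{1}{k}\, y_u^2,
\]
which is exactly property (1), with equality. Property (2) then follows by taking complements and factoring:
\[
\Pr[u \notin S] \;=\; (1-y_u) + \tfrac{1}{k} y_u(1-y_u) \;=\; (1-y_u)\!\left(1 + \tfrac{y_u}{k}\right) \;\le\; \left(1+\tfrac{1}{k}\right)(1-y_u),
\]
using $y_u \in [0,1]$.

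The only mild obstacle is the bookkeeping in step (ii): one must check that the restriction $\{Y_S\}_{S \subseteq V(G_i),\, |S|\le t+1}$ is feasible for the $t$-th level Sherali--Adams relaxation of $G_i$ (so Theorem~\ref{thm:SA-treewidth} applies and $p_i$ has the claimed singleton marginals), and that the layer decomposition places each non-boundary vertex in exactly one $G_i$ and each boundary vertex in exactly two. Both are straightforward from the construction recalled right before the theorem statement, so the argument is essentially a clean case split plus Theorem~\ref{thm:SA-treewidth}.
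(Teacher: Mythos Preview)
Your proposal is correct and follows essentially the same approach as the paper: feasibility from the fact that every edge is contained in some $G_i$, then a case split on whether $u\in B(j)$ using the singleton marginals of the $p_i$'s and the independence of the samples $S_i$. One small caveat: you assert that a boundary vertex lies in \emph{exactly} two subgraphs and hence claim equality $\Pr[u\in S]=\tfrac{k-1}{k}y_u+\tfrac{1}{k}y_u^2$, whereas the paper (and the preceding description) only says a vertex in $B(j)$ \emph{might} belong to two $G_i$'s and accordingly writes $\Pr[u\in S\mid u\in B(j)]\ge y_u^2$; in the edge case where a boundary vertex lies in a single $G_i$ the conditional probability is $y_u\ge y_u^2$, so the theorem's inequality still holds and your argument goes through unchanged.
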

\begin{proof}
It is easy to see that $S$ is always a feasible independent set. We now compute the corresponding probabilities. Since the marginal probability of $p_i$ on a vertex $u \in G_i$ is $y_u$, for any fixed $j$, for every vertex $u \in V \setminus B(j)$, we have $\Pr[u \in S] = y_u$, and for every vertex $u \in B(j)$, we have $\Pr[u \in S] \geq y_u^2$. Since $j$ is picked uniformly at random, each vertex $u \in V$ belongs to $B(j)$ with probability exactly equal to $\frac{1}{k}$. Thus, we conclude that for every vertex $u \in V$, we have $\Pr[u \in S] \geq \frac{k-1}{k} \cdot y_u + \frac{1}{k} \cdot y_u^2$, and $\Pr[u \notin S] \leq 1 - \left(\frac{k-1}{k} \cdot y_u + \frac{1}{k} \cdot y_u^2 \right) = 1 - y_u + \frac{y_u}{k} \cdot (1 - y_u) \leq \left(1 + \frac{1}{k} \right) \cdot (1 - y_u)$.
\end{proof}
The above theorem implies that the rounding scheme is a $\left(\frac{k}{k-1}, \frac{k+1}{k} \right)$-rounding. The following theorem now is a direct consequence of Theorems~\ref{thm:MMV} and~\ref{thm:planar_rounding}. %Its proof is given in Appendix~\ref{appendix:proofs-stable}.
\begin{theorem}\label{thm:planar-stable-SA}
For every $\varepsilon > 0$, the SA relaxation of $\left(3\left\lceil \frac{2}{\varepsilon} \right\rceil + 5 \right) = O(1 / \varepsilon)$ rounds is integral for $(1 + \varepsilon)$-stable instances of \MIS{} on planar graphs.
\end{theorem}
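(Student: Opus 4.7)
The plan is to combine the two ingredients already set up in the excerpt: Theorem~\ref{thm:MMV}, which lifts an $(\alpha,\beta)$-rounding into an integrality guarantee on $(\alpha\beta)$-stable instances, and Theorem~\ref{thm:planar_rounding}, which provides exactly such a rounding for the Sherali--Adams relaxation of planar \MIS{} at level $t=3k+2$. The entire argument is then a one-line parameter choice followed by a short arithmetic check, so there is no real obstacle; the main content has already been done.

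Concretely, I would first read off from Theorem~\ref{thm:planar_rounding} that the Magen--Moharrami rounding at Sherali--Adams level $t=3k+2$ is an $(\alpha,\beta)$-rounding with $\alpha = \frac{k}{k-1}$ and $\beta = \frac{k+1}{k}$, giving
\[
\alpha\beta \;=\; \frac{k}{k-1}\cdot\frac{k+1}{k} \;=\; \frac{k+1}{k-1} \;=\; 1 + \frac{2}{k-1}.
\]
Applying Theorem~\ref{thm:MMV} then immediately yields integrality of the $t$-th level relaxation on all $\bigl(1 + \tfrac{2}{k-1}\bigr)$-stable planar instances.

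Second, I would choose $k$ to make the stability threshold at most $1+\varepsilon$. The inequality $1 + \tfrac{2}{k-1} \le 1 + \varepsilon$ is equivalent to $k \ge 1 + \tfrac{2}{\varepsilon}$, so I take $k := \left\lceil \tfrac{2}{\varepsilon} \right\rceil + 1$. Plugging into $t = 3k+2$ gives a level of $3\left\lceil \tfrac{2}{\varepsilon}\right\rceil + 5 = O(1/\varepsilon)$, matching the statement. I would also note in passing that the hypothesis $L \ge 4$ used in the construction of the rounding is not restrictive, since $4$-outerplanar instances are solvable exactly by Baker's algorithm~\cite{DBLP:journals/jacm/Baker94}, and that $k \ge 2$ is automatic from the choice above, so the $(\alpha,\beta)$-rounding is well-defined.

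The only subtle point worth double-checking is that Theorem~\ref{thm:MMV} is applied correctly: it requires the rounding scheme only in the analysis, not as part of the algorithm, which is exactly the setup here since the SA relaxation itself is what we solve, and the Magen--Moharrami sampler serves purely as an analytical certificate of integrality. Since Theorem~\ref{thm:planar_rounding} is stated for an arbitrary feasible SA solution (in particular the optimal one), this is fine, and the conclusion follows.
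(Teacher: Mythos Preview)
Your proposal is correct and follows essentially the same approach as the paper: read off the $(\alpha,\beta)$-rounding parameters $\alpha=\tfrac{k}{k-1}$, $\beta=\tfrac{k+1}{k}$ from Theorem~\ref{thm:planar_rounding}, apply Theorem~\ref{thm:MMV} to obtain integrality for $(1+\tfrac{2}{k-1})$-stable instances, and then set $k=\lceil 2/\varepsilon\rceil+1$ so that $t=3k+2=3\lceil 2/\varepsilon\rceil+5$. The paper's proof is exactly this calculation, and your additional remarks on $L\ge 4$ and $k\ge 2$ are welcome sanity checks.
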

\begin{proof}
For any given $k \geq 2$, by Theorem~\ref{thm:planar_rounding}, the rounding scheme always returns a feasible independent set $S$ of $G$ that satisfies $\Pr[u \in S] \geq \frac{k-1}{k} \cdot y_u$ and $\Pr[u \notin S] \leq \left(1 + \frac{1}{k} \right) \cdot (1 - y_u)$ for every vertex $u \in V$. By Theorem~\ref{thm:MMV}, this means that $\{y_u\}_{u \in V}$ must be integral for $(1 + \frac{2}{k-1})$-stable instances. For any fixed $\varepsilon > 0$, by setting $k = \left\lceil \frac{2}{\varepsilon} \right\rceil + 1$, we get that $3\left\lceil \frac{2}{\varepsilon} \right\rceil + 5 = O(1 / \varepsilon)$ rounds of Sherali-Adams return an integral solution for $(1+\varepsilon)$-stable instances of \MIS{} on planar graphs.
\end{proof}

\section{Stable instances of \MIS{} on general graphs}\label{sec:general-graphs}

In this section, we study stable instances of general graphs. We present a strong lower bound on any algorithm (not necessarily robust) that solves $o(\sqrt{n})$-stable instances. We complement this lower bound with an algorithm that solves $(\varepsilon n)$-stable instances in time $n^{O(1/\varepsilon)}$.

\subsection{Computational hardness of stable instances of \MIS{}}\label{sec:hardness}

We show that for general graphs it is unlikely to obtain efficient algorithms for solving $\gamma$-stable instances for small values of $\gamma$. Our hardness reduction is based on the \emph{planted clique} conjecture~\cite{feldman2013statistical, meka2015sum}, which states that finding $o(\sqrt{n})$ sized planted independent sets/cliques in the Erd\H{o}s-R\'enyi graph $G\left(n,\frac 1 2 \right)$ is computationally hard. Let $G \left(n,\frac 1 2, k \right)$ denote the distribution over graphs obtained by sampling a graph from $G \left(n,\frac 1 2 \right)$ and then picking a uniformly random subset of $k$ vertices and deleting all edges among them. The conjecture is formally stated below.

\begin{conjecture}
Let $0 < \varepsilon < \frac{1}{2}$ be a constant. Suppose that an algorithm $\mathcal{A}$ receives an input graph $G$ that is either sampled from the ensemble $G\left(n, \frac{1}{2} \right)$ or $G \left(n, \frac{1}{2}, n^{\frac 1 2 - \varepsilon} \right)$. Then, no $\mathcal{A}$ that runs in time polynomial in $n$ can decide, with probability at least $\frac 4 5$, which ensemble $G$ was sampled from.
\end{conjecture}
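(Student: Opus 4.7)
The final displayed statement is the planted clique conjecture itself, presented in a \texttt{conjecture} environment rather than as a theorem. It is a long-standing open problem and not something provable by currently known techniques; the paper uses it axiomatically, citing Feldman et al.\ and Meka et al. Any ``proof proposal'' therefore has to be read as a sketch of the evidence one would point to, or of conditional reductions one might attempt, rather than as a direct derivation.

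My first step would be to collect the three well-known pillars of evidence that support the conjecture, since these are what a serious justification write-up would marshal. \emph{Spectral threshold:} the leading eigenvalue of the centered adjacency matrix of a sample from $G(n,1/2)$ is of order $\sqrt n$, and spectral algorithms detect planted cliques of size $\Omega(\sqrt n)$ but nothing smaller, pinpointing $\sqrt n$ as a natural algorithmic threshold. \emph{Sum-of-Squares lower bounds:} the results of Barak et al.\ and Hopkins--Kothari--Potechin--Raghavendra--Schramm--Steurer show that $n^{\Omega(1)}$ rounds of SoS cannot distinguish the two ensembles when the planted set has size $n^{1/2-\varepsilon}$. \emph{Statistical query lower bounds:} the work of Feldman et al.\ (cited in the paper) recovers the same $\sqrt n$ threshold in that restricted computational model. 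Together these three independent barriers at exactly the conjectured threshold form the standard case for the hypothesis.

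If one wanted to turn the conjecture into an actual conditional theorem, the natural approach would be a reduction from some other canonical hardness assumption --- random $k$-SAT refutation at high clause density, Learning With Errors, or a lattice problem such as approximate shortest vector. No such reduction is currently known. Attempts to base planted clique on $\P \neq \NP$ or on the Exponential Time Hypothesis have also failed. This is the principal obstacle, and in fact the reason planted clique has come to be treated as a basic hardness primitive in its own right rather than as a corollary of a more standard assumption.

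My plan would therefore match what the paper does: state the conjecture axiomatically, point to the bodies of work above as the justification, and in the subsequent sections use it as a black-box to derive the $o(\sqrt n)$-stability hardness of \MIS{}. The intuitive content, already flagged in the commented paragraph earlier in the excerpt, is that a graph drawn from $G(n,1/2)$ with a planted independent set of size $\sqrt n$ is $\sqrt n$-stable with high probability, so any polynomial-time algorithm for $o(\sqrt n)$-stable \MIS{} would directly violate the conjecture.
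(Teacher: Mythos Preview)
Your reading is correct: the displayed statement is the planted clique conjecture, not a theorem, and the paper does exactly what you describe --- it states the conjecture axiomatically with citations and then uses it as a black-box hypothesis to derive Theorem~\ref{thm:planted-clique-lower-bound}. Your write-up in fact goes further than the paper by cataloguing the spectral, SoS, and statistical-query evidence, none of which the paper discusses; but the essential point, that no proof is given or expected, matches the paper precisely.
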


Our lower bound follows from the observation that planted random instances are stable up to high values of $\gamma$, and this suffices to imply our main result.
\begin{theorem}\label{thm:planted-clique-lower-bound}
Let $\varepsilon > 0$ be a constant and consider a random graph $G$ on $n$ vertices generated by first picking edges according to the Erd\H{o}s-R\'enyi model $G(n,\frac 1 2)$, followed by choosing a set $I$ of vertices of size $n^{\frac 1 2 -\varepsilon}$, uniformly at random, and deleting all edges inside $I$. Then, with probability $1-o(1)$, the resulting instance is a $\Theta(n^{\frac 1 2 -\varepsilon}/ \log n)$-stable instance of \MIS{}.
\end{theorem}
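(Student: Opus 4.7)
The plan is to verify the equivalent stability inequality from Definition~\ref{def:stability} directly on the unweighted instance $G$ (all vertex weights equal to $1$): I will show that with probability $1-o(1)$, for every independent set $S \neq I$ in $G$, the inequality $|I \setminus S| > \gamma \cdot |S \setminus I|$ holds with $\gamma = c \cdot n^{1/2-\varepsilon}/\log n$ for a small absolute constant $c>0$ (note that stability then automatically implies uniqueness of $I$ as the maximum independent set). Write $a := |I \setminus S|$ and $b := |S \setminus I|$. If $b = 0$ then $S \subsetneq I$, so $a \geq 1$ and the inequality holds trivially; henceforth assume $b \geq 1$.

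The key structural observation is that for $S$ to be independent, every vertex of $T := S \setminus I$ must be non-adjacent in $G$ to every vertex of $S \cap I$; hence $S \cap I \subseteq N_I(T) := \{u \in I : u \not\sim v \text{ for all } v \in T\}$. A stability violation $a \leq \gamma b$ is equivalent to $|S \cap I| \geq |I| - \gamma b$, so it suffices to show that whp no set $T \subseteq V \setminus I$ with $|T| = b \geq 1$ satisfies $|N_I(T)| \geq |I| - \gamma b$.

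Fix $b$ and $T$. Since the planting only deletes edges \emph{inside} $I$, the edges between $I$ and $V \setminus I$ remain i.i.d.\ $\mathrm{Bernoulli}(1/2)$, so each vertex of $I$ is independently a common non-neighbor of $T$ with probability $2^{-b}$, giving $|N_I(T)| \sim \mathrm{Binomial}(|I|, 2^{-b})$ with mean $|I|/2^b \leq |I|/2$. Restricting to $1 \leq b \leq c_0 \log n$ for an absolute constant $c_0$ (say $c_0 = 3$) and choosing $c$ with $c\, c_0 \leq 1/4$, we get $|I| - \gamma b \geq 3|I|/4$, so a standard Chernoff bound yields $\Pr[|N_I(T)| \geq 3|I|/4] \leq \exp(-\Omega(|I|))$. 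A union bound over the at most $\binom{n}{b} \leq n^b$ choices of $T$ and over $b \in \{1, \ldots, c_0 \log n\}$ gives total failure probability at most $2^{O(\log^2 n)} \cdot \exp(-\Omega(|I|)) = \exp(-\Omega(n^{1/2-\varepsilon}))$, which is $o(1)$ since $n^{1/2-\varepsilon} \gg (\log n)^2$.

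The remaining range $b > c_0 \log n$ is handled by noting that $G[V \setminus I]$ is distributed as $G(n-|I|, 1/2)$, whose maximum independent set has size $(2+o(1))\log_2 n$ whp by the classical Erd\H{o}s--R\'enyi analysis; taking $c_0 \geq 3$ makes this case vacuous since no such $T$ can even be independent in $G[V \setminus I]$. The main obstacle is the quantitative balance in the Chernoff/union-bound step: the Chernoff decay $\exp(-\Omega(|I|))$ must swallow the $\exp(O(\log^2 n))$ cost of union-bounding over all $T$ of size up to $\Theta(\log n)$, and it is precisely this trade-off (the exponent of $2^{-b}$ growing to kill the mean of $|N_I(T)|$ as $b$ grows, while $\gamma b$ grows linearly in $b$) that pins the stability threshold at $\Theta(|I|/\log n) = \Theta(n^{1/2-\varepsilon}/\log n)$, matching the statement.
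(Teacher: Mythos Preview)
Your proof is correct, but it takes a more elaborate route than the paper's. The paper argues via two high-probability events: (i) every single vertex $u \in V \setminus I$ satisfies $|N(u) \cap I| \geq \tfrac{1}{2}|I|(1-o(1))$, and (ii) the maximum independent set in $G[V\setminus I]$ has size at most $(2+o(1))\log n$. Given these, for any independent $S \neq I$ with $S\setminus I \neq \emptyset$, picking any one $u \in S\setminus I$ already forces $N(u)\cap I \subseteq I\setminus S$, so $|I\setminus S| \geq \tfrac12|I|(1-o(1))$; combined with $|S\setminus I|\leq (2+o(1))\log n$ from~(ii), the ratio is $\Theta(|I|/\log n)$ immediately. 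Thus the paper needs only a union bound over the $n-|I|$ vertices for~(i).

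You use~(ii) identically (your $b > c_0\log n$ case), but in place of~(i) you establish a stronger statement: for \emph{every} subset $T \subseteq V\setminus I$ of size $b \leq c_0\log n$, the common non-neighborhood $N_I(T)$ in $I$ has fewer than $3|I|/4$ vertices. This requires a union bound of size $\exp(O(\log^2 n))$ rather than $n$, which still goes through because $|I| = n^{1/2-\varepsilon} \gg (\log n)^2$. However, the extra generality buys nothing here: the $b=1$ case of your bound is precisely the complement of the paper's event~(i), and already yields the same lower bound on $|I\setminus S|$. So your argument is a valid but heavier version of the paper's key step.
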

\begin{proof}
Let $G = (V,E)$ be the resulting graph (we assume that all weights are set to 1). We start by stating two well-known properties of the graph $G$ that hold with probability $1-o(1)$~(\cite{alon1998finding}).
\begin{enumerate}
    \item For each vertex $u \in V \setminus I$, we have $|N(u) \cap I| \geq \frac 1 2 \cdot n^{\frac 1 2 -\epsilon} \left(1 \pm o(1) \right)$.
    \item The size of the maximum independent set in the graph $G[V \setminus I]$ is at most  $\lceil 2(1 \pm o(1))\log n \rceil$.
\end{enumerate}

Consider any other independent set $S \neq I$. By Property $1$, we have that $|I \setminus S| \geq \frac 1 2 {n^{\frac 1 2 -\varepsilon}(1 - o(1))}$. By Property $2$, we must have that $|S \setminus I| \leq 2(1 \pm o(1))\log n$. Hence, $|S| < |I|$ and furthermore, $|I \setminus S| > \gamma \cdot |S \setminus I|$ for $\gamma = \frac{n^{\frac 1 2 -\varepsilon}}{4 \log n}$. We conclude that the instance is $\left(\frac{n^{\frac 1 2 -\varepsilon}}{4 \log n}\right)$-stable.
\end{proof}

%%%%%%%%%%%%%%%%%%%%%%%%%%
\subsection{An algorithm for $(\epsilon n)$-stable instances}

In this section, we use the algorithm for $k$-colorable graphs and the greedy algorithm as subroutines to solve $(\varepsilon n)$-stable instances on graphs of $n$ vertices, in time $n^{O(1/\varepsilon)}$. Thus, we will assume that $\varepsilon > 0$ is a fixed constant. For a graph $G$, let $\chi(G)$ be its chromatic number. We first state a well-known result about the chromatic number, known as the Welsh-Powell algorithm for coloring; for completeness, we also give its proof.

\begin{lemma}[Welsh-Powell coloring~\cite{10.1093/comjnl/10.1.85}]\label{lemma:welsh-powell}
Let $G(V,E)$ be a graph, where $n = |V|$, and let $d_1 \geq d_2 \geq ... \geq d_n$ be the sequence of its degrees in non-increasing order. Then, $\chi(G) \leq \max_i \min \{d_i + 1, i\}$.
\end{lemma}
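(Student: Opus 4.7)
The plan is to establish the bound via the standard greedy coloring argument applied in the order of non-increasing degree. First I would fix an ordering $v_1, v_2, \ldots, v_n$ of the vertices so that $\deg(v_i) = d_i$ for each $i$. Then I would run the following procedure: process the vertices in the order $v_1, v_2, \ldots, v_n$, and assign to $v_i$ the smallest positive integer (color) that does not appear among the colors already assigned to its already-processed neighbors. This procedure clearly produces a proper coloring of $G$, so it suffices to bound the largest color it uses.

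The key observation is simple. When we are about to color $v_i$, its already-colored neighbors form a subset of $\{v_1, \ldots, v_{i-1}\} \cap N(v_i)$. Hence the number of forbidden colors for $v_i$ is at most $\min\{i-1, \deg(v_i)\} = \min\{i-1, d_i\}$, because at most $i-1$ vertices have been processed so far and $v_i$ has exactly $d_i$ neighbors in total. Consequently the smallest color available for $v_i$ is at most $\min\{i-1, d_i\} + 1 = \min\{i, d_i + 1\}$.

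Taking the maximum over all $i$, the number of colors used by the greedy procedure is at most $\max_i \min\{d_i + 1, i\}$, which yields $\chi(G) \leq \max_i \min\{d_i + 1, i\}$, as desired. There is no real obstacle in this argument: it is a direct counting bound on the greedy coloring, and the only subtlety is to make sure one takes the right minimum (between the degree bound $d_i + 1$, which governs when $i$ is large, and the ordinal bound $i$, which governs when $i$ is small).
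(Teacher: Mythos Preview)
Your proof is correct and is essentially identical to the paper's own argument: both run greedy coloring in order of non-increasing degree and bound the color assigned to $v_i$ by $\min\{i, d_i+1\}$. Your version is slightly more explicit in deriving the bound $\min\{i-1, d_i\}$ on the number of forbidden colors, but the content is the same.
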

\begin{proof}
The lemma is based on a simple observation. We consider the following greedy algorithm for coloring. Suppose that $u_1, ..., u_n$ are the vertices of the graph, with corresponding degrees $d_1 \geq d_2 \geq ... \geq d_n$. The colors are represented with the numbers $\{1, ..., n\}$. The greedy coloring algorithm colors one vertex at a time, starting from $u_1$ and concluding with $u_n$, and for each such vertex $u_i$, it picks the ``smallest" available color. It is easy to see that for each vertex $u_i$, the color that the algorithm picks is at most ``$i$". Since the algorithm picks the smallest available color, and since the vertex $u_i$ has $d_i$ neighbors, we observe that the color picked will also be at most ``$d_i + 1$". Thus, the color of vertex $u_i$ is at most $\min\{d_i + 1, i\}$. It is easy to see now that when the algorithm terminates, it will have used at most $\max_i \min\{d_i + 1, i\}$ colors, and thus $\chi(G) \leq \max_i \min\{d_i + 1, i\}$.
\end{proof}

We now derive a useful lemma that is an implication of the Welsh-Powell coloring.
\begin{lemma}\label{lemma:colors-degrees}
Let $G=(V,E)$ be a graph, with $n = |V|$. Then, for any natural number $k \geq 1$, one of the following two properties is true:
\begin{enumerate}
    \item $\chi(G) \leq \left\lceil \frac{n}{k} \right\rceil$, or
    \item there are at least $\left\lceil \frac{n}{k} \right\rceil + 1$ vertices in $G$ whose degree is at least $\left\lceil \frac{n}{k} \right\rceil$.
\end{enumerate}
\end{lemma}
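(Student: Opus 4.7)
The plan is to do a short dichotomy on whether Property~(2) fails, and then invoke the Welsh--Powell bound of Lemma~\ref{lemma:welsh-powell}. Set $m := \lceil n/k \rceil$ for brevity. If Property~(2) holds we are done, so assume it fails: strictly fewer than $m+1$ vertices of $G$ have degree at least $m$, i.e.\ at most $m$ vertices have degree $\geq m$. The goal is then to establish Property~(1), namely $\chi(G) \leq m$.

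For the degenerate case $n \leq m$ I would simply note that $\chi(G) \leq n \leq m$ and conclude. Otherwise $n \geq m+1$, and I would let $d_1 \geq d_2 \geq \cdots \geq d_n$ be the degree sequence sorted in non-increasing order. The failure of Property~(2) then gives $d_{m+1} \leq m-1$, so in fact $d_i \leq m-1$ for every $i \geq m+1$. By Lemma~\ref{lemma:welsh-powell} we have $\chi(G) \leq \max_i \min\{d_i+1,\, i\}$. I would bound the maximum by splitting on whether $i \leq m$ or $i \geq m+1$: in the first range, $\min\{d_i+1, i\} \leq i \leq m$; in the second range, $\min\{d_i+1, i\} \leq d_i+1 \leq m$. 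Combining the two ranges gives $\chi(G) \leq m$, which is exactly Property~(1).

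There is essentially no substantive obstacle here; the statement is really a direct repackaging of Lemma~\ref{lemma:welsh-powell} under the hypothesis that high-degree vertices are scarce. The only point requiring a little care is the boundary case $n \leq m$, which must be handled separately so that the index $d_{m+1}$ used in the main argument is well-defined, along with keeping track of the ceilings consistently throughout.
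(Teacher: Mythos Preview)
Your proof is correct and takes essentially the same approach as the paper: both arguments split the index range at $m=\lceil n/k\rceil$ and apply the Welsh--Powell bound (Lemma~\ref{lemma:welsh-powell}) to conclude. The only differences are that the paper argues the contrapositive direction (assuming $\chi(G)>m$ and deriving Property~(2)) and does not explicitly separate out the boundary case $n\le m$, which you handle more carefully.
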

\begin{proof}
Suppose that $\chi(G) > \left\lceil \frac{n}{k} \right\rceil$. Let $u_1, ..., u_n$ be the vertices of $G$, with corresponding degrees $d_1 \geq d_2 \geq ... \geq d_n$. It is easy to see that $\max_{1 \leq i \leq \left\lceil \frac{n}{k} \right\rceil} \min \{d_i + 1, i\} \leq \left\lceil \frac{n}{k} \right\rceil$. We now observe that if $d_{\left\lceil \frac{n}{k} \right\rceil + 1} < \left\lceil \frac{n}{k} \right\rceil$, then we would have $\max_{\left\lceil \frac{n}{k} \right\rceil + 1 \leq i \leq n} \min\{d_i + 1, i\} \leq \left\lceil \frac{n}{k} \right\rceil$, and thus, by Lemma~\ref{lemma:welsh-powell}, we would get that $\chi(G) \leq \left\lceil \frac{n}{k} \right\rceil$, which is a contradiction. We conclude that we must have $d_{\left\lceil \frac{n}{k} \right\rceil + 1} \geq \left\lceil \frac{n}{k} \right\rceil$, which, since the vertices are ordered in decreasing order of their degrees, implies that there are at least $\left\lceil \frac{n}{k} \right\rceil + 1$ vertices whose degree is at least $\left\lceil \frac{n}{k} \right\rceil$.
\end{proof}

\begin{lemma}\label{lem:stable-delete-point-outside}
Let $G = (V,E, w)$ be $\gamma$-stable instance of \MIS{} whose optimal independent set is $I^*$. Then, $\widetilde{G} = G[V \setminus X]$ is $\gamma$-stable, for any set $X \subseteq V \setminus I^*$.
\end{lemma}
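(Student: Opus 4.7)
The plan is to verify the equivalent characterization of $\gamma$-stability stated in Definition~\ref{def:stability}, namely that $\widetilde{G}$ has an independent set $\widetilde{I}^*$ (which will turn out to be $I^*$ itself) satisfying $w(\widetilde{I}^* \setminus S) > \gamma \cdot w(S \setminus \widetilde{I}^*)$ for every feasible independent set $S \neq \widetilde{I}^*$ of $\widetilde{G}$. The key observation is that removing vertices outside $I^*$ leaves $I^*$ intact and only shrinks the space of feasible independent sets, so the stability inequality inherited from $G$ transfers for free.

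First I would verify that $I^*$ remains a feasible independent set in $\widetilde{G}$. Since $X \subseteq V \setminus I^*$, we have $I^* \subseteq V \setminus X$, and $\widetilde{G} = G[V \setminus X]$ is an induced subgraph of $G$, so no new edges appear among the vertices of $I^*$; hence $I^*$ is independent in $\widetilde{G}$.

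Next I would take any feasible independent set $S$ of $\widetilde{G}$ with $S \neq I^*$ and note that $S \subseteq V \setminus X \subseteq V$, and since $\widetilde{G}$ is an induced subgraph of $G$, the set $S$ is also a feasible independent set of $G$. Applying the stability condition for $G$ (using the equivalent formulation in Definition~\ref{def:stability}) to $S$ immediately yields $w(I^* \setminus S) > \gamma \cdot w(S \setminus I^*)$, which is exactly the inequality required for $\widetilde{G}$ to be $\gamma$-stable with $I^*$ as its unique maximum independent set (uniqueness follows by taking $\gamma \geq 1$ in this strict inequality, which forces $w(I^*) > w(S)$).

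There is no real obstacle here; the lemma is essentially a monotonicity statement saying that stability is preserved under deletion of non-solution vertices, and the whole argument is a one-line reduction to the stability of $G$ once one observes that the feasible set of $\widetilde{G}$ is a subset of the feasible set of $G$ and contains $I^*$.
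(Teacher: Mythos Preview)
Your proof is correct and essentially the same as the paper's: both rest on the single observation that every independent set of $\widetilde{G}$ is an independent set of $G$ while $I^*$ remains feasible in $\widetilde{G}$. The only cosmetic difference is that you invoke the equivalent inequality characterization $w(I^*\setminus S)>\gamma\cdot w(S\setminus I^*)$ from Definition~\ref{def:stability}, whereas the paper argues by contradiction via the perturbation definition (extending a bad $\gamma$-perturbation of $\widetilde{G}$ to all of $V$ by leaving the weights on $X$ unchanged).
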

\begin{proof}
Fix a subset $X \subseteq V \setminus I^*$. It is easy to see that any independent set of $\widetilde{G} = G[V \setminus X]$ is an independent set of the original graph $G$. Let's assume that $\widetilde{G}$ is not $\gamma$-stable, i.e. there exists a $\gamma$-perturbation $w'$ such that $I' \neq I^*$ is a maximum independent set of $\widetilde{G}$. This means that $w'(I') \geq w'(I^*)$. By extending the perturbation $w'$ to the whole vertex set $V$ (simply by not perturbing the weights of the vertices of $X$), we get a valid $\gamma$-perturbation for the original graph $G$ such that $I'$ is at least as large as $I^*$. Thus, we get a contradiction.
\end{proof}

We will now present an algorithm for $(n/k)$-stable instances of graphs with $n$ vertices, for any natural number $k \geq 1$, that runs in time $n^{O(k)}$. Thus, by setting $k = \lceil 1/\varepsilon \rceil$, for any $\varepsilon > 0$, we can solve $(\varepsilon n)$-stable instances of \MIS{} with $n$ vertices, in total time $n^{O(1/\varepsilon)}$. Let $G = (V,E,w)$ be an $(n/k)$-stable instance of \MIS{}, where $n = |V|$. The algorithm is defined recursively (see Algorithm~\ref{alg:unbounded-degree}).

\begin{algorithm}[h]
\noindent \texttt{Unbounded-Alg$(G,k)$}:
\begin{enumerate}
    \item If $k = 1$, run greedy algorithm (Algorithm~\ref{alg:greedy}) on $G$, report solution and exit.
    \item Solve standard LP relaxation for $G$ and obtain (fractional) solution $\{x_u\}_{u \in V}$.
    \item If $\{x_u\}_{u \in V}$ is integral, report solution and exit.
    \item Let $X = \{u \in V: \mathrm{deg}(u) \geq \left\lceil \frac{n}{k} \right\rceil\}$, and for each $u \in X$, let $G_u := G[V \setminus (\{u\} \cup N(u))]$.
    \item For each $u \in X$, let $S_u :=$ \texttt{Unbounded-Alg$(G_u,k-1)$}, and set $I_u := S_u \cup \{u\}$.
    \item Let $\widetilde{G} = G[V \setminus X]$ and set $\widetilde{I} :=$ \texttt{Unbounded-Alg$(\widetilde{G},k-1)$}.
    \item Return the maximum independent set among $\{I_u\}_{u \in X}$ and $\widetilde{I}$.
\end{enumerate}
\caption{The algorithm for $(n/k)$-stable instances of \MIS{}}
\label{alg:unbounded-degree}
\end{algorithm}

\begin{theorem}\label{thm:general-graphs-alg}
There exists an algorithm that solves $(\frac{n}{k})$-stable instances of \MIS{} on graphs of $n$ vertices in time $n^{O(k)}$.
\end{theorem}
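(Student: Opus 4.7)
The plan is to prove correctness by induction on $k$, interpreting the algorithm as guaranteeing: if the graph $G$ passed to $\texttt{Unbounded-Alg}(G,k)$ is $(|V(G)|/k)$-stable, then the call returns the unique optimal independent set. The base case $k=1$ is immediate: a graph on $m$ vertices that is $m$-stable has maximum degree $\Delta \le m-1 < m$, so it is $\Delta$-stable and Theorem~\ref{thm:bilu-greedy} (the greedy algorithm of Bilu) recovers $I^*$.

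For the inductive step, let $G=(V,E,w)$ be $(n/k)$-stable with $n=|V|$, and apply Lemma~\ref{lemma:colors-degrees} with parameter $k$. In Case (I), $\chi(G) \le \lceil n/k \rceil$; since $\lceil n/k \rceil - 1 \le n/k$, the instance is also $(\lceil n/k \rceil - 1)$-stable, and Theorem~\ref{thm:colorable-robust} guarantees that the standard LP solved in Step 2 is integral, so Step 3 returns $I^*$. In Case (II), the set $X$ of vertices with $\deg \ge \lceil n/k \rceil$ satisfies $|X| \ge \lceil n/k \rceil + 1 > n/k$. Split on whether $I^* \cap X$ is nonempty. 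If there is some $u^* \in I^* \cap X$, Lemma~\ref{lem:delete-points-inside} says $G_{u^*}$ is still $(n/k)$-stable with optimum $I^* \setminus \{u^*\}$; moreover $|V(G_{u^*})| \le n - 1 - \lceil n/k \rceil \le n(k-1)/k - 1$, so $n/k > |V(G_{u^*})|/(k-1)$ and $G_{u^*}$ is in particular $(|V(G_{u^*})|/(k-1))$-stable. By induction the recursive call returns $I^* \setminus \{u^*\}$ and thus $I_{u^*} = I^*$. If $I^* \cap X = \emptyset$, Lemma~\ref{lem:stable-delete-point-outside} yields that $\widetilde{G}$ is $(n/k)$-stable with optimum $I^*$; since $|X| > n/k$ we get $|V(\widetilde{G})| = n - |X| < n(k-1)/k$, hence $n/k > |V(\widetilde{G})|/(k-1)$, so $\widetilde{G}$ is $(|V(\widetilde{G})|/(k-1))$-stable and by induction the recursive call returns $I^*$. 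In all cases one of $\{I_u\}_{u \in X}$ or $\widetilde{I}$ equals $I^*$, and Step 7 outputs $I^*$.

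For the running time, let $T(n,k)$ denote the worst-case time on an $n$-vertex graph. Steps 2--4 take $\mathrm{poly}(n)$ time, and Steps 5--6 spawn at most $n+1$ recursive calls on graphs of at most $n-1$ vertices with parameter $k-1$. Hence $T(n,k) \le (n+1) \cdot T(n-1, k-1) + \mathrm{poly}(n)$, which unwinds to $T(n,k) = n^{O(k)}$. Setting $k = \lceil 1/\varepsilon \rceil$ and observing that $(\varepsilon n)$-stability implies $(n/k)$-stability yields the $n^{O(1/\varepsilon)}$ bound for $(\varepsilon n)$-stable instances.

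The only delicate point, which I would highlight as the main obstacle in writing this up cleanly, is tracking how the stability parameter behaves under the two kinds of recursion: the recursion weakens the parameter from $n/k$ to $(k-1)/k \cdot n$ worth of ``slack'', but the vertex count also drops, so one must verify in both subcases that the ratio $|V(\text{new graph})|/(k-1)$ does not exceed $n/k$. This is precisely where the lower bound $|X| > n/k$ on the high-degree set (guaranteed by Lemma~\ref{lemma:colors-degrees}) and the upper bound on $|V(G_{u^*})|$ (guaranteed by the fact that $u^* \in X$ has degree at least $\lceil n/k \rceil$) come into play, and everything lines up by construction.
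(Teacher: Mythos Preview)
Your proof is correct and follows essentially the same approach as the paper's: induction on $k$, with Lemma~\ref{lemma:colors-degrees} splitting into the case where the LP is integral (via Theorem~\ref{thm:colorable-robust}) and the case where many high-degree vertices exist, then invoking the stability-preservation Lemmas~\ref{lem:delete-points-inside} and~\ref{lem:stable-delete-point-outside} to justify the recursion. One small caveat on the running-time recurrence: recursive calls on \emph{non-stable} sub-instances (namely $G_u$ for $u\in X\setminus I^*$) need not have a nonempty high-degree set, so $\widetilde{G}$ in such a call may retain all of its vertices---but since $k$ strictly decreases in every call, the recursion depth is $k$ with branching factor at most $n+1$, which still yields $n^{O(k)}$ (this is precisely how the paper phrases the running-time bound).
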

\begin{proof}
We prove the theorem by induction on $k$. Let $G = (V,E,w)$, $n = |V|$, be an $(n/k)$-stable instance whose optimal independent set is $I^*$. If $k = 1$, Theorem~\ref{thm:bilu-greedy} shows that the greedy algorithm computes the optimal solution (by setting $\Delta = n - 1$), and thus our algorithm is correct.

Let $k\geq 2$, and let's assume that the algorithm correctly solves $(N/k')$-stable instances of graphs with $N$ vertices, for any $1 \leq k' < k$. We will show that it also correctly solves $(N/k)$-stable instances. By Lemma~\ref{lemma:colors-degrees}, either the chromatic number is at most $\left\lceil \frac{n}{k} \right\rceil$, or there are at least $\left\lceil \frac{n}{k} \right\rceil + 1$ vertices whose degree is at least $\left\lceil \frac{n}{k} \right\rceil$. If the chromatic number is at most $\left\lceil \frac{n}{k} \right\rceil$, then, by Theorem~\ref{thm:colorable-robust} the standard LP is integral if $G$ is $(\left\lceil \frac{n}{k} \right\rceil - 1)$-stable. We have $\left\lceil \frac{n}{k} \right\rceil - 1 \leq \left\lfloor \frac{n}{k} \right\rfloor \leq n/k$. Thus, in this case, the LP is integral and the algorithm will terminate at step (3), returning the optimal solution.

So, let's assume that the LP is not integral, which means that $\chi(G) >  \left\lceil \frac{n}{k} \right\rceil$. Thus, the set of vertices $X = \{u \in V: \mathrm{deg}(v) \geq \left\lceil \frac{n}{k} \right\rceil\}$ has size at least $|X| \geq \left\lceil \frac{n}{k} \right\rceil + 1$. Fix a vertex $u \in X$. If $u \in I^*$, then, by Lemma~\ref{lem:delete-points-inside}, $G_u$ is $(n/k)$-stable, and moreover, $I^* = \{u\} \cup I_u^*$, where $I_u^*$ is the optimal independent set of $G_u$. $G_u$ has at most $n - \left\lceil \frac{n}{k} \right\rceil - 1 \leq \left\lfloor\frac{(k-1)}{k} \cdot n \right\rfloor = n'$ vertices, and $n/k \geq n'/(k-1)$, which implies that $G_u = (V_u, E_u, w)$ is a $\left(\frac{|V_u|}{k-1} \right)$-stable instance with $|V_u|$ vertices. Thus, by the inductive hypothesis, the algorithm computes its optimal independent set $S_u \equiv I_u^*$.

There is only one case remaining, and this is the case where $X \cap I^* = \emptyset$. In this case, by Lemma~\ref{lem:stable-delete-point-outside}, $\widetilde{G} = G[V\setminus X]$ is $(n/k)$-stable. There are at most $n - \left\lceil \frac{n}{k} \right\rceil - 1$ vertices in $\widetilde{G}$, and so, by a similar argument as above, the graph $\widetilde{G} = (\widetilde{V}, \widetilde{E}, w)$ is a $\left(\frac{|\widetilde{V}|}{k-1} \right)$-stable instance with $|\widetilde{V}|$ vertices; by the inductive hypothesis, the algorithm will compute its optimal independent set.

Since the algorithm picks the best possible independent set at step (7), it will return the optimal independent set of $G$. This concludes the induction and proves correctness. Regarding the running time, we have at most $k$ levels of recursion, and at any level, each subproblem gives rise to at most $n$ new subproblems. Thus, the total running time is bounded by $\mathrm{poly}(n) \cdot n^{k + 1} = n^{O(k)}$.
\end{proof}

\section{Stability and integrality gaps of convex relaxations}\label{sec:integrality-gaps}

In this section, we state a general theorem about the integrality gap of convex relaxations of maximization problems on stable instances. In particular, we show that, even if the conditions of Theorem~\ref{thm:MMV} are not satisfied, the integrality gap still significantly decreases as stability increases.
\begin{theorem}\label{thm:ig}
Consider a relaxation for \MIS{} that assigns a value $x_u \in [0,1]$ to each vertex $u$ of a graph $G=(V,E,w)$, and its objective function is $\sum_{u \in V} w_u x_u$. Let $\alpha$ be its integrality gap, for some $\alpha > 1$. Then, its integrality gap is at most $\min\{\alpha, 1 + \frac{1}{\beta - 1}\}$ on $(\alpha \beta)$-stable instances, for any $\beta > 1$.
\end{theorem}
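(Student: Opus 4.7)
The plan is to exploit stability through a carefully chosen one-sided perturbation that scales up only the weights outside the optimal independent set, and then feed the resulting instance into the assumed integrality-gap bound. Let $G=(V,E,w)$ be a $(\alpha\beta)$-stable instance with unique optimum $I^*$ and $\mathrm{OPT}=w(I^*)$, and let $x$ be an optimal fractional solution of the relaxation with value $\mathrm{LP}=\sum_{u\in V} w_u x_u$. I would decompose this as $\mathrm{LP}=A+B$, where $A=\sum_{u\in I^*} w_u x_u\le \mathrm{OPT}$ and $B=\sum_{u\notin I^*} w_u x_u$, so that the entire task reduces to bounding $B$ using stability.

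The key step is the perturbation $w'_u:=\alpha\beta\cdot w_u$ for $u\notin I^*$ and $w'_u:=w_u$ for $u\in I^*$. This is a valid $(\alpha\beta)$-perturbation, and for any independent set $S\neq I^*$,
$$w'(I^*)-w'(S)\;=\;w(I^*\setminus S)-\alpha\beta\cdot w(S\setminus I^*)\;>\;0$$
by the stability condition in Definition~\ref{def:stability}. Hence $I^*$ remains the unique optimum of $(G,w')$ and $\mathrm{OPT}(w')=\mathrm{OPT}$. Since the relaxation's feasible region depends only on the graph and not on the weights, the same $x$ is still feasible for $(G,w')$ and yields objective value $\sum_u w'_u x_u = A+\alpha\beta B$. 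Applying the integrality-gap hypothesis to the instance $(G,w')$, which is still an instance of the problem, gives $A+\alpha\beta B\;\le\;\mathrm{LP}(w')\;\le\;\alpha\cdot\mathrm{OPT}(w')\;=\;\alpha\cdot\mathrm{OPT}$.

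From here the rest is arithmetic: combining $B\le(\alpha\cdot\mathrm{OPT}-A)/(\alpha\beta)$ with $A\le\mathrm{OPT}$ gives $\mathrm{LP}\;\le\;\mathrm{OPT}\bigl(1+\tfrac{\alpha-1}{\alpha\beta}\bigr)$, and one verifies $(\alpha-1)(\beta-1)\le \alpha\beta$ (equivalent to $\alpha+\beta\ge 1$, which is automatic) to conclude $\mathrm{LP}\le \mathrm{OPT}\bigl(1+\tfrac{1}{\beta-1}\bigr)$; the $\min$ with $\alpha$ is free because the hypothesis already implies $\mathrm{LP}\le \alpha\cdot\mathrm{OPT}$ unconditionally. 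There is no substantive obstacle in this plan: stability is used exactly once, to equate $\mathrm{OPT}(w')$ with $\mathrm{OPT}$, and the integrality-gap hypothesis transfers verbatim to the perturbed instance because the LP polytope is weight-independent. The only modeling point worth checking carefully is that the theorem's hypothesis on the form of the relaxation (a variable $x_u\in[0,1]$ per vertex with linear objective $\sum w_u x_u$) is exactly what makes the feasible region invariant under reweighting and the objective split cleanly into $A$ and $B$.
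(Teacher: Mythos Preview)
Your argument is correct, but it is genuinely different from the paper's. The paper never perturbs the weights; instead it restricts the fractional optimum $x$ to the induced subgraph $H=G[V\setminus I^*]$, invokes the integrality-gap bound on $H$ to obtain an independent set $S\subseteq V\setminus I^*$ with $w(S)\ge \tfrac{1}{\alpha}\sum_{u\notin I^*}w_u x_u$, and then applies stability to the pair $(I^*,S)$, which are disjoint. Your approach instead keeps the graph fixed, scales up the weights outside $I^*$ by $\alpha\beta$, uses stability once to keep $\mathrm{OPT}$ unchanged, and invokes the integrality-gap bound on the reweighted instance.

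Two remarks on what each buys. First, your route in fact yields the sharper bound $1+\tfrac{\alpha-1}{\alpha\beta}$ before you relax it to $1+\tfrac{1}{\beta-1}$; the paper's argument only gives the latter. Second, the two proofs use slightly different implicit hypotheses about the relaxation: the paper needs that the restriction of a feasible $x$ to an induced subgraph is feasible for the relaxation on that subgraph (so the integrality-gap assumption applies to $H$), whereas you need only that the feasible region is weight-independent (so the same $x$ is feasible for $(G,w')$ and the integrality-gap assumption applies there). Both are natural readings of ``a relaxation with integrality gap $\alpha$'', and both suffice for all the applications the paper has in mind.
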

\begin{proof}
Let $G = (V,E,w)$ be an $(\alpha\beta)$-stable instance, let $I^*$ denote its (unique) optimal independent set and $\mathrm{OPT} = w(I^*)$ be its cost. Let $x$ being an optimal solution to the relaxation. We introduce the notation $\mathrm{CP}(S):= \sum_{u \in S} w_u x_u$ for any $S\subseteq V$. Let's assume now that $\frac{\mathrm{CP}(V)}{\mathrm{OPT}} > 1 + \frac{1}{\beta - 1}$.

We first show that $\mathrm{CP}(I^*) < (\beta - 1) \mathrm{CP}(V \setminus I^*)$. Suppose $\mathrm{CP}(I^*) \geq (\beta - 1) \mathrm{CP}(V \setminus I^*)$. We have $CP(V) = \mathrm{CP}(I^*) + \mathrm{CP}(V \setminus I^*) \leq \left(1 + \frac{1}{\beta-1}\right) \mathrm{CP}(I^*) \leq \left(1 + \frac{1}{\beta-1}\right) \mathrm{OPT}$. This contradicts our assumption, which means that $\mathrm{CP}(I^*) < (\beta - 1) \mathrm{CP}(V \setminus I^*)$. This gives $\mathrm{CP}(V \setminus I^*) > \mathrm{CP}(V)/\beta$.

We now consider the induced graph $H = G[V \setminus I^*]$. Let $S \subseteq V \setminus I^*$ be an optimal independent set of $H$. We observe that $\{x_u\}_{u \in V \setminus I^*}$ is a feasible solution for the relaxation for $H$. Since the integrality gap is at most $\alpha$, we have $w(S) \geq \mathrm{CP}(V \setminus I^*) / \alpha$. Finally, we observe that $S$ is a feasible independent set of $G$. We conclude that $w(I^*) > (\alpha\beta) \cdot w(S)$. Combining the above inequalities, we get $\mathrm{OPT} = w(I^*) > \mathrm{CP}(V)$, and thus we get a contradiction.
\end{proof}

We stress that the above result is inherently non-constructive. Nevertheless, it suggests estimation algorithms for stable instances of \MIS{}, such as the following, which is is a direct consequence of Theorem~\ref{thm:ig} and the results of~\cite{DBLP:journals/jgaa/Halldorsson00, DBLP:journals/siamcomp/Halperin02}.
\begin{corollary}
For any fixed $\varepsilon > 0$, the Lovasz  $\theta$-function SDP has integrality gap at most $1 + \varepsilon$ on $\widetilde{O}\left(\frac{1}{\varepsilon} \cdot \frac{\Delta}{\log \Delta} \right)$-stable instances of \MIS{} of maximum degree $\Delta$.
\end{corollary}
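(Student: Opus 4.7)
The plan is a direct application of Theorem~\ref{thm:ig} with the Lovász $\theta$-function SDP playing the role of the relaxation. First I would verify that this SDP fits the hypothesis of Theorem~\ref{thm:ig}: it is a relaxation of \MIS{} that associates a value $x_u = \langle v_0, v_u\rangle \in [0,1]$ to each vertex $u$ (where the $v_u$'s are the vector variables and $v_0$ is the handle vector) with objective $\sum_u w_u x_u$, so the theorem applies verbatim.

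Next I would invoke the results of Halldórsson~\cite{DBLP:journals/jgaa/Halldorsson00} and Halperin~\cite{DBLP:journals/siamcomp/Halperin02}, which together imply that on graphs of maximum degree $\Delta$ the integrality gap of the $\theta$-function SDP is at most $\alpha = \widetilde{O}(\Delta / \log \Delta)$. Here I would be careful to cite the bound as one on the integrality gap of the SDP itself (Halperin's analysis rounds a feasible SDP solution and compares the output to the SDP objective value, which is exactly an integrality-gap bound), rather than merely an approximation ratio against the true optimum.

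Then I would set $\beta = 1 + \tfrac{1}{\varepsilon}$, so that $1 + \tfrac{1}{\beta - 1} = 1 + \varepsilon$. Theorem~\ref{thm:ig} then yields that, on $(\alpha\beta)$-stable instances, the integrality gap of the $\theta$-function SDP is at most $1 + \varepsilon$, and
\[
\alpha \beta \;=\; \widetilde{O}\!\left(\frac{\Delta}{\log \Delta}\right) \cdot \left(1 + \tfrac{1}{\varepsilon}\right) \;=\; \widetilde{O}\!\left(\frac{1}{\varepsilon} \cdot \frac{\Delta}{\log \Delta}\right),
\]
which is the desired bound.

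The only genuinely nontrivial step is the second one: one must confirm that the $\widetilde{O}(\Delta/\log \Delta)$ bound from~\cite{DBLP:journals/jgaa/Halldorsson00,DBLP:journals/siamcomp/Halperin02} can be phrased as an integrality-gap statement against the $\theta$-function SDP (and not, e.g., against a weaker LP). Once this is checked, the rest is simply substitution into Theorem~\ref{thm:ig} and has no additional content.
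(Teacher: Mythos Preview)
Your proposal is correct and matches the paper's approach exactly: the paper states this corollary as ``a direct consequence of Theorem~\ref{thm:ig} and the results of~\cite{DBLP:journals/jgaa/Halldorsson00, DBLP:journals/siamcomp/Halperin02},'' and your choice of $\beta = 1 + 1/\varepsilon$ together with the $\widetilde{O}(\Delta/\log\Delta)$ integrality-gap bound from those references is precisely the intended substitution. Your care in noting that the cited bound must be read as an integrality-gap statement (not merely an approximation ratio) is apt and is the only point requiring verification.
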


We note that the theorem naturally extends to many other maximization graph problems, and is particularly interesting for relaxations that require super-constant stability for the recovery of the optimal solution (e.g.,~the Max Cut SDP has integrality gap $1+ \varepsilon$ for $(2/\varepsilon)$-stable instances although the integrality gap drops to exactly 1 for $\Omega(\sqrt{\log n} \cdot \log \log n)$-stable instances).

In general, such a theorem is not expected to hold for minimization problems, but, in our case, \MIS{} gives rise to its complementary Minimum Vertex Cover problem, and it turns out that we can prove a very similar result for Minimum Vertex Cover as well. For its proof (see Appendix~\ref{appendix:proofs-integrality-gaps}), we utilize Lemma~\ref{lem:delete-points-inside} and a standard trick that exploits the half-integrality property. More formally, we prove the following.
\begin{theorem}\label{thm:VC-estimation}
Suppose that there exists a convex relaxation of \MIS{} whose objective function is $\sum_{u \in V} w_u x_u$ and its integrality gap (w.r.t.~\MIS{}) is $\alpha$. Then, there exists a $\min \left\{2, 1+ \frac{1}{\beta - 2} \right\}$-estimation algorithm for $(\alpha\beta)$-stable instances of Minimum Vertex Cover, for any $\beta > 2$.
\end{theorem}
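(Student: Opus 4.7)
The plan is to reduce, via a Nemhauser--Trotter-style preprocessing that exploits the half-integrality of the standard LP in Figure~\ref{fig:standard-LP}, to a residual subinstance on which $\mathrm{MIS} \leq \mathrm{OPT}_{VC}$; once in that regime, Theorem~\ref{thm:ig} cleanly translates an integrality-gap improvement on the MIS side into an integrality-gap improvement on the VC side. Concretely, I would first solve the standard MIS LP to obtain a half-integral optimum and partition $V$ into $V_0, V_{1/2}, V_1$ according to the values $0, \tfrac12, 1$. Because $I^*$ is unique under $(\alpha\beta)$-stability, the Nemhauser--Trotter property forces $V_1 \subseteq I^*$ and $V_0 \cap I^* = \emptyset$, and makes $I^* \cap V_{1/2}$ the unique maximum independent set of $G[V_{1/2}]$ (noting that there are no edges between $V_{1/2}$ and $V_1$, since $\tfrac12 + 1 > 1$ would violate LP feasibility).

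The main obstacle is to verify that $G[V_{1/2}]$ is itself $(\alpha\beta)$-stable. Iterating Lemma~\ref{lem:delete-points-inside} along the vertices of $V_1 \subseteq I^*$ peels off $V_1 \cup N(V_1)$ while preserving stability, and Lemma~\ref{lem:stable-delete-point-outside} then removes whatever remains of $V_0$, yielding $G[V_{1/2}]$. Alternatively, one can argue this directly in one line: any $J \neq I^* \cap V_{1/2}$ witnessing a stability violation on $G[V_{1/2}]$ extends, by keeping the perturbation unchanged on $V_0 \cup V_1$ and augmenting the alternative independent set by $V_1$ (non-adjacent to $V_{1/2}$), to a witness on the whole of $G$.

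Once $(\alpha\beta)$-stability of $G[V_{1/2}]$ is in hand, feasibility of the all-$\tfrac12$ vector for the standard LP on the subgraph gives $\mathrm{MIS}(G[V_{1/2}]) \leq w(V_{1/2})/2 \leq \mathrm{OPT}_{VC}(G[V_{1/2}])$. Applying Theorem~\ref{thm:ig} to the given relaxation restricted to $G[V_{1/2}]$ yields $\mathrm{CP}(V_{1/2}) \leq \left(1 + \tfrac{1}{\beta-1}\right)\mathrm{MIS}(G[V_{1/2}])$. Define the estimator
\[
\hat V \;:=\; w(V_1) + w(V_{1/2}) - \mathrm{CP}(V_{1/2}).
\]
The identity $\mathrm{OPT}_{VC}(G) = w(V_1) + \mathrm{OPT}_{VC}(G[V_{1/2}])$ combined with $\mathrm{CP}(V_{1/2}) \geq \mathrm{MIS}(G[V_{1/2}])$ yields $\hat V \leq \mathrm{OPT}_{VC}(G)$, while
\[
\mathrm{OPT}_{VC}(G) - \hat V \;=\; \mathrm{CP}(V_{1/2}) - \mathrm{MIS}(G[V_{1/2}]) \;\leq\; \tfrac{1}{\beta-1}\,\mathrm{MIS}(G[V_{1/2}]) \;\leq\; \tfrac{1}{\beta-1}\,\mathrm{OPT}_{VC}(G)
\]
rearranges to $\hat V \geq \tfrac{\beta-2}{\beta-1}\,\mathrm{OPT}_{VC}(G)$, i.e., $\hat V$ is a $(1+\tfrac{1}{\beta-2})$-estimation. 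Outputting the better of $\hat V$ and the classical LP-based $2$-estimation for Vertex Cover delivers the promised $\min\{2,\,1+\tfrac{1}{\beta-2}\}$ guarantee.
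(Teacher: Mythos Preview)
Your approach mirrors the paper's: Nemhauser--Trotter preprocessing via the half-integral standard LP, stability preservation on $G[V_{1/2}]$ via Lemma~\ref{lem:delete-points-inside} (the paper in fact notes $N(V_1)=V_0$, so Lemma~\ref{lem:stable-delete-point-outside} is not needed, though your use of it is harmless), and then Theorem~\ref{thm:ig} on the residual subgraph to bound the gap between $\mathrm{CP}(V_{1/2})$ and $\mathrm{MIS}(G[V_{1/2}])$.

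There is, however, a slip in the final step. Since $V_1\subseteq I^*$ and $V_0\cap I^*=\emptyset$, the optimal vertex cover is $X^*=V_0\cup(V_{1/2}\setminus I^*_{1/2})$, so the correct identity is $\mathrm{OPT}_{VC}(G)=w(V_0)+\mathrm{OPT}_{VC}(G[V_{1/2}])$ and the estimator must be $\hat V=w(V_0)+w(V_{1/2})-\mathrm{CP}(V_{1/2})$; with $V_1$ in place of $V_0$ the identity you invoke is simply false. After replacing $V_1$ by $V_0$ in those two places, your computation goes through verbatim and coincides with the paper's estimator $w(V_0)+(w(V_{1/2})-\mathrm{FRAC})$.
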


We can now easily obtain the following corollary.
\begin{corollary}[\cite{DBLP:journals/jgaa/Halldorsson00, DBLP:journals/siamcomp/Halperin02} + Theorem~\ref{thm:VC-estimation}]
For every fixed $\varepsilon > 0$, there exists a $(1 + \varepsilon)$-estimation algorithm for $\widetilde{O}\left(\frac{1}{\varepsilon} \cdot \frac{\Delta}{\log \Delta} \right)$-stable instances of minimum Vertex Cover of maximum degree $\Delta$, where the notation $\widetilde{O}$ hides some $\mathrm{poly}(\log \log \Delta)$ factors.
\end{corollary}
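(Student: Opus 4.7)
The statement is stated as a direct corollary of Theorem~\ref{thm:VC-estimation} combined with the known integrality gap results from \cite{DBLP:journals/jgaa/Halldorsson00, DBLP:journals/siamcomp/Halperin02}, so the plan is essentially a parameter-matching exercise. First, I would invoke the fact that there is a convex relaxation of \MIS{} (the Lov\'asz $\vartheta$-function SDP, or equivalently the relaxation underlying the Halld\'orsson/Halperin algorithms) whose objective function has the form $\sum_{u \in V} w_u x_u$ and whose integrality gap on graphs of maximum degree $\Delta$ is $\alpha = \widetilde{O}(\Delta/\log \Delta)$. This is exactly the quantitative input that Theorem~\ref{thm:VC-estimation} requires about a convex relaxation of \MIS{}.

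Next, I would apply Theorem~\ref{thm:VC-estimation} with a carefully chosen stability parameter. The theorem says that such a relaxation yields a $\min\{2, 1 + \tfrac{1}{\beta-2}\}$-estimation algorithm for $(\alpha\beta)$-stable instances of Minimum Vertex Cover, for any $\beta > 2$. To hit the target approximation factor $1+\varepsilon$, I set $\beta := 2 + \tfrac{1}{\varepsilon}$, so that $1 + \tfrac{1}{\beta - 2} = 1 + \varepsilon$. This is valid because $\beta > 2$ for every fixed $\varepsilon > 0$.

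Finally, I would compute the resulting stability threshold:
\[
\alpha \beta \;=\; \widetilde{O}\!\left(\frac{\Delta}{\log \Delta}\right) \cdot \left(2 + \frac{1}{\varepsilon}\right) \;=\; \widetilde{O}\!\left(\frac{1}{\varepsilon} \cdot \frac{\Delta}{\log \Delta}\right),
\]
where the $\widetilde{O}$ notation absorbs the $\mathrm{poly}(\log\log \Delta)$ factors from the Halld\'orsson/Halperin guarantee. Plugging this into Theorem~\ref{thm:VC-estimation} yields exactly the claimed $(1+\varepsilon)$-estimation algorithm for $\widetilde{O}\!\left(\tfrac{1}{\varepsilon} \cdot \tfrac{\Delta}{\log \Delta}\right)$-stable instances of Minimum Vertex Cover on graphs of maximum degree $\Delta$.

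There is essentially no technical obstacle here: the heavy lifting is done by Theorem~\ref{thm:VC-estimation} (which internally uses Lemma~\ref{lem:delete-points-inside} and half-integrality of the Vertex Cover LP), and by the prior-work integrality gap analysis of the $\vartheta$-function. The only point requiring minor care is verifying that the cited algorithms actually certify an integrality gap (not just an approximation ratio) for a relaxation whose objective is linear in the vertex indicators $x_u$, which is indeed the case for the Lov\'asz $\vartheta$-function formulation used in \cite{DBLP:journals/jgaa/Halldorsson00, DBLP:journals/siamcomp/Halperin02}.
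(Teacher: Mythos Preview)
Your proposal is correct and matches the paper's approach: the corollary is stated without proof as a direct consequence of Theorem~\ref{thm:VC-estimation} combined with the $\widetilde{O}(\Delta/\log\Delta)$ integrality gap of the Lov\'asz $\vartheta$-function from \cite{DBLP:journals/jgaa/Halldorsson00, DBLP:journals/siamcomp/Halperin02}, and your choice $\beta = 2 + 1/\varepsilon$ is exactly the parameter-matching needed.
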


\section{Other applications of half-integrality for stable instances: the Node Multiway Cut problem}\label{section:mc}

In this section, we diverge a bit from \MIS{} and study another well-known case of an $\NP$-hard problem whose corresponding LP relaxation always has an optimal half-integral solution, namely the Node Multiway Cut problem. The Node Multiway Cut problem is a strict generalization of the Edge Multiway Cut problem; the latter has been extensively studied in the Bilu-Linial stability framework and has been one of the ``success stories" of the framework, where the current best algorithm solves $(2-2/k)$-stable instances (here, $k$ is the number of terminals)~\cite{DBLP:conf/soda/MakarychevMV14, DBLP:conf/stoc/AngelidakisMM17}. Thus, it is a natural question whether we can tackle stable instances of the more general Node Multiway Cut problem.

The Node Multiway Cut problem is a harder problem than the Edge Multiway Cut problem. In particular, the Edge Multiway Cut problem reduces in an approximation preserving fashion to the Node Multiway Cut problem~\cite{DBLP:journals/jal/GargVY04}. The problem is polynomially solvable for $k = 2$ and $\mathtt{APX}$-hard for $k \geq 3$. For every $k \geq 3$, the Node Multiway Cut admits a $2(1 - 1/k)$-approximation algorithm and, moreover, the standard path-based LP relaxation always has a half-integral optimal solution~\cite{DBLP:journals/jal/GargVY04}.

The Node Multiway Cut problem also turns out to be intimately related to our problem of study (or, rather, its complementary problem), Vertex Cover. More precisely, \cite{DBLP:journals/jal/GargVY04} gives an approximation-preserving reduction from minimum Vertex Cover to minimum Node Multiway Cut, which implies that, assuming $\P \neq \NP$, there is no $(\sqrt{2} - \varepsilon)$-approximation algorithm for Node Multiway Cut~\cite{DBLP:journals/eccc/KhotMS18}, and assuming UGC, there is no $(2 - \varepsilon)$-approximation algorithm~\cite{DBLP:journals/jcss/KhotR08}.

Here, we obtain the first results for stable instances of Node Multiway Cut. In particular, we prove the following two theorems.
\begin{theorem}\label{thm:mc-integrality}
The standard LP relaxation for Node Multiway Cut is integral for $(k - 1)$-stable instances, where $k$ is the number of terminals.
\end{theorem}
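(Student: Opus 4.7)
The plan is to combine the half-integrality of the standard LP for Node Multiway Cut (due to Garg, Vazirani, Yannakakis~\cite{DBLP:journals/jal/GargVY04}) with a minimization analog of Theorem~\ref{thm:MMV}, applied to a combinatorial rounding achieving $\alpha \beta = k - 1$.

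First, I would state and prove a minimization analog of Theorem~\ref{thm:MMV}: call a randomized rounding scheme an $(\alpha, \beta)$-rounding if, given a fractional solution $x$, it returns a feasible multiway cut $S$ such that $\Pr[v \in S] \leq \alpha \cdot x_v$ and $\Pr[v \notin S] \geq (1 - x_v)/\beta$ for every $v \in V$; then the LP is integral on every $(\alpha \beta)$-stable instance. The proof mirrors that of Theorem~\ref{thm:MMV}. Let $C^*$ be the unique optimal cut and assume for contradiction that the LP optimum $x$ is fractional. The two rounding properties give $\E[w(S \setminus C^*)] \leq \alpha \sum_{v \notin C^*} w_v x_v$ and $\E[w(C^* \setminus S)] \geq \beta^{-1} \sum_{v \in C^*} w_v (1 - x_v)$. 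Comparing $x$ to the indicator of $C^*$ in the LP objective yields $\sum_{v \notin C^*} w_v x_v \leq \sum_{v \in C^*} w_v (1 - x_v)$; combining gives $\E[w(S \setminus C^*)] \leq \alpha \beta \cdot \E[w(C^* \setminus S)]$, so some realization $S \neq C^*$ violates $(\alpha \beta)$-stability.

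Next, I would exhibit a $\bigl(2(k-1)/k,\, k/2\bigr)$-rounding of the half-integral LP. Fix an optimal $x \in \{0, 1/2, 1\}^V$ with level sets $V_0, V_{1/2}, V_1$. For each terminal $t_i$, let $R_i$ be the connected component of $t_i$ in $G[V_0]$, and for $v \in V_{1/2}$ let $\mathrm{adj}(v) := \{i : v \text{ has a neighbor in } R_i\}$. A short LP-triangle-inequality argument gives $|\mathrm{adj}(v)| \leq 1$: indices $i \neq j \in \mathrm{adj}(v)$ would produce a $t_i$-$t_j$ path whose internal LP weight is only $x_v = 1/2 < 1$. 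An LP-slackness argument gives $|\mathrm{adj}(v)| \geq 1$: otherwise every $t_i$-$t_j$ path through $v$ must contain a vertex of $V_{1/2} \cup V_1$ strictly on each side of $v$ (else $v$ would be connected to a terminal entirely through $V_0$, placing that terminal's index into $\mathrm{adj}(v)$), so the LP weight of any such path is at least $3/2$ and $x_v$ can be strictly decreased without violating feasibility, contradicting LP optimality. Hence $V_{1/2}$ partitions uniquely as $P_1 \sqcup \cdots \sqcup P_k$ with $P_i := \{v : \mathrm{adj}(v) = \{i\}\}$.

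The rounding picks $i \in [k]$ uniformly at random and returns $S_i := V_1 \cup (V_{1/2} \setminus P_i)$. Feasibility is immediate: in $G \setminus S_i$ the only terminal region a vertex of $P_i$ can touch is $R_i$, so all distinct terminals stay pairwise disconnected. For $v \in V_{1/2}$, $\Pr[v \in S] = (k-1)/k = \bigl(2(k-1)/k\bigr) x_v$ and $\Pr[v \notin S] = 1/k = (2/k)(1-x_v)$; vertices in $V_0$ and $V_1$ trivially satisfy the bounds. This is a $\bigl(2(k-1)/k,\, k/2\bigr)$-rounding, so $\alpha \beta = k - 1$, and the first step yields the claim. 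The main obstacle is the LP-slackness step ruling out ``orphan'' half-integral vertices with $\mathrm{adj}(v) = \emptyset$; the careful case analysis on the structure of tight terminal-separating paths through $v$ is the technical heart of the proof.
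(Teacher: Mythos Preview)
Your proposal is correct and follows essentially the same approach as the paper: a minimization analog of Theorem~\ref{thm:MMV}, combined with the Garg--Vazirani--Yannakakis half-integrality and the same combinatorial rounding that partitions $V_{1/2}$ by adjacency to the terminal components in $G[V_0]$ and drops one part uniformly at random, yielding $\alpha=2(k-1)/k$ and $\beta=k/2$. Your sets $R_i$ and $P_i$ coincide with the paper's $B_i$ and $\delta(B_i)$, and your two key structural claims ($|\mathrm{adj}(v)|\le 1$ via a short-path contradiction, $|\mathrm{adj}(v)|\ge 1$ via LP optimality/slackness) are exactly the lemmas the paper proves; the feasibility and probability computations are likewise identical.
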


\begin{theorem}\label{thm:lower-bounds-mc}
\hfill
\begin{enumerate}
\itemsep0em
    \item For every constant $\varepsilon > 0$, there is no robust algorithm for $\gamma$-stable instances of minimum Node Multiway Cut, for $\gamma = n^{1 - \varepsilon}$, assuming that $P \neq NP$.
    \item Assuming the planted clique conjecture, there is no efficient algorithm for $\gamma$-stable instances of minimum Node Multiway Cut, for $\gamma = o(\sqrt{n})$.
\end{enumerate}
\end{theorem}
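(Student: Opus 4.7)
The plan is to derive both lower bounds for Node Multiway Cut by composing the approximation-preserving reduction from minimum Vertex Cover to NMC of Garg, Vazirani and Yannakakis~\cite{DBLP:journals/jal/GargVY04} with the corresponding stability lower bounds for \MIS{}. Since \MIS{} and Vertex Cover are complementary and hence equivalent under Bilu--Linial stability, the hardness results for \MIS{} carry over to VC. For Part~1 I would invoke the impossibility result of Angelidakis, Makarychev and Makarychev~\cite{DBLP:conf/stoc/AngelidakisMM17}, which rules out robust algorithms for $n^{1-\varepsilon}$-stable \MIS{} assuming $\P \neq \NP$; for Part~2 I would invoke Theorem~\ref{thm:planted-clique-lower-bound} of this paper, which under the planted-clique conjecture rules out efficient algorithms for $\tilde{\Omega}(\sqrt{n})$-stable \MIS{}.

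The GVY reduction is as follows. Given a weighted VC instance $G = (V,E,w)$, form $G' = (V \cup \{s_e : e \in E\}, E')$ by subdividing every edge $e=(u,v)$ with a fresh vertex $s_e$, so that $E' = \{(u,s_e),(s_e,v) : e=(u,v) \in E\}$; declare $\{s_e\}_{e \in E}$ to be the terminals, assign them weight strictly larger than $w(V)$, and keep the weights of $V$-vertices as in $w$. A routine check shows that $S \subseteq V$ is a feasible vertex cover of $G$ iff it is a feasible node multiway cut of $G'$, and both have the same weight $w(S)$; in particular the minimum vertex cover of $G$ coincides with the minimum multiway cut of $G'$.

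The heart of the argument is to show that this reduction preserves stability. Any $\gamma$-perturbation $w''$ of the non-terminal weights of $G'$ restricts to a $\gamma$-perturbation $\tilde{w}$ of $(G,w)$, and conversely every $\gamma$-perturbation of $(G,w)$ lifts to one of $G'$ by leaving terminal weights unchanged; the oversized terminal weights guarantee that no minimum multiway cut ever picks a terminal under any such perturbation, so the solution-value bijection still applies. Consequently $G$ is $\gamma$-stable as a VC instance if and only if $G'$ is $\gamma$-stable as an NMC instance, and a robust (respectively, efficient) algorithm for $\gamma$-stable NMC would yield a robust (respectively, efficient) algorithm for $\gamma$-stable VC at the same threshold, contradicting the \MIS{} lower bounds recalled above.

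The main obstacle is bookkeeping the parameter $n$ across the reduction, since $|V(G')| = |V| + |E|$ can be super-linear in the \MIS{} vertex count. For Part~2 this is harmless: the planted-clique \MIS{} instances are dense with $|E|=\Theta(|V|^2)$, so $|V(G')| = \Theta(|V|^2)$ and the $\tilde{\Omega}(\sqrt{|V|})$ \MIS{} threshold becomes $\tilde{\Omega}(|V(G')|^{1/4})$, which is still $o(\sqrt{|V(G')|})$ as required. For Part~1 one checks that the AMM hard \MIS{} instances can be taken sparse so that $|V(G')| = O(|V|)$, and hence the $|V|^{1-\varepsilon}$ threshold transfers directly to the NMC size $n = |V(G')|$; alternatively, phrasing stability with respect to the non-terminal vertex count (which equals the \MIS{} instance size) makes the translation lossless.
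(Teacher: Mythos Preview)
Your overall strategy---reduce the \MIS{}/Vertex Cover stability lower bounds to Node Multiway Cut via a stability-preserving reduction---is exactly what the paper does. However, the concrete reduction you describe is \emph{not} the GVY reduction and is in fact incorrect.

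You subdivide each edge $e=(u,v)$ with a terminal $s_e$ and claim that $S \subseteq V$ is a vertex cover of $G$ iff $S$ is a node multiway cut of $G'$. This fails in the forward direction. Take the triangle on $\{u,v,w\}$ with vertex cover $S=\{u,w\}$. In your $G'$, after removing $S$ the vertex $v$ survives together with the edges $(s_{uv},v)$ and $(v,s_{vw})$, so the terminals $s_{uv}$ and $s_{vw}$ remain connected through $v$; hence $S$ is not a multiway cut. More generally, a vertex cover leaves $V\setminus S$ as an independent set in $G$, but each surviving vertex may still be adjacent to several edge-terminals in your $G'$, which then sit in the same component.

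The correct GVY reduction (and the one the paper uses) attaches a pendant terminal $s_i$ to each \emph{vertex} $u_i$, keeping all original edges of $G$ intact. Then $S$ is a vertex cover of $G$ iff $V\setminus S$ is an independent set, iff every surviving $u_i$ is only connected to its own terminal $s_i$ in $G'\setminus S$, iff $S$ is a node multiway cut; the bijection is cost-preserving, and a short argument shows it is stability-preserving as well. Crucially, this construction gives $|V(G')| = 2|V|$, so the entire bookkeeping paragraph about translating $n^{1-\varepsilon}$ and $o(\sqrt{n})$ across the reduction becomes unnecessary: the thresholds transfer up to a constant factor with no need to assume sparsity of the AMM instances or to weaken the planted-clique exponent.
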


A complete presentation of the above results is given in Appendix~\ref{appendix:mc}.

\section{Certified algorithms for \MIS{}}\label{sec:certified}

In this section, we initiate the systematic study of certified algorithms for \MIS{}, introduced by Makarychev and Makarychev~\cite{MM18}.

\subsection{Certified algorithms using convex relaxations}
An important observation that~\cite{MM18} makes is that an approach very similar to the one used for the design of algorithms for weakly-stable instances~\cite{DBLP:conf/soda/MakarychevMV14} can be used to obtain certified algorithms. More formally, they prove the following theorem.% for completeness, we present the proof in Appendix~\ref{appendix:certified}.
\begin{theorem}[\cite{MM18}]\label{thm:MM-certified}
Let $x: V \to [0,1]$ be an optimal fractional solution of a convex relaxation of \MIS{} whose objective value for an instance $G = (V, E,w)$ is $\sum_{u \in V} w_u x_u$. Suppose that there exists a polynomial-time $(\alpha,\beta)$-rounding for $x$. Then, there exists a polynomial-time $(\alpha\beta + \varepsilon)$-certified algorithm for \MIS{} on instances with integer polynomially-bounded weights (for $\varepsilon \geq 1 / \mathrm{poly}(n) > 0$).
\end{theorem}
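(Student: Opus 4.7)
The plan is to mimic the proof template of Theorem~\ref{thm:MMV}: solve the convex relaxation to get an optimal fractional solution $x$, apply the polynomial-time $(\alpha,\beta)$-rounding to sample a feasible independent set $S$, and then output $S$ together with the perturbation $w'$ that assigns $w'_u = (\alpha\beta+\varepsilon)\,w_u$ for $u \in S$ and $w'_u = w_u$ otherwise. Verifying that this yields a valid $(\alpha\beta+\varepsilon)$-certificate amounts, by Definition~\ref{def:certified}, to showing that for every independent set $I$ of $G$ we have $(\alpha\beta+\varepsilon)\cdot w(S\setminus I) \geq w(I\setminus S)$, i.e.~that $S$ is a maximum IS under $w'$.

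\textbf{The expectation calculation.} First I would establish the unperturbed version of the inequality in expectation over the random $S$, exactly as in the proof of Theorem~\ref{thm:MMV}. Fixing an arbitrary independent set $I$ and using the two guarantees $\Pr[u\in S] \geq x_u/\alpha$ and $\Pr[u\notin S] \leq \beta(1-x_u)$, one obtains
\begin{align*}
\E\!\left[\alpha\beta\cdot w(S\setminus I) - w(I\setminus S)\right]
 &\geq \beta\sum_{u\notin I} w_u x_u \;-\; \beta\sum_{u\in I} w_u(1-x_u) \\
 &= \beta\Bigl(\textstyle\sum_u w_u x_u \;-\; w(I)\Bigr) \;\geq\; 0,
\end{align*}
where the last inequality uses optimality of $x$ together with the fact that the indicator of $I$ is a feasible point of the relaxation of value $w(I)$.

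\textbf{From expectation to ``for every $I$''.} The main work is to turn this per-$I$ expectation bound into a \emph{single} set $S$ that simultaneously certifies every independent set, as demanded by Definition~\ref{def:certified}. There are exponentially many candidate $I$, so a naive union bound is hopeless; this is exactly where the $\varepsilon$ slack and the polynomially-bounded integer weight assumption enter. The certificate condition is equivalent to $S$ being a maximum-weight IS in $w'$, which is in turn implied by $\mathrm{LP}(w') \leq (\alpha\beta+\varepsilon)\,w(S)$ when the relaxation is at least as strong as the standard LP. I would try to derandomize the sampling of $S$ by the method of conditional expectations applied to a potential tailored to this condition (for instance a soft-max / log-sum-exp aggregate over a polynomial-size family of dual witnesses for $\mathrm{LP}(w')$) and use the polynomial weight bound to argue that only polynomially many conditioning steps are needed, with the $\varepsilon$ factor absorbing the $1/\mathrm{poly}(n)$ discretisation error. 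A natural alternative is a ``sample-and-verify'' loop: draw $S$, solve the LP with weights $w'$, and if its value exceeds $(\alpha\beta+\varepsilon)w(S)$ use the resulting fractional witness to update the rounding and retry.

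\textbf{Expected main obstacle.} The derandomisation described in the previous paragraph is the delicate step. The per-$I$ expectation does not by itself rule out a rare ``bad'' $I$ that is catastrophic for most realisations of $S$, and standard pipage or conditional-expectation arguments are tailored to linear objectives rather than to the $\max_I$ operator implicit in the certified condition. Making this step go through cleanly --- and understanding precisely why one cannot drive $\varepsilon$ down to $0$ in polynomial time --- is where I expect the real difficulty to lie.
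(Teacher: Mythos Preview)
Your primary route---round the optimal $x$ of the \emph{original} relaxation once and then derandomise against the implicit $\max_I$---has a genuine gap. The per-$I$ bound you derive, $\E\bigl[(\alpha\beta)\,w(S\setminus I)-w(I\setminus S)\bigr]\ge\beta\bigl(\sum_u w_ux_u-w(I)\bigr)\ge 0$, gives no control over the maximum over the exponentially many $I$, and the $(\alpha,\beta)$-rounding is a black box with no vertex-by-vertex structure for conditional expectations to exploit. The soft-max potential and the ``polynomial family of dual witnesses'' are left undefined, and as written nothing rules out that every realisation of $S$ is beaten by \emph{some} $I$ under every $\gamma$-perturbation.

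The paper takes precisely your secondary ``sample-and-verify'' alternative, and the analysis you left open is the whole content of the argument. Given the current $S$, solve the relaxation on the \emph{perturbed} instance $G'=(V,E,w')$ with $w'_u=(\alpha\beta)w_u$ for $u\in S$ and $w'_u=w_u$ otherwise. If the fractional optimum equals $w'(S)$ then $S$ is optimal for $G'$ and is certified; stop. Otherwise apply the $(\alpha,\beta)$-rounding to the fractional optimum of $G'$ (not of $G$) to obtain $S'$. The same two-line computation as in your expectation paragraph, but now comparing the new $S'$ to the fixed current $S$ rather than to an arbitrary $I$, gives
\[
\E\bigl[w(S')-w(S)\bigr]\ \ge\ \tfrac{1}{\alpha}\Bigl(\textstyle\sum_u w'_u x_u - w'(S)\Bigr)\ \ge\ \tfrac{1}{\alpha}\bigl(w'(I)-w'(S)\bigr)\ >\ \tfrac{\varepsilon}{\alpha(\alpha\beta+\varepsilon)}\bigl(w(I)-w(S)\bigr)
\]
for any $I$ that violates the $(\alpha\beta+\varepsilon)$-certificate condition for $S$. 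Markov's inequality plus $O\bigl(\alpha(\alpha\beta+\varepsilon)/\varepsilon\bigr)$ independent roundings then produce, with constant probability, an $S'$ with $w(S')>w(S)$, i.e.\ $w(S')\ge w(S)+1$ by integrality of the weights. Since $w(S)\le nW=\mathrm{poly}(n)$, the loop halts after polynomially many improvements. The $\varepsilon$ is used exactly here, to lower-bound the per-step progress so that Markov yields a usable success probability; it is not a discretisation artefact of a one-shot derandomisation.
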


For completeness, we present the proof in Appendix~\ref{appendix:certified}. We now combine Theorem~\ref{thm:planar_rounding} with Theorem~\ref{thm:MM-certified} and obtain the following theorem.
\begin{theorem}
There exists a polynomial-time $(1+\varepsilon)$-certified algorithm for \MIS{} on planar graphs with integer polynomially-bounded weights (for $\varepsilon \geq 1 / \mathrm{poly}(n) > 0$).
\end{theorem}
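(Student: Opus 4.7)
The plan is to instantiate the generic certification framework of Theorem~\ref{thm:MM-certified} with the Magen--Moharrami rounding from Theorem~\ref{thm:planar_rounding}. Concretely, given a target accuracy $\varepsilon \geq 1/\mathrm{poly}(n)$, I would first pick an integer $k \geq 2$ large enough that $\frac{k+1}{k-1} + \varepsilon_0 \leq 1 + \varepsilon$ for some small $\varepsilon_0 = 1/\mathrm{poly}(n)$; since $\frac{k+1}{k-1} = 1 + \frac{2}{k-1}$, it suffices to take $k = \Theta(1/\varepsilon)$, which is polynomial in $n$ because $\varepsilon \geq 1/\mathrm{poly}(n)$. Then I would solve the Sherali--Adams relaxation from Figure~\ref{fig:SA} at level $t = 3k+2$; this has size $n^{O(t)} = n^{O(1/\varepsilon)}$, which is polynomial when $\varepsilon$ is bounded below by an inverse polynomial in $n$.

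Next, I would feed the resulting SA solution $\{Y_S\}$ into the Magen--Moharrami rounding scheme described before Theorem~\ref{thm:planar_rounding}: fix a planar embedding, compute the layer decomposition $V_0,\dots,V_L$, pick the shift $j\in\{0,\dots,k-1\}$, and sample an independent set $S_i$ for each $(k+1)$-outerplanar slice $G_i$ according to the distribution $p_i$ given by Theorem~\ref{thm:SA-treewidth}. This is the step where one must invoke the remark in the paper that the sampling is constructive via standard tree-decomposition dynamic programming: for each slice one computes a tree decomposition of width at most $3(k+1)-1 = O(1/\varepsilon)$ and samples an independent set whose marginals match $\{y_u\}_{u\in V(G_i)}$ in polynomial time. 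By Theorem~\ref{thm:planar_rounding}, this yields an $(\alpha,\beta)$-rounding with $\alpha = \frac{k}{k-1}$ and $\beta = \frac{k+1}{k}$, and thus $\alpha\beta = \frac{k+1}{k-1}$, and crucially the rounding runs in polynomial time.

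Finally, I would plug this polynomial-time $(\alpha,\beta)$-rounding into Theorem~\ref{thm:MM-certified}. That theorem converts the rounding into a polynomial-time $(\alpha\beta + \varepsilon_0)$-certified algorithm whenever weights are integer and polynomially bounded and $\varepsilon_0 \geq 1/\mathrm{poly}(n)$; with our choice of $k$, we get a certification factor at most $\frac{k+1}{k-1} + \varepsilon_0 \leq 1 + \varepsilon$. In particular, the algorithm returns an independent set $S$ together with a $(1+\varepsilon)$-perturbation $w'$ witnessing that $S$ is optimal for $G'=(V,E,w')$, as required by Definition~\ref{def:certified}.

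The main obstacle is really the polynomial-time constructiveness of the rounding, rather than the combinatorial bound on $\alpha\beta$: Theorem~\ref{thm:planar_rounding} is stated as an existential probability computation, so one must justify that sampling from the implicit distribution $p_i$ over independent sets of the bounded-treewidth subgraph $G_i$ can be done efficiently (standard, but worth a careful argument via tree-decomposition DP together with the local consistency guarantees of SA at level $t$). The polynomial bound on weights and the lower bound $\varepsilon \geq 1/\mathrm{poly}(n)$ are only used to satisfy the hypothesis of Theorem~\ref{thm:MM-certified}, so once constructiveness is established the rest is bookkeeping on the parameter $k$.
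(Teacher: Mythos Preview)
Your proof is exactly the paper's one-line argument: plug the $\bigl(\tfrac{k}{k-1},\tfrac{k+1}{k}\bigr)$-rounding of Theorem~\ref{thm:planar_rounding} into Theorem~\ref{thm:MM-certified} with $k=\Theta(1/\varepsilon)$, using the remark that the rounding is constructive via tree-decomposition DP. The one slip is your assertion that $n^{O(1/\varepsilon)}$ is polynomial whenever $\varepsilon\ge 1/\mathrm{poly}(n)$ --- that is false (take $\varepsilon=1/n$, giving $n^{O(n)}$); the paper is equally terse on this point, and the honest reading is that the SA level must stay $O(1)$, i.e., $\varepsilon$ is a constant, with the clause $\varepsilon\ge 1/\mathrm{poly}(n)$ inherited only from the certification step of Theorem~\ref{thm:MM-certified}.
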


%We also observe that we cannot use the rounding scheme of Theorem~\ref{thm:color-rounding} with Theorem~\ref{thm:MM-certified}, since it is not a polynomial-time scheme. This is another indication that designing certified algorithms is, potentially, much harder than designing algorithms for stable instances.

%%%%%%%%%%%%%%%%%%%%%%%%%%%%%%%%%%%%%%%%%%%%%%%%%

\newcommand{\I}{\mathcal{I}}
\newcommand{\sse}{\subseteq}
\newcommand{\sm}{\setminus}

\subsection{Combinatorial certified algorithms}

In this section, we study several combinatorial algorithms for \MIS{} and prove that they are certified. The first result is about the greedy algorithm.
\begin{theorem}\label{thm:certified-greedy}
The greedy algorithm (see Algorithm~\ref{alg:greedy}) is a $\Delta$-certified algorithm for \MIS{} on graphs of maximum degree $\Delta$. More generally, the greedy algorithm is a $\Delta$-certified algorithm for any instance of a $\Delta$-extendible system.
\end{theorem}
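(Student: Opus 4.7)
Fix an arbitrary instance $G=(V,E,w)$ of maximum degree $\Delta$, and let $S=\{u_1,u_2,\ldots,u_k\}$ be the independent set returned by Algorithm~\ref{alg:greedy}, with the $u_i$'s indexed in the order in which they were picked. By the equivalent formulation of Definition~\ref{def:certified}, it suffices to show that for every independent set $I$ of $G$,
\[
\Delta\cdot w(S\setminus I)\;\ge\;w(I\setminus S).
\]
The plan is to construct a weighted charging from $I\setminus S$ to $S\setminus I$ in which each element of $S\setminus I$ receives charge at most $\Delta\cdot w(u_i)$.

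\textbf{Charging step.} Consider any $v\in I\setminus S$. Since $v$ was never chosen by the greedy algorithm, it must have been removed from the active set $X$ at some iteration; this happens only when a neighbor of $v$ is added to $S$. Let $u_{i(v)}$ be the first such neighbor, i.e.\ the earliest vertex of $S$ that is adjacent to $v$. I claim $u_{i(v)}\in S\setminus I$: indeed, $u_{i(v)}$ is a neighbor of $v\in I$, so if $u_{i(v)}$ were also in $I$, then $I$ would contain the edge $(v,u_{i(v)})$, contradicting independence. Moreover, at the moment $u_{i(v)}$ was selected, the vertex $v$ was still in $X$, so by the greedy rule $w(u_{i(v)})\ge w(v)$.

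\textbf{Degree bound and conclusion.} Define $\phi:I\setminus S\to S\setminus I$ by $\phi(v)=u_{i(v)}$. For any $u\in S\setminus I$, every preimage of $u$ under $\phi$ lies in $N(u)$, so $|\phi^{-1}(u)|\le |N(u)|\le \Delta$. Combining this with the inequality $w(v)\le w(\phi(v))$, we obtain
\[
w(I\setminus S)=\sum_{v\in I\setminus S} w(v)\;\le\;\sum_{u\in S\setminus I}|\phi^{-1}(u)|\cdot w(u)\;\le\;\Delta\sum_{u\in S\setminus I}w(u)=\Delta\cdot w(S\setminus I),
\]
which is exactly the desired inequality. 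By the convention stated after the preliminaries, returning $w'_u=\Delta\cdot w_u$ for $u\in S$ and $w'_u=w_u$ otherwise provides the required $\Delta$-perturbation witnessing optimality of $S$.

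\textbf{Extension to $\Delta$-extendible systems.} The only property of graphs used above is that when greedy picks $u_{i(v)}$ it ``kills'' $v$ via a local conflict, and each picked element can kill at most $\Delta$ alternatives. In a $\Delta$-extendible independence system, the defining exchange axiom supplies exactly this: for any feasible $I$ and any $u\in S$, $u$ can be exchanged into $I$ at the cost of removing at most $\Delta$ elements of $I$. Running the same charging — map each $v\in I\setminus S$ to the earliest $u_{i(v)}\in S$ whose addition required $v$'s removal in the exchange, observing as before that $u_{i(v)}\notin I$ and $w(u_{i(v)})\ge w(v)$ — and using the $\Delta$-bound from the exchange axiom instead of the maximum degree, yields the same estimate.

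\textbf{Where the care is needed.} The only nontrivial point is verifying that the charge always lands in $S\setminus I$ rather than $S\cap I$; this is where independence of $I$ (respectively, feasibility in the $\Delta$-extendible system) is used. Everything else is bookkeeping with the greedy invariant $w(u_{i(v)})\ge w(v)$ and the degree/extendibility bound.
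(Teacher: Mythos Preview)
Your proof is correct and uses essentially the same exchange/charging idea as the paper, just with the direction of the charging reversed: you map each $v\in I\setminus S$ to the greedy element $u_{i(v)}\in S\setminus I$ that kicked it out (getting $|\phi^{-1}(u)|\le\Delta$ from the degree bound), whereas the paper fixes the perturbation $w'$ and, scanning $S\setminus S^*$ in decreasing weight, shows each such element can displace at most $\Delta$ lighter elements of $S^*\setminus S$. Your formulation is arguably cleaner for \MIS{} because the ``killer'' relation is concrete; for the $\Delta$-extendible extension both your sketch and the paper's proof rely on the same incremental chain construction, and your outline is at the same level of detail as the paper's.
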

The proof of Theorem \ref{thm:certified-greedy} can be found in Appendix~\ref{appendix:proofs-certified}. The proof of the extensions to $\Delta$-extendible systems, along with the appropriate definitions, are given in Appendix~\ref{sec:extend}.
Moreover, we introduce a variant of the greedy algorithm for \MIS{} that is a  $\sqrt{\Delta^2 - \Delta + 1}$-certified algorithm; %(see Appendix~\ref{appendix:improved-greedy} for details);
the improvement over the greedy is moderate for small values of $\Delta$. Thus, we present the algorithm for the special case of maximum degree $\Delta = 3$. The algorithm can then be easily generalized to any $\Delta \geq 3$. The algorithm and its analysis can be found in Appendix~\ref{appendix:improved-greedy}.

Finally, we show that the algorithm of Berman and F{\"{u}}rer~\cite{DBLP:conf/soda/BermanF94} is $\left(\frac{\Delta+1}{3} + \varepsilon \right)$-certified, when all weights are $1$. We acknowledge that the restriction to unweighted graphs limits the scope of the algorithm, but we consider this as a first step towards obtaining  $(c \Delta)$-certified algorithms, for $c < 1$.
\begin{theorem}\label{thm:berman-furer}
The Berman-F{\"{u}}rer algorithm~(\cite{DBLP:conf/soda/BermanF94}) is a $\left(\frac{\Delta + 1}{3} + \varepsilon \right)$-certified algorithm for \MIS{} on graphs of maximum degree $\Delta$, when all weights are equal to $1$.
\end{theorem}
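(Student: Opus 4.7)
The Berman-F\"urer algorithm is a local search algorithm that maintains an independent set $S$ and repeatedly applies improving $t$-local swaps (with $t = O(\varepsilon^{-1} \log n)$) until none exist; it terminates at a $t$-local optimum, meaning no pair $(A,B)$ with $A \subseteq V \setminus S$, $B \subseteq S$, $|A \cup B| \leq t$, $(S \setminus B) \cup A$ independent, and $|A| > |B|$ exists. The plan is to show that this termination condition alone implies the certified inequality $|I \setminus S| \leq \gamma \cdot |S \setminus I|$ for every independent set $I$, with $\gamma = \frac{\Delta+1}{3} + \varepsilon$; this immediately certifies $S$ as optimum under the $\gamma$-perturbation that scales the unit weight of each vertex of $S$ by $\gamma$.

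Fix any independent set $I$ and let $A := I \setminus S$, $B := S \setminus I$. First I would observe that $H := G[A \cup B]$ is bipartite between $A$ and $B$ (both are independent), and that the $S$-neighbors of any $a \in A$ lie entirely in $B$: any neighbor of $a$ in $I$ would contradict independence of $I$, so neighbors of $a$ in $S$ must lie in $S \setminus I = B$. Decomposing $H$ into connected components $\{C^{(i)}\}$ with $C^{(i)}_A := C^{(i)} \cap A$ and $C^{(i)}_B := C^{(i)} \cap B$, the swap $(C^{(i)}_A, C^{(i)}_B)$ preserves independence of $S$ by bipartiteness of $C^{(i)}$ together with the above neighbor-localization. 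Hence for every ``small'' component with $|C^{(i)}| \leq t$, local optimality forces $|C^{(i)}_A| \leq |C^{(i)}_B|$, while ``large'' components (size $> t$) number at most $(|A|+|B|)/t$ and contribute at most an additive $O(\Delta/t)$ slack, absorbed into $\varepsilon$ by taking $t = \Theta(\Delta/\varepsilon)$.

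To sharpen the coarse per-component bound $|C^{(i)}_A| \leq |C^{(i)}_B|$ into the global ratio $|A| \leq \frac{\Delta+1}{3}|B|$, I would invoke Berman-F\"urer's amortized payment scheme. A warm-up illustrates the mechanism: $2$-local optimality forces each $b \in B$ to be the unique $B$-neighbor of at most one $a \in A$, which combined with the degree bound $\sum_{b \in B} \deg_H(b) \leq \Delta |B|$ already gives $|A| \leq \frac{\Delta+1}{2}|B|$. Extending this to $O(\log n)$-swaps rules out all short ``augmenting structures'' in $H$ simultaneously; Berman-F\"urer's payment scheme assigns charge along these forbidden structures and balances the resulting inequalities to yield $|A| \leq \frac{\Delta+1}{3}|B|$ in the limit.

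The main obstacle will be carrying Berman-F\"urer's payment scheme through for an arbitrary $I$ rather than the optimum $I^*$. The key conceptual point that makes this transparent, and which is the reason the analysis goes through essentially verbatim, is that every inequality in their charging argument uses only (i) feasibility of $I$ as an independent set (to derive bipartiteness of $H$ and the identity $N_S(a) \subseteq B$ for $a \in A$) and (ii) $t$-local optimality of $S$; at no point is the maximality of $|I|$ invoked. Hence each structural inequality transfers without change, upgrading Berman-F\"urer's $(\frac{\Delta+1}{3} + \varepsilon)$-approximation guarantee to the claimed $(\frac{\Delta+1}{3} + \varepsilon)$-certified guarantee.
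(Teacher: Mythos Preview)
Your proposal shares the paper's key insight: Berman--F\"urer's analysis uses only feasibility of the comparison independent set $I$ and local optimality of $S$, never optimality of $I$, so it upgrades directly from an approximation guarantee to the certified inequality $|I\setminus S|\le\gamma\,|S\setminus I|$ for every $I$. Your paragraph-3 warm-up (each $b\in B$ is the unique $S$-neighbor of at most one $a\in A$) is exactly the pigeonhole/improvement step the paper uses.

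The paper is more concrete about how to reduce to the cited black box. Rather than asserting that the entire charging scheme transfers, it peels off the degree-one layer explicitly: with $T=\{a\in I\setminus S:\ |N(a)\cap S|=1\}$ and $J=N(T)\cap(S\setminus I)$, a two-vertex swap shows $|T|\le|J|$; then Berman--F\"urer's Lemma~3.5 is invoked only on the residual sets $(I\setminus S)\setminus T$ and $(S\setminus I)\setminus J$, where every remaining vertex of $I\setminus S$ has at least two $S$-neighbors---precisely the hypothesis that lemma requires. Summing gives the certified bound. Your route (``the whole payment scheme transfers'') would work too, but obliges you to re-verify more of the original argument; the paper's decomposition buys a one-line reduction to a lemma that can be quoted verbatim.

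One correction: your paragraph-2 component argument should be dropped. Bounding the \emph{number} of large components by $(|A|+|B|)/t$ does not bound their \emph{contribution} to $|A|-\gamma|B|$, since a single large component can contain arbitrarily many $A$-vertices; and $t=\Theta(\Delta/\varepsilon)$ is far smaller than the $\sigma=32k\Delta^{4k}\log n$ the algorithm actually uses. This is harmless since you flag it as a warm-up and your real argument lives in paragraphs~3--4, but as written it is not a valid standalone step.
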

Let $G = (V,E,w)$ be a graph of maximum degree $\Delta$, $n = |V|$, where $w_u = 1$ for every $u \in V$. We say $X$ is an improvement of $I$, if both $I$ and  $I \oplus X$ are  independent sets,  the subgraph induced by $X$ is  connected  and $I \oplus X$ is  larger  than  $I$.  (The operator  $\oplus$ denotes  the  symmetric  difference.)

The algorithm starts with a feasible independent set $I'$ and iteratively improves the solution by checking whether there exists an improvement $X$ with size $|X| \leq \sigma $. If so, it replaces $I$ by $I \oplus X$ and repeats. Otherwise, if no such improvement exists, it outputs the current independent set $I$. Assuming that $\Delta$ is a constant, the algorithm runs in polynomial time as long as $\sigma = O(\log n )$.
\begin{lemma}[\cite{DBLP:conf/soda/BermanF94}]
If $\Delta$ is a constant and $\sigma = O(\log n)$, the algorithm runs in polynomial time.
\end{lemma}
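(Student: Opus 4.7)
The plan is to bound the total running time by a product of two factors: the number of improvement rounds, and the cost of searching for an improvement within each round.

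First, I would bound the number of rounds. The algorithm only replaces $I$ by $I \oplus X$ when $|I \oplus X| > |I|$, so each successful iteration increases $|I|$ by at least one. Since $|I| \le n$ at all times, there can be at most $n$ iterations of the outer loop. If no improvement of size $\le \sigma$ exists, the algorithm halts immediately, so no work is wasted in that case.

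Next, I would bound the cost of a single round, which amounts to searching over all candidate improvements. A candidate $X$ is a vertex subset of size at most $\sigma$ whose induced subgraph in $G$ is connected. For each such $X$, verifying that both $I$ and $I \oplus X$ are independent sets, and that $|I \oplus X| > |I|$, takes $\mathrm{poly}(n)$ time. To count candidates, I would use the standard fact that in a graph of maximum degree $\Delta$, the number of connected vertex subsets of size $k$ containing a fixed vertex $v$ is at most $(e\Delta)^{k}$ (this can be proved, for example, by a simple BFS/DFS enumeration argument charging at most $\Delta$ choices per tree edge, times a Catalan-type factor). Summing over the at most $n$ possible ``anchor'' vertices $v$ and over $k = 1, \dots, \sigma$, the total number of connected subsets to examine is at most
\[
n \cdot \sum_{k=1}^{\sigma} (e\Delta)^{k} \;\le\; n \cdot (e\Delta)^{\sigma+1}.
\]
When $\Delta = O(1)$ and $\sigma = c \log n$ for some constant $c$, we have $(e\Delta)^{\sigma+1} = n^{O(1)}$, so the enumeration is polynomial in $n$.

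Putting these two bounds together, the algorithm performs at most $n$ rounds, each of which enumerates at most $n \cdot (e\Delta)^{\sigma+1} = n^{O(1)}$ candidates and spends $\mathrm{poly}(n)$ time per candidate. The total running time is therefore $n^{O(1)}$, which proves the claim. The only nontrivial ingredient is the counting bound for connected subgraphs of bounded size in a bounded-degree graph; everything else follows from elementary observations about the algorithm's monotone progress and the cost of checking a single candidate improvement.
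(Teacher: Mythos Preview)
Your proof is correct and follows essentially the same approach as the paper's (commented-out) argument: both bound the number of connected vertex subsets of size at most $\sigma$ in a degree-$\Delta$ graph by $n \cdot c^{\sigma}$ for some constant $c$ depending on $\Delta$ (the paper gets $n(4\Delta)^{\sigma}$ via a DFS-traversal encoding, you get $n(e\Delta)^{\sigma+1}$ via a tree-enumeration estimate), and both conclude that for $\Delta = O(1)$ and $\sigma = O(\log n)$ this is $n^{O(1)}$. Your write-up is actually slightly more complete, since you also make explicit the bound of at most $n$ improvement rounds, which the paper's sketch leaves implicit.
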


\begin{comment}
\begin{proof}
	 The key observation is that the number of connected subgraphs of size at most $\sigma$ is upper bounded by $n (4\Delta)^\sigma$. To see this, first sort the neighbors  of  every  vertex  in  an 
	arbitrary  way.  Then  every  connected  subgraph can  be  represented  by  any  one  of  its  depth-first traversals,  which  can  be  described  by  a  start vertex  and  a sequence of choices  of neighbors.  The frequently  used  decision  to  go back  in  the  tree  can be  encoded  by  one  bit, therefore gives the $n (4\Delta)^\sigma$ upper bound.
\end{proof}
\end{comment}

The main result can be presented as follows. Along with Definition~\ref{def:certified}, it implies Theorem~\ref{thm:berman-furer}.
\begin{lemma}\label{lemma:sym-dif-stronger}
Let $I$ be the independent set returned by the algorithm with $\sigma = 32k \Delta^{4k} \log n$ and let $S \neq I$ be any feasible independent set. Then, we have $|S \setminus I| \leq \left( \frac{\Delta + 1}{3} + \varepsilon \right) \cdot |I \setminus S|$, where $\varepsilon = \frac{1}{3k}$.
\end{lemma}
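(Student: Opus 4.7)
The plan is to work in the bipartite conflict graph $H$ on $A \cup B$, where $A = I \sm S$ and $B = S \sm I$. Because $I$ and $S$ are both independent sets, every edge of $G[A \cup B]$ runs between $A$ and $B$, so $H$ is a bipartite graph of maximum degree at most $\Delta$, and the desired inequality is equivalent to $3|B| \le (\Delta + 1 + 3\varepsilon)|A|$.

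The first step is to extract the local-optimality information in a usable form. Termination of the algorithm means that no connected symmetric difference $X$ with $|X| \le \sigma$ strictly improves $I$. Specializing to $X = Y \cup N_A(Y)$ for $Y \sse B$ (these are exactly the swaps that add $Y$ to $I$ and delete the $I$-neighbours of $Y$, which must all lie in $A$ since $S$ is independent) yields a \emph{local Hall condition}: for every $Y \sse B$ whose closure $Y \cup N_A(Y)$ is connected in $H$ and has total size at most $\sigma$, $|Y| \le |N_A(Y)|$. In particular $d_A(b) \ge 1$ for every $b \in B$, which already gives the trivial bound $|B| \le \Delta |A|$ via $\sum_{b \in B} d_A(b) = |E(H)| \le \Delta|A|$.

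The combinatorial heart of the proof is to upgrade this to an amortized degree bound $\sum_{b \in B} d_A(b) \ge 3|B| - \varepsilon|A|$, which combined with $|E(H)| \le \Delta|A|$ gives the lemma. For each ``deficient'' $b \in B$ with $d_A(b) \in \{1,2\}$, grow a depth-$O(k)$ BFS witness set around $b$ in $H$; since degrees are bounded by $\Delta$, such a witness has size at most $\Delta^{O(k)} \le \Delta^{4k} \le \sigma$, so the local Hall condition applies to every connected subset of it. A case analysis on how deficient $B$-vertices can be chained together through short alternating paths and cycles in $H$ — each potentially a forbidden $(t{+}1)$-for-$t$ improvement inside the witness — shows that the total deficit $\sum_{b \in B} \max(0, 3 - d_A(b))$ is at most $\varepsilon|A|$, provided the BFS depth $k$ is large enough to drive the $1/k$-scale boundary slack below $\varepsilon = 1/(3k)$.

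The main obstacle is precisely this combinatorial case analysis: translating the local Hall condition, which only asserts $|Y| \le |N_A(Y)|$ on small connected $Y$'s, into the much stronger amortized lower bound of essentially $3$ on the $A$-degree of each $B$-vertex. This is where a structural argument \`a la Berman--F\"urer on alternating paths and cycles in $H$ is required, adapted here to the symmetric setting that bounds $|B|/|A|$ rather than $|S|/|I|$ (the latter would only yield a standard approximation ratio; the certified guarantee demands the former, stronger inequality, since $|S \setminus I| \le \gamma |I \setminus S|$ is strictly stronger than $|S| \le \gamma|I|$ whenever $|S \cap I| > 0$). The parameter choice $\sigma = 32 k \Delta^{4k} \log n$ bundles the BFS ball size $\Delta^{4k}$, the depth parameter $k$ controlling the additive error $\varepsilon$, absolute constants from the case analysis, and a $\log n$ factor needed only to keep the algorithm polynomial-time.
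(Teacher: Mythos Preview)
Your setup (the bipartite graph $H$ on $A=I\setminus S$, $B=S\setminus I$) and the local Hall condition are correct, but the heart of your argument is not actually given: you assert that ``a case analysis \ldots\ shows that the total deficit $\sum_{b\in B}\max(0,3-d_A(b))$ is at most $\varepsilon|A|$'', and this claim is both unproven and false as stated. Take $H$ to be a perfect matching between $A$ and $B$ with $|A|=|B|=m$: every $b$ has $d_A(b)=1$, the deficit is $2m=2|A|\gg\varepsilon|A|$, yet no local improvement exists (every swap $Y\cup N_A(Y)$ has $|Y|=|N_A(Y)|$) and the lemma's conclusion $|B|\le(\frac{\Delta+1}{3}+\varepsilon)|A|$ holds trivially. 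So the deficit bound you aim for is the wrong intermediate target; what you would actually need is something like $\sum_b(3-d_A(b))\le(1+3\varepsilon)|A|$ with the summand allowed to go negative, and establishing that essentially requires the full Berman--F\"urer analysis, which you gesture at but do not carry out.

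The paper's proof avoids redoing that analysis. It first peels off the degree-$1$ vertices: set $T=\{b\in B:|N(b)\cap I|=1\}$ and $J=\{a\in A:N(a)\cap T\neq\emptyset\}$. A one-line pigeonhole gives $|T|\le|J|$ (if some $a\in J$ had two neighbours $b_1,b_2\in T$, then $X=\{a,b_1,b_2\}$ would be a size-$3$ improvement). On the remaining sets $S_0=B\setminus T$ and $I_0=A\setminus J$, Lemma~3.5 of Berman--F\"urer applies \emph{verbatim} and yields $|S_0|\le(\frac{\Delta+1}{3}+\varepsilon)|I_0|$. Adding $|T|\le|J|$ back in gives the lemma. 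The step you are missing is that one should \emph{invoke} the Berman--F\"urer lemma as a black box rather than try to reprove it; the only genuinely new content here is the three-line handling of the degree-$1$ vertices, which is exactly what upgrades the approximation-ratio statement to the certified (symmetric-difference) form.
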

\begin{proof}
Let $\bar{S} = S \setminus I$ and $\bar{I} = I \setminus S$. First, we observe that every $u \in \bar{S}$ has at least one neighbor in $\bar{I}$, otherwise, we could improve $I$ by adding a new vertex from $\bar{S}$. We now consider the set $T = \{u \in \bar{S}: |N(u) \cap I| = 1\} \subseteq \bar{S}$. In words, $T$ is the set of elements in $\bar{S}$ that have exactly one neighbor in $I$. We also define $J = \{v \in \bar{I}: N(v) \cap T \neq \emptyset\} $ to be the set of elements of $\bar{I}$ that have at least one neighbor in $T$. We will show that $|T| \leq |J|$. 
  
To prove this, let's assume that $|T| > |J|$. Then, by the pigeonhole principle, we must have at least one vertex $v \in J$ that is connected to at least two vertices $u_1, u_2 \in T$. This implies that replacing $v$ with $u_1$ and $u_2$ would be an improvement. Thus, we get a contradiction. Now let $I_0 = \bar{I} \setminus J$ and $S_0 = \bar{S} \setminus T$. The final step of the proof is a direct consequence of Lemma 3.5 of~\cite{DBLP:conf/soda/BermanF94}, that states that if there is no improvement over $I$ of size at most $ \sigma = 32k \Delta^{4k} \log n $, then for $\varepsilon = 1/(3k)$, $|S_0| \leq \left( \frac{\Delta + 1}{3} + \varepsilon \right)|I_0|$. Recall that we have already proved $ |T| \leq |J|$. Therefore,
\begin{equation*}
    |S \setminus I| = |S_0| +|T| \leq \left( \frac{\Delta + 1}{3} + \varepsilon \right) |I_0| +|J| \leq \left( \frac{\Delta + 1}{3} + \varepsilon \right) \left( |I_0| +|J|\right) =  \left( \frac{\Delta + 1}{3} + \varepsilon \right)|I \setminus S|.
\end{equation*}
\end{proof}
%Lemma~\ref{lemma:sym-dif-stronger} and Definition~\ref{def:certified} now imply Theorem~\ref{thm:berman-furer}.

\section{Summary and open problems}\label{sec:summary}

In this work we presented a finer understanding of the classic Maximum Independent Set problem on non worst-case instances. We appealed to the notion of Bilu-Linial stability and designed algorithms that efficiently find maximum independent sets in stable instances of various classes of graphs, such as planar graphs, bounded-degree graphs, small chromatic number graphs and general graphs. Furthermore, we showed that in many instances our techniques lead to certified algorithms, a natural and desirable property of any optimization algorithm. As already mentioned, a $\gamma$-certified algorithm is also a $\gamma$-approximation algorithm. Hence, an intriguing question is to investigate whether one can design certified algorithms for \MIS{} that match the best approximation guarantees. For instance, our certified algorithm for bounded-degree graphs does not match the best known approximation guarantee~\cite{DBLP:journals/siamcomp/Halperin02}. It would also be interesting to explore other sufficient conditions and properties that allow for the design of algorithms for stable instances and/or certified algorithms, as well as apply the known methods to other interesting graph classes. On the side of the lower bounds, it would be interesting to understand whether non-robust algorithms can, potentially, perform better than robust algorithms. Finally, we hope that our techniques will lead to further study of algorithms for stable instances and certified algorithms for other problems of interest.

\paragraph*{Acknowledgments.} We would like to thank Konstantin Makarychev and Yury Makarychev for kindly sharing their manuscript~\cite{MM18} with us, and Yury Makarychev and Mrinalkanti Ghosh for useful discussions.

\bibliographystyle{plain}
\bibliography{references}

\appendix

\section{Missing proofs}

\subsection{Proofs from Section~\ref{sec:prelim}}\label{appendix:proofs-prelim}

\begin{proof}[Proof of Lemma~\ref{lem:delete-points-inside}]
It is easy to see that $I^* \setminus \{v\}$ is a maximum independent set of $\widetilde{G}$. We now prove that the instance is $\gamma$-stable. Let's assume that there exists a perturbation $w'$ of $\widetilde{G}$ such that $I' \neq (I^* \setminus \{v\})$ is a maximum independent set of $\widetilde{G}$. This means that $w'(I') \geq w'(I^* \setminus \{v\})$. We now extend $w'$ to the whole vertex set $V$ by setting $w_u' = w_u$ for every $u \in \{v\} \cup N(v)$. It is easy to verify that $w'$ is a $\gamma$-perturbation for $G$. Observe that $I' \cup \{v\}$ is a feasible independent set of $G$, and we have $w'(I' \cup \{v\})  = w'(I') + w_v' \geq w'(I^* \setminus \{v\}) + w_v' = w'(I^*)$. Thus, we get a contradiction.
\end{proof}

\subsection{Proofs from Section~\ref{sec:stable}}\label{appendix:proofs-stable}

\begin{proof}[Proof of Theorem~\ref{thm:MMV}]
Let $G = (V,E,w)$ be a $\gamma$-stable instance of \MIS, where $\gamma = \alpha\beta$, whose unique optimal solution is $I^*$. Let's assume that $x$ is a non-integral optimal solution of the convex relaxation (i.e.,~there exists $u \in I^*$ such that $x_u < 1$). Then, there must exist a $u \notin I^*$ such that $x_u > 0$.

We now use the randomized rounding scheme and obtain a feasible independent set $S$. Since we have a $u \notin I^*$ such that $x_u > 0$, we get that $\Pr[u \in S] > \frac{x_u}{\alpha} > 0$, and, so, $\Pr[S \neq I^*] > 0$. By monotonicity and linearity of expectation, we get that $\E[w(I^* \setminus S)] > \gamma \E[w(S \setminus I^*)]$. Observe that
\begin{equation*}
    \E[w(I^* \setminus S)] = \sum_{u \in I^*} w_u \Pr[u \notin S] \leq \beta \sum_{u \in I^*} w_u (1 - x_u) = \beta \cdot w(I^*) - \beta \sum_{u \in I^*} w_u x_u,
\end{equation*}
and
\begin{equation*}
    \E[w(S \setminus I^*)] = \sum_{u \in V \setminus I^*} w_u \Pr[u \in S] \geq \frac{1}{\alpha} \sum_{u \in V \setminus I^*} w_u x_u.
\end{equation*}
Putting everything together, we get $w(I^*) - \sum_{u \in I^*} w_u x_u > \sum_{u \in V \setminus I^*} w_u x_u$, which implies that $w(I^*) > \sum_{u \in V} w_u x_u$. This is a contradiction, and so $x$ must indeed be integral.
\end{proof}

%%%%%

%%%%%%%%%%%%%%%% PRANJAL'S PROOFS %%%%%%%%%%%%%%%%%%%%%%%%%%%%%%%%%%%%%

%%%

%%%

%%%%%%%%%%%%%%%%%%%%%%%%%%%%%%%%%%%%%%%%%%%%%%%%%%%%%%%%%%%%%%%%%%%%%%%

%%%

%%%%%%%%%%%%%%%%%%%%%%%%%%%%%%%%%%%%%%%%%%%%%%%%%%%%%%%%%%%%%%%%%%%%%%%%%%%
\subsection{Proofs from Section~\ref{sec:integrality-gaps}}\label{appendix:proofs-integrality-gaps}

\begin{proof}[Proof of Theorem~\ref{thm:VC-estimation}]
We will use a standard trick that is used for turning any good approximation algorithm for Maximum Independent Set to a good approximation algorithm for Minimum Vertex Cover. The trick is based on the fact that, if we solve the standard LP for Independent Set and look at the vertices that are half-integral, then in the induced graph on these vertices, the largest independent set is at most the size of the minimum vertex cover, and thus, any good approximate solution to Independent Set would directly translate to a good approximate solution to Vertex Cover.
	
Let $G=(V,E,w)$ be an $(\alpha\beta)$-stable instance of Vertex Cover and let $X^* \subseteq V$ be its (unique) optimal vertex cover, and $I^* = V \setminus X^*$ be its (unique) optimal independent set. We first solve the standard LP relaxation for \MIS{} and compute an optimal half-integral solution $x$. The solution $x$ naturally partitions the vertex set into three sets, $V_0 = \{u: x_u = 0\}$, $V_{1/2} = \{u: x_u = 1/2\}$ and $V_1 = \{u: x_u = 1\}$. It is well known (see~\cite{Nemhauser1975}) that $V_1 \subseteq I^*$ and $V_0 \cap I^* = \emptyset$. Thus, it is easy to see that $I^* = V_1 \cup I_{1/2}^*$, where $I_{1/2}^*$ is an optimal independent set of the induced graph $G[V_{1/2}]$ (similarly, $X^* = V_0 \cup (V_{1/2} \setminus I_{1/2}^*)$).
	
We now use the simple fact that $N(V_1) = V_0$. By iteratively applying Lemma~\ref{lem:delete-points-inside} for the vertices of $V_1$, we get that $G[V_{1/2}]$ is $(\alpha\beta)$-stable, and so it has a unique optimal independent set $I_{1/2}^*$. Let $X_{1/2}^* = V_{1/2} \setminus I_{1/2}^*$ be the unique optimal vertex cover of $G[V_{1/2}]$. It is easy to see that solution $\{x_u\}_{u \in V_{1/2}}$ (i.e. the solution that assigns value $1/2$ to every vertex) is an optimal fractional solution for $G[V_{1/2}]$. This implies that $w(I_{1/2}^*) \leq \frac{w(V_{1/2})}{2} \leq w(X_{1/2}^*)$.
	
Since $G[V_{1/2}]$ is $(\alpha\beta)$-stable, by Theorem~\ref{thm:ig} we know that the integrality gap of a convex relaxation relaxation for $G[V_{1/2}]$ is at most $\min\{\alpha, \beta / (\beta - 1)\}$. Let $A = \min\{\alpha, \beta / (\beta - 1)\}$, and let $\textrm{FRAC}$ be the optimal fractional cost of the relaxation for $G[V_{1/2}]$, w.r.t.~\MIS{}. Thus, we get that $w(I_{1/2}^*) \geq \frac{1}{A} \cdot \textrm{FRAC}$. From now on, we assume that $\beta > 2$, which implies that $1 \leq A < 2$. We now have
\begin{align*}
    w(V_{1/2}) - \textrm{FRAC} &\geq w(V_{1/2}) - A \cdot w(I_{1/2}^*) = w(V_{1/2}) - w(I_{1/2}^*) - (A - 1) \cdot w(I_{1/2}^*)\\
	                       &\geq w(X_{1/2}^*) - (A - 1) \cdot w(X_{1/2}^*) = (2 - A) \cdot w(X_{1/2}^*).
\end{align*}
We conclude that $w(X_{1/2}^*) \leq \frac{1}{2 - A} \cdot (w(V_{1/2}) - \textrm{FRAC})$. Thus, for any $\beta > 2$,
\begin{align*}
    w(V_0) + (w(V_{1/2}) - \textrm{FRAC})  &\geq w(V_0) + (2 - A) w(X_{1/2}^*) \geq (2 - A) (w(V_0) + w(X_{1/2}^*)\\
	                                   &= (2 - A) w(X^*).
\end{align*}
Since $\frac{1}{2 - A} \leq \frac{\beta - 1}{\beta - 2}$, we get that we have a $\left(1 + \frac{1}{\beta - 2} \right)$-estimation approximation algorithm for Vertex Cover on $(\alpha\beta)$-stable instances. We now combine this algorithm with any 2-approximation algorithm for Vertex Cover, and return the minimum of the two algorithms. This concludes the proof.
\end{proof}

%%%%%%%%%%%%%%%%%%%%%%%%%%%%%%%%%%%%%%%%%%%%%%%%%%%%%%%%%%%%%%%%%%%%%%%%%%%%%
\subsection{Proofs from Section~\ref{sec:certified}}\label{appendix:proofs-certified}

\begin{proof}[Proof of  Theorem \ref{thm:certified-greedy}]
First of all, the greedy algorithm always returns a feasible (and also maximal) solution $S$, because it starts from the empty set and greedily picks elements with maximum weight subject to being feasible. The natural perturbation $w'$ that boosts only the weights of the vertices $v \in S$ by a factor of $\Delta$, is the one we will use here to show that the Greedy is a $\Delta$-certified algorithm. More formally, $w'(v):=w(v)$ if $v\notin S$ and $w'(v):=\Delta\cdot w(v)$ if $v\in S$.
	
All we have to show is that $S$ is the optimal solution under the weight function $w'$. For the sake of contradiction, suppose $S^*$ is the optimum for the perturbed instance with $w'(S^*)> w'(S)$, where $S^*\neq S$. We order the elements in $S,S^*$ in decreasing order based on their weights $w$. We scan the elements in $S$ and let $v\in S$ be the first element that does not appear in $S^*$. Let $Z\sse S^*\sm S$ be a set of vertices, such that the set $(S^*\sm Z)\cup v$ is an independent set. By the bounded degree assumption, we know that $|Z|\le \Delta$ and by the greedy criterion we know that $w(v)\ge w(v^*)$ for any element $v^*\in Z$. Note that $w'(Z)=w(Z)$, since we didn't perturb at all the elements of $Z\sse S^*\sm S$. We conclude that $w'(v):=\Delta\cdot w(v)\ge w'(Z)=w(Z)$.
	
We can continue scanning the ordering in the same manner for all elements $v\in S\sm S^*$, ending with: $w'(S)=w'(S\cap S^*)+w'(S\sm S^*)=\Delta\cdot w(S\cap S^*)+\Delta\cdot \sum_{v\in S\sm S^*}w(v)\ge \Delta\cdot w(S\cap S^*)+\sum_{v^*\in S^*\sm S}w(v^*)=w'(S\cap S^*)+\sum_{v^*\in S^*\sm S}w'(v^*) = w'(S\cap S^*) + w'(S^*\sm S)= w'(S^*)$.
\end{proof}

\section{A greedy $\sqrt{\Delta^2 - \Delta + 1}$-certified algorithm}\label{appendix:improved-greedy}
Here, we introduce a slight variation of the greedy algorithm that gives a $\sqrt{\Delta^2 - \Delta + 1}$-certified algorithm. The improvement is moderate for small values of $\Delta$, and thus, we will present the algorithm for the special case of $\Delta = 3$; the algorithm can then be easily generalized to any degree $\Delta \geq 3$. The algorithm is based on the following lemma.
\begin{lemma}\label{lemma:sqrt-degree-3}
Let $G= (V,E,w)$ be a graph of maximum degree $\Delta = 3$. Let $\gamma = \sqrt{7}$. Let $u$ be a vertex of maximum weight (i.e. $w(u) \geq w(v)$ for every $v \in V$). Then, the following hold:
\begin{enumerate}
    \item Suppose that $|N(u)| \leq 2$. Then, there exists a $\gamma$-perturbation $G' = (V,E,w')$ with $w_u' = \gamma \cdot w_u$ and $w_v' = w_v$ for every $v \in N(u)$, and a maximum independent set $I'$ of $G'$, such that $u \in I'$.
    \item Suppose that $|N(u)| = 3$ and that $N(u)$ is not an independent set (i.e. there is at least one edge between its vertices). Then, there exists a $\gamma$-perturbation $G' = (V,E,w')$ with $w_u' = \gamma \cdot w_u$ and $w_v' = w_v$ for every $v \in N(u)$, and a maximum independent set $I'$ of $G'$, such that $u \in I'$.
    \item Suppose that $|N(u)| = 3$ and $N(u)$ is an independent set. In this case, if $\gamma \cdot w_u \geq w(N(u))$, there exists a $\gamma$-perturbation $G' = (V,E,w')$ with $w_u' = \gamma \cdot w_u$ and $w_v' = w_v$ for every $v \in N(u)$, and a maximum independent set $I'$ of $G'$ such that $u \in I'$. Otherwise, there exists a $\gamma$-perturbation $G' = (V,E,w')$ with $w_v' = \gamma \cdot w_v$ for every $v \in N(u)$, $w_q' = w_q$ for every $q \in N(N(u))$, and a maximum independent set $I'$ of $G'$ such that $N(u) \subseteq I'$.
\end{enumerate}
\end{lemma}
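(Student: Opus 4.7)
The plan is to handle each of the three cases by a swap argument: starting from a maximum independent set $I^*$ of $G$, modify it to obtain a candidate IS $I'$ with the desired structure, and verify that $I'$ is maximum in the specified perturbation $G'$. All three cases rely crucially on $u$ having maximum weight, so that $w_v \leq w_u$ for every $v \in V$, together with the specific algebraic fact $\gamma^2 = 7$.

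For Cases 1, 2, and 3(a) the perturbation boosts only $u$ by $\gamma$ and keeps $w_v' = w_v$ on $N(u)$. If $u \in I^*$, set $I' = I^*$; otherwise let $A = N(u) \cap I^*$ and define $I' = (I^* \setminus A) \cup \{u\}$, which is an independent set since $A$ absorbs every conflict $u$ has with $I^*$. The weight change in $G'$ equals $\gamma w_u - w(A)$, so it suffices to show $w(A) \leq \gamma w_u$. In Case 1, $|A| \leq |N(u)| \leq 2$ gives $w(A) \leq 2 w_u < \sqrt{7}\, w_u$. In Case 2, since $N(u)$ contains an edge and $|N(u)| = 3$, every independent subset of $N(u)$ has size at most $2$, and again $w(A) \leq 2 w_u < \gamma w_u$. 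In Case 3(a), $w(A) \leq w(N(u)) \leq \gamma w_u$ is just the hypothesis. Because the perturbation leaves all weights outside $\{u\}$ unchanged, any IS of $G'$ avoiding $u$ has the same weight as in $G$ (hence at most $w(I^*)$), while the modified $I'$ has weight $w'(I') = w'(I^*) + (\gamma w_u - w(A)) \geq w(I^*)$; therefore some maximum IS of $G'$ contains $u$.

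Case 3(b) is the technical heart. Here $\gamma w_u < w(N(u))$ and the perturbation boosts all of $N(u)$ by $\gamma$ while pinning the weights of $N(N(u))$. The natural candidate is $I^\circ = N(u) \cup (I^* \setminus N[N(u)])$, which is an IS because $N(u)$ is independent and has no edges into $V \setminus N[N(u)]$. Writing $X = N(u) \setminus I^*$ and $Y = I^* \cap N(N(u))$, the weight change from $I^*$ to $I^\circ$ in $G'$ equals $\gamma w(X) - w(Y)$. I would argue that no IS $J$ of $G$ satisfies $w'(J) > w'(I^\circ)$ by contradiction, splitting on whether $u \in J$ and on $A := J \cap N(u)$. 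The structural ingredients are: each $v \in N(u)$ has at most two neighbors in $N(N(u)) \setminus \{u\}$, so $|Y \setminus \{u\}| \leq 6$; $N(u) \cap N(N(u)) = \emptyset$ since $N(u)$ is independent; and every weight in $N(N(u))$ is at most $w_u$. For example, if $u \in J$ then $J \cap N(u) = \emptyset$, and the swap replacing $\{u\} \cup (J \cap N(N(u)) \setminus \{u\})$ by $N(u)$ changes the weight by at least $\gamma w(N(u)) - 7 w_u > \gamma^2 w_u - 7 w_u = 0$, contradicting optimality. The hard part will be closing the remaining subcase $u \notin J$, $A \neq \emptyset$, $A \subsetneq N(u)$, where one must combine the local single-vertex swaps (each $v_i \in N(u) \setminus A$ gives $\gamma w_{v_i} \leq w(N(v_i) \cap J)$) with the hypothesis $w(N(u)) > \gamma w_u$ to force $\gamma w(X) \geq w(Y)$; this bookkeeping is where the numerical value $\gamma = \sqrt{\Delta^2 - \Delta + 1}$ becomes essential.
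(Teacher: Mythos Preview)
For Cases 1, 2, and 3(a), your swap argument is correct and essentially matches the paper's, though the paper phrases it slightly more directly: take a maximum independent set $I'$ of the \emph{perturbed} graph $G'$ (rather than of $G$), and if $u \notin I'$ replace $I' \cap N(u)$ by $u$. Since the perturbation keeps $N(u)$ at its original weight and $w'_u = \gamma w_u \geq w'(I' \cap N(u))$ in each of these cases, the resulting set is again optimal in $G'$ and contains $u$. Starting from $G'$ rather than $G$ spares you the separate check that no $J$ avoiding $u$ beats your candidate.

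For Case 3(b), the route you have chosen is what creates the difficulty you flag at the end. Building a candidate $I^\circ$ from the optimal $I^*$ of $G$ and then trying to beat every competitor $J$ in $G'$ forces exactly the subcase analysis on $J \cap N(u)$ that you cannot close. The paper avoids all of this by again working directly in $G'$: let $I'$ be a maximum independent set of $G'$ and suppose $N(u) \not\subseteq I'$. Form the single swap $\widetilde{I} = (I' \setminus N(N(u))) \cup N(u)$; this is independent because every neighbor of $N(u)$ lies in $N(N(u))$, and the paper bounds
\[
w'(\widetilde{I}) \;\geq\; w'(I') - w'(N(N(u))) + w'(N(u)) \;\geq\; w'(I') - 7w_u + \gamma\, w(N(u)) \;>\; w'(I') - 7w_u + \gamma^2 w_u \;=\; w'(I'),
\]
using $|N(N(u))| \leq 1 + 3\cdot 2 = 7$, that every $q \in N(N(u))$ has $w'_q = w_q \leq w_u$, and the hypothesis $w(N(u)) > \gamma w_u$. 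This contradicts optimality of $I'$. Notice that this is precisely the swap you already wrote down in your ``$u \in J$'' subcase; the insight you are missing is that the paper applies the same global replacement and the same $7w_u$ bound \emph{without} conditioning on how $I'$ intersects $N(u)$, so no further subcases arise and the ``hard part'' you anticipate simply does not appear.
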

\begin{proof}
\hfill

\noindent 1. Let $G' = (V,E,w')$ such that $w'$ is any $\gamma$-perturbation that sets $w_u' = \gamma \cdot w_u$, and $w_v' = w_v$, for $v \in N(u)$. Then, we have that $w_u' = \gamma \cdot w_u > 2 w_u \geq w(N(u)) = w'(N(u))$. Let $I'$ be an optimal independent set of $G'$. If $u \notin I'$, then it means that $N(u) \cap I' \neq \emptyset$. It is easy to see that $(I' \setminus N(u)) \cup \{u\}$ is a feasible independent set of $G'$ whose weight is at least as large as $w'(I')$. Thus, it is an optimal independent set of $G'$.\\

\noindent 2. With a similar argument, one can prove that, even if $|N(u)| = 3$, in the case where $N(u)$ is not an independent set, the $\gamma$-perturbation $G'$, as defined in the previous case, must have an optimal independent set that contains $u$.\\

\noindent 3. Let $N(u) = \{v_1, v_2, v_3\}$, and suppose that there is no edge between the vertices of $N(u)$. We distinguish between the cases stated in the lemma:
\begin{itemize}
    \item $\gamma \cdot w_u \geq w_{v_1} + w_{v_2} + w_{v_3}$. Let $G' = (V, E,w')$ where $w'$ is any $\gamma$-perturbation that sets $w_u' = \gamma \cdot w_u$, and $w_v' = w_v$, for $v \in N(u)$. Let $I'$ be an optimal independent set of $G'$. If $u \in I'$, we are done. So, suppose that $u \notin I'$. Then, we must have $N(u) \cap I' \neq \emptyset$. We know that $w_u' \geq w(N(u)) = w'(N(u))$, and so, the set $(I' \setminus N(u)) \cup \{u\}$ is a feasible independent set whose weight is at least as large as $w'(I)$. Thus, it is an optimal independent set of $G'$.
    \item $\gamma \cdot w_u < w_{v_1} + w_{v_2} + w_{v_3}$. Let $G' = (V, E,w')$ where $w'$ is any $\gamma$-perturbation that sets $w_v' = \gamma \cdot w_v$ for every $v \in N(u)$, and $w_q' = w_q$, for $q \in N(N(u))$. Let $I'$ be an optimal independent set of $G'$ and suppose that $N(u) \not \subseteq I'$. We now consider the set $\widetilde{I} = (I' \setminus N(N(u))) \cup N(u)$. It is easy to see that $\widetilde{I}$ is a feasible independent set of $G'$. We have $w'(\widetilde{I}) \geq w'(I') - w'(N(N(u)) + w'(N(u)) \geq w'(I') - 7w_u + \gamma \cdot w(N(u)) > w'(I') - 7 w_u + \gamma^2 \cdot w_u > w'(I')$, where we used the fact that $u$ is a vertex of maximum weight in $G$, and $|N(N(u))| \leq 7$, since the maximum degree is $3$. Thus, we conclude that we must have $N(u) \subseteq I'$.
\end{itemize}
\end{proof}

The above lemma suggests an obvious greedy algorithm that runs in time $O(n\log n)$. Let $S$ be the independent set computed by the algorithm. We modify the algorithm so that it returns the independent set $S$ along with the $\gamma$-perturbation $G'=(V,E,w')$, where $w_u' = \gamma \cdot w_u$ for every $u \in S$, and $w_u' = w_u$, otherwise. It is easy to see that this is a $\sqrt{7}$-certified algorithm.
\begin{algorithm}[h]
\noindent \texttt{Modified-Greedy}($G$):
\begin{enumerate}
    \item Let $u \in V$ be a vertex of maximum weight.
    \item Let $V_0 = V \setminus (\{u\} \cup N(u))$ and $V_1 = V \setminus (N(u) \cup N(N(u))$.
    \item Using Lemma~\ref{lemma:sqrt-degree-3}:\\
            \hspace*{20pt}if $u$ is picked, then return $\{u\} \cup \texttt{Modified-Greedy}(G[V_0])$,\\
            \hspace*{20pt}else return $N(u) \cup \texttt{Modified-Greedy}(G[V_1])$.
\end{enumerate}
\caption{A modified greedy $\sqrt{7}$-certified algorithm for $\Delta = 3$.}
\label{alg:modified-greedy-certified}
\end{algorithm}

\begin{observation}
Algorithm~\ref{alg:modified-greedy-certified} generalizes to arbitrary maximum degree $\Delta \geq 3$, and it is a $\sqrt{\Delta^2 - \Delta + 1}$-certified algorithm that runs in time $\widetilde{O}(\Delta \cdot n)$.
\end{observation}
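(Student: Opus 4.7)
The plan is to generalize Lemma~\ref{lemma:sqrt-degree-3} from $\Delta=3$ to arbitrary $\Delta\ge 3$ with threshold $\gamma=\sqrt{\Delta^2-\Delta+1}$, then lift this local guarantee to a global correctness proof by induction on $|V|$, and finally bound the running time.

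First I would state and prove the generalized structural lemma. For a maximum-weight vertex $u$ of $G$, the claim is that there is a $\gamma$-perturbation that boosts either $\{u\}$ or $N(u)$ and admits an optimum independent set containing the boosted set. The case analysis parallels Lemma~\ref{lemma:sqrt-degree-3}. If $|N(u)|\le\Delta-1$, then $w(N(u))\le (\Delta-1)w_u\le \gamma w_u$ since $\gamma^2\ge (\Delta-1)^2$, and boosting $u$ alone together with the swap $I'\mapsto(I'\setminus N(u))\cup\{u\}$ yields an optimum containing $u$. If $|N(u)|=\Delta$ and $N(u)$ contains at least one edge, then any independent subset of $N(u)$ has size at most $\Delta-1$, and the same swap works. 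If $|N(u)|=\Delta$ and $N(u)$ is independent, I split on whether $\gamma w_u\ge w(N(u))$: the ``yes'' branch is handled as above, whereas in the ``no'' branch we boost every vertex of $N(u)$ by $\gamma$, and for any optimum $I^*$ of the resulting perturbation with $N(u)\not\subseteq I^*$, the set $\widetilde{I}=(I^*\setminus N(N(u)))\cup N(u)$ is independent (since $N(u)$'s external neighbors all lie in $\{u\}\cup N(N(u))\subseteq N(N(u))$) and satisfies
\begin{equation*}
w'(\widetilde{I})-w'(I^*) \;\ge\; \gamma\cdot w(N(u)) - w(N(N(u))) \;>\; \gamma^2 w_u - (\Delta^2-\Delta+1)\,w_u \;=\;0,
\end{equation*}
using $|N(N(u))|\le 1+\Delta(\Delta-1)=\gamma^2$ and the sub-case assumption $w(N(u))>\gamma w_u$.

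Next I would establish correctness of the generalized recursive algorithm by induction on $|V|$. Writing $S$ for the returned set and defining $w'$ globally by boosting every vertex of $S$ by $\gamma$, I claim $S$ is optimal in $(V,E,w')$. The key compatibility observation is that the boost the lemma attaches to the first step is supported on $\{u\}\cup N(u)$, while the boosts coming from the recursion live entirely in $V_0=V\setminus(\{u\}\cup N(u))$ or in $V_1=V\setminus(N(u)\cup N(N(u)))$; hence on the vertices relevant to the lemma's swap, the global $w'$ coincides with the local perturbation. Given any independent set $I^*\ne S$, applying the lemma's swap produces an independent $I^{**}$ of at least the same $w'$-weight that contains the committed vertex or set, and the inductive hypothesis for the residual instance then yields $w'(I^{**})\le w'(S)$.

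For the running time, I would maintain a max-heap keyed by weight alongside adjacency lists. Each vertex is inserted, popped, and deleted at most once, contributing $O(n\log n)$ to the heap cost. The combinatorial work at each iteration reduces to inspecting $N(u)$ and, when needed, $N(N(u))$ together with the edges inside $N(u)$; amortized over the whole execution each edge is inspected $O(1)$ times, giving $O(|E|)=O(n\Delta)$ and hence $\widetilde{O}(\Delta\cdot n)$ overall. The main technical obstacle is really just pinning down the inequality $\gamma^2\ge|N(N(u))|$ that forces $\gamma=\sqrt{\Delta^2-\Delta+1}$; once that is in hand, the rest is a direct generalization of the $\Delta=3$ case together with a standard priority-queue argument.
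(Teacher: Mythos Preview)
The inductive lift and the running-time argument are reasonable, but the key exchange inequality in the ``no'' branch of case~3 is not valid. You assert
\begin{equation*}
w'(\widetilde I)-w'(I^*)\ \ge\ \gamma\cdot w(N(u))-w(N(N(u))),
\end{equation*}
whereas a direct computation (using $\widetilde I=(I^*\setminus N(N(u)))\cup N(u)$ and $N(u)\cap N(N(u))=\emptyset$) gives
\begin{equation*}
w'(\widetilde I)-w'(I^*)\ =\ \gamma\, w\bigl(N(u)\setminus I^*\bigr)-w\bigl(I^*\cap N(N(u))\bigr).
\end{equation*}
When the optimum $I^*$ of the perturbed instance already contains part of $N(u)$ (which is precisely what happens whenever $u\notin I^*$), the gain is only $\gamma\,w(N(u)\setminus I^*)$, not $\gamma\,w(N(u))$, and the swap can lose. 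A concrete witness already for $\Delta=3$: take $w_u=1$, let $N(u)=\{v_1,v_2,v_3\}$ be independent with weights $1-\epsilon,\,1-\epsilon,\,0.65$, give $v_3$ two pendant neighbours $a,b$ of weight $1-\epsilon$, and give $v_1,v_2$ very light pendant neighbours. Then $u$ is the unique heaviest vertex and $\gamma w_u=\sqrt7<2.65-2\epsilon=w(N(u))$, so the algorithm commits to $S=\{v_1,v_2,v_3\}$; but under the perturbation that boosts $S$ by $\gamma=\sqrt7$, the independent set $\{v_1,v_2,a,b\}$ has weight $(2\gamma+2)(1-\epsilon)>\gamma(2.65-2\epsilon)=w'(S)$, so $S$ is not optimal and the $\sqrt7$-certification fails.

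The paper does not actually prove this Observation, and the very same defective step appears in the paper's own proof of Lemma~\ref{lemma:sqrt-degree-3} (the first inequality in its last sub-case). So generalising that lemma verbatim does not yield a proof; to repair the argument you would need either a different exchange set (for instance, removing only the neighbours of $N(u)\setminus I^*$ and proving a matching inequality), a different decision rule at $u$, or a revised constant, in order to handle optima that already contain a large portion of $N(u)$.
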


\section{$p$-extendible systems and greedy certified algorithms}\label{sec:extend}
In this section we extend some of our results to a more general family of maximization problems under $p$-extendible systems, that include \MIS{} as a special case.

\subsection{Definitions}
We start with some preliminary definitions that will be used throughout this section:
\begin{itemize}
\item $p$-systems: Suppose we are given a (finite) ground set $X$ of $m$ elements and we are also given an \textit{independence family} $\I\sse 2^X$, a family of subsets that is downward closed; that is, $A \in \I$ and $B\sse A$ imply that $B\in \I$. A set $A$ is independent iff $A\in \I$. For a set $Y\sse X$, a set $J$ is called a \textit{base} of $Y$ if $J$ is a maximal independent subset of $Y$; in other words $J\in \I$ and for each $e\in Y\sm J$, $J+e\not\in \I$. Note that $Y$ may have multiple bases and that a base of $Y$ may not be a base of a superset of $Y$. $(X,\I)$ is said to be a $p$-system if for each $Y\sse X$ the following holds:
\begin{equation*}
    \dfrac{\max_{J: J\ \text{is a base of}\ Y} |J|}{\min_{J: J\ \text{is a base of}\ Y}|J|} \le p.
\end{equation*}

There are some interesting special cases of $p$-systems (intersection of $p$ matroids, $p$-circuit-bounded and $p$-extendible families), however here our main focus will be on $p$-extendible systems.

\item An independence system $(X,\I)$ is $p$-extendible if the following holds: suppose $A\sse B, A,B \in \I$ and $A+e\in \I$, then there is a set $Z\sse B\sm A$ such that $|Z|\le p$ and $B\sm Z +e \in \I$. We note here that $p$-extendible systems make sense only for integer values of $p$, whereas $p$-systems can have $p$ being fractional.  
\item
Greedy Algorithm: Greedy starts with the empty set and greedily picks elements of $X$ that will increase its objective value by the most, while remaining feasible (according to $\I$). It is a well-known fact, that for any $p$-system, if we want to find a feasible solution $S^*\in \I$ of maximum value $f(S^*)$, then the standard greedy algorithm is a good approximation. If the weight function $f$ is additive then greedy is a $p$-approximation. If the weight function is submodular, then greedy becomes a $(p+1)$-approximation.
\end{itemize}

\subsection{Certified greedy for $p$-extendible systems}
\begin{theorem}
The greedy algorithm is a $p$-certified algorithm for any instance of a $p$-extendible system.
\end{theorem}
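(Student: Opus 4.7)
The plan is to adapt the proof template of Theorem~\ref{thm:certified-greedy}, replacing the bounded-degree neighborhood argument with an exchange argument powered by the $p$-extendibility property. First, observe that Greedy terminates with a maximal set $S \in \I$, because it stops only when no feasible element remains. As in Theorem~\ref{thm:certified-greedy}, I consider the natural perturbation $w'(v) := p \cdot w(v)$ if $v \in S$ and $w'(v) := w(v)$ otherwise, so it suffices to show that $S$ is optimal under $w'$.

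Suppose for contradiction some feasible $T \neq S$ satisfies $w'(T) > w'(S)$. I would construct a sequence $T = T_0, T_1, \dots, T_m = S$ of feasible sets with $w'(T_j) \ge w'(T_{j-1})$ and $|T_j \cap S|$ strictly increasing, contradicting $w'(T) > w'(S)$. For the swap step, given $T_{j-1}$ with $S \not\sse T_{j-1}$, let $v$ be the maximum-weight element of $S \sm T_{j-1}$; equivalently, $v$ is the first element of $S \sm T_{j-1}$ in Greedy's order, which forces every Greedy pick preceding $v$ to lie in $S \cap T_{j-1}$ (call this prefix $S_{<v}$). Applying $p$-extendibility with $A = S \cap T_{j-1}$, $B = T_{j-1}$, and $e = v$ -- valid because $A \sse B \in \I$ and $A + v \sse S \in \I$ -- yields a set $Z \sse T_{j-1} \sm S$ of size at most $p$ such that $T_{j-1} \sm Z + v \in \I$. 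Defining $T_j := T_{j-1} \sm Z + v$ produces a feasible set, and since $Z \cap S = \emptyset$, the intersection $T_j \cap S$ gains exactly the new element $v$.

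The required weight inequality $w(v) \ge w(z)$ for every $z \in Z$ follows from Greedy: each $z \in Z \sse T_{j-1}$ satisfies $S_{<v} + z \sse T_{j-1} \in \I$, so by downward closure $z$ was a feasible candidate at the step Greedy picked $v$, forcing $w(v) \ge w(z)$. Consequently, $w'(T_j) - w'(T_{j-1}) = p \cdot w(v) - w(Z) \ge p \cdot w(v) - |Z| \cdot w(v) \ge 0$. After finitely many swaps we reach $T_m$ with $S \sse T_m$, and the maximality of $S$ combined with downward closure forces $T_m = S$, giving $w'(S) \ge w'(T)$ -- the desired contradiction. The subtle point is the choice of the anchor set $A = S \cap T_{j-1}$ in the $p$-extendibility application: the weaker choice $A = \emptyset$ would only guarantee $Z \sse T_{j-1}$, allowing the swap to evict elements of $S$ that are slated to be added in future rounds and thus destroying the monotonic progress toward $S$.
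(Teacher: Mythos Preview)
Your proof is correct and follows essentially the same exchange argument as the paper: both use the perturbation $w'(v) = p \cdot w(v)$ for $v \in S$, process the elements of $S$ in Greedy's order, and invoke $p$-extendibility with anchor set $A = S \cap T_{j-1}$ to swap each $v \in S \sm T_{j-1}$ for a set $Z \sse T_{j-1} \sm S$ of at most $p$ elements, each of weight at most $w(v)$. Your version is more explicit about the inductive bookkeeping (tracking the sequence $T_0, T_1, \dots$, verifying $S_{<v_j} \sse T_{j-1}$, and using maximality of $S$ to conclude $T_m = S$), whereas the paper compresses all of this into the phrase ``continue scanning in the same manner'' and a single chain of inequalities, but the underlying argument is identical.
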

\begin{proof}
First of all, the Greedy always returns a feasible (and also maximal) solution $S$, because it starts from the empty set and greedily picks elements with
maximum weight subject to being feasible. The natural perturbation $w'$ that boosts only the weights of the elements $e \in S$ by a factor of $p$, is the one we will use here to show that the Greedy is a $p$-certified algorithm. More formally, $w'(e):=w(e)$ if $e\notin S$ and $w'(e):=p\cdot w(e)$ if $e\in S$.

All we have to show is that $S$ is the optimal solution under the weight function $w'$. For the sake of contradiction, suppose $S^*$ is the optimum for the perturbed instance with $w'(S^*)> w'(S)$, where $S^*\neq S$. We order the elements in $S,S^*$ in decreasing order based on their weights $w'$. We scan the elements in $S$ and let $e\in S$ be the first element that does not appear in $S^*$. Let $Z\sse S^*\sm S$ be a set of elements such that $(S^*\sm Z)\cup e \in \I$. By the $p$-extendibility property we know that $|Z|\le p$ and by the greedy criterion we know that $w(e)>w(e^*)$ for any element $e^*\in Z$. Note that $w'(Z)=w(Z)$, since we didn't perturb at all the elements of $Z\sse S^*\sm S$. We conclude that $w'(e):=p\cdot w(e)\ge w'(Z)=w(Z)$.

We can continue scanning the ordering in the same manner for all elements $e\in S\sm S^*$, ending with: $w'(S)=w'(S\cap S^*)+w'(S\sm S^*)=p\cdot w(S\cap S^*)+p\cdot \sum_{e\in S\sm S^*}w(e)\ge p\cdot w(S\cap S^*)+\sum_{e^*\in S^*\sm S}w(e^*)=w'(S\cap S^*)+\sum_{e^*\in S^*\sm S}w'(e^*) = w'(S\cap S^*) + w'(S^*\sm S)= w'(S^*)$.
\end{proof}

The above theorem is tight for the greedy algorithm, as the following proposition suggests:
\begin{proposition}
There exist $p$-extendible systems where greedy cannot be $(p-\epsilon)$-certified.
\end{proposition}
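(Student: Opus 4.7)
The plan is to exhibit, for each integer $p \geq 2$ and each $\epsilon > 0$, a concrete instance of a $p$-extendible system on which the greedy algorithm returns a solution that cannot be rendered optimal by any $(p-\epsilon)$-perturbation of the weights. The natural candidate is \MIS{} on the star $K_{1,p}$, which is $p$-extendible since \MIS{} on any graph of maximum degree $p$ is $p$-extendible: to insert a new vertex $e$ into a current independent set, we must evict only the neighbors of $e$, of which there are at most $p$.

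Concretely, I would let $c$ be the center of $K_{1,p}$ and $v_1,\ldots,v_p$ its leaves, with $w(c)=1$ and $w(v_i)=1-\delta$ for each $i$, where $\delta>0$ is a small parameter chosen later. The greedy algorithm picks the element of maximum weight subject to feasibility, which is $c$; afterwards every leaf is adjacent to $c$ and hence infeasible, so greedy outputs $S=\{c\}$. The alternative independent set $T=\{v_1,\ldots,v_p\}$ has weight $p(1-\delta)$, which exceeds $1$ once $\delta<1-1/p$.

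The certification step is a direct weight comparison. Suppose, toward contradiction, that greedy were $(p-\epsilon)$-certified on this instance, witnessed by a perturbation $w'$ with $w(e)\le w'(e)\le (p-\epsilon)\,w(e)$ under which $S$ is optimal. Then $w'(c)\ge w'(T)$, while the perturbation bounds give $w'(c)\le p-\epsilon$ and $w'(T)\ge p(1-\delta)$. Combining these yields $p-\epsilon\ge p(1-\delta)$, i.e. $\delta\ge \epsilon/p$. Choosing $\delta < \min\{\epsilon/p,\,1-1/p\}$ yields a contradiction, proving that greedy is not $(p-\epsilon)$-certified.

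The only step needing verification is the claim that \MIS{} on $K_{1,p}$ forms a $p$-extendible system, which is immediate from the general observation above. There is no substantive obstacle; the star is the textbook tight example for greedy \MIS{}, and the computation above quantifies its sharpness at the level of certification, matching the upper bound in Theorem~\ref{thm:certified-greedy} up to the $\epsilon$ slack.
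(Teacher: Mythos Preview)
Your proof is correct. The construction and the certification argument both go through exactly as you claim: \MIS{} on $K_{1,p}$ is $p$-extendible (and not $(p-1)$-extendible, since adding the center to the full leaf set requires evicting all $p$ leaves), greedy picks the center, and any $(p-\epsilon)$-perturbation satisfies $w'(c)\le p-\epsilon < p(1-\delta)\le w'(T)$ once $\delta<\epsilon/p$.

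The paper takes a different concrete route: it uses maximum weight matching (a $2$-extendible system) on a path of three edges with weights $(1,\,1+\epsilon',\,1)$, where greedy takes the middle edge and misses the two outer ones, and then remarks that the example generalizes to $p$-dimensional matching for arbitrary $p$. Your star example is arguably cleaner, since it handles all $p\ge 2$ uniformly without needing to pass to hypergraph matching, and it also ties directly back to the paper's main object of study (\MIS{}). Both arguments are the same in spirit---one heavy element blocks $p$ slightly lighter ones, making the ratio arbitrarily close to $p$---so neither buys anything the other doesn't, beyond a matter of taste in which $p$-extendible system to instantiate.
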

\begin{proof}
A special case of a 2-extendible system is the problem of maximum weighted matching. Consider a path of length 3 with weights $(1,1+\epsilon',1)$.
The Greedy fails to recover a certified solution if we have picked $\epsilon'$ small enough ($\epsilon'<\tfrac{\epsilon}{2-\epsilon}$). 
The proposition follows since a similar example for any value of $p$ (e.g. $p$-dimensional matching problem) and with arbitrarily large size can be constructed by repeating it.
\end{proof}

The following proposition highlights the importance of the $p$-extendibility property exploited by the greedy algorithm, by proving that greedy cannot generally be a certified algorithm for the immediate generalization of $p$-extendible systems, which are called $p$-systems:
\begin{proposition}
For $p$-systems greedy fails to be $M$-certified (for arbitrary $M>1$).
\end{proposition}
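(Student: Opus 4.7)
Proof plan: The claim is a negative result, so the plan is to exhibit a simple family of $p$-systems on which the greedy output cannot be certified by any perturbation of bounded factor. The construction that comes to mind is the classic ``one heavy element blocks many light elements'' gadget, which fails $p$-extendibility (the extendibility parameter collapses to $n$) but is a perfectly legitimate $p$-system with $p = n$.

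Construction: Let $X = \{e_0, e_1, \ldots, e_n\}$ and let $\I$ consist of $\emptyset$, $\{e_0\}$, and every subset of $\{e_1, \ldots, e_n\}$. This family is downward closed, and the maximal independent subsets of $X$ are $\{e_0\}$ (size $1$) and $\{e_1, \ldots, e_n\}$ (size $n$); a quick check over every $Y \sse X$ shows that the max-to-min base ratio is at most $n$, so $(X, \I)$ is an $n$-system. Assign weights $w(e_0) = 1 + \varepsilon$ for a small $\varepsilon > 0$ and $w(e_i) = 1$ for $i \geq 1$. Greedy picks the heaviest element $e_0$ first and can then add nothing else (no independent set properly contains $\{e_0\}$), so it returns $S = \{e_0\}$.

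Given any target $M > 1$, the plan is to choose $n > M(1+\varepsilon)$ and derive a contradiction to the existence of an $M$-perturbation certifying $S$. Any such $w'$ must satisfy $w'(e_0) \leq M(1+\varepsilon)$ and $w'(e_i) \geq 1$ for every $i \geq 1$, hence
\[
    w'(\{e_0\}) \;\leq\; M(1+\varepsilon) \;<\; n \;\leq\; \sum_{i=1}^{n} w'(e_i) \;=\; w'(\{e_1, \ldots, e_n\}),
\]
so $\{e_0\}$ is strictly dominated by the independent set $\{e_1, \ldots, e_n\}$ under $w'$ and cannot be optimal. Therefore greedy is not $M$-certified on this instance, and since $M$ was arbitrary, the proposition follows.

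The only real ``obstacle'' is bookkeeping: one has to verify that $\I$ is downward closed and that the base-ratio bound is achieved for every $Y \sse X$, both of which are immediate. Conceptually, the example isolates precisely what $p$-extendibility contributes in the proof of Theorem~\ref{thm:certified-greedy}: the local exchange guarantee allows a single boosted element of $S$ to dominate at most $p$ competitors in the perturbed instance, a structural property that general $p$-systems simply do not possess, which is why the certification factor becomes unbounded.
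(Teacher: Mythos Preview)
Your argument is correct and is essentially the same construction the paper invokes: the paper's one-line proof cites a knapsack-based counterexample, and your system is precisely a knapsack instance (give $e_0$ size $n$, each $e_i$ size $1$, budget $n$). Both exploit the same phenomenon---one slightly heavier element fills the ``budget'' and blocks an arbitrarily large feasible set---so there is nothing substantively different to compare.
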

\begin{proof}
The counterexample is the same as in~\cite{DBLP:conf/esa/ChatziafratisRV17} and it is based on a knapsack constraint. 
\end{proof}

\section{The framework of Makarychev and Makarychev~\cite{MM18} for certified algorithms}\label{appendix:certified}

In this section, we describe the framework of Makarychev and Makarychev~\cite{MM18} for designing certified algorithm by using convex relaxations, which is inspired by the framework of Makarychev et al.~\cite{DBLP:conf/soda/MakarychevMV14} for solving weakly-stable instances. Since certified algorithms also ``solve" weakly-stable instances, we provide here the definition of weak stability.

\begin{definition}[weak stability~\cite{DBLP:conf/soda/MakarychevMV14}]\label{def:weakly-stable}
Let $G = (V,E,w)$ be an instance of \MIS{} with a unique optimal solution $I^*$. Let $\mathcal{N}$ be a set of feasible independent sets of $G$ such that $I^* \in \mathcal{N}$, and let $\gamma \geq 1$. The instance is $(\gamma, \mathcal{N})$-weakly stable if for every $\gamma$-perturbation $G' = (V,E,w')$, we have $w'(I^*) > w'(S)$, for every independent set $S \notin \mathcal{N}$. Equivalently, the instance is $(\gamma, \mathcal{N})$-weakly stable if $w(I^* \setminus S) > \gamma \cdot w(S \setminus I^*)$ for every independent set $S \notin \mathcal{N}$.
\end{definition}
In the above definition, the set $\mathcal{N}$ can be thought of as a neighborhood of feasible solutions of $I^*$, and the definition in that case implies that the optimal solution might change, but not too much. The algorithmic task then is to find a solution $S \in \mathcal{N}$; note that we are not given the set $\mathcal{N}$. Observe that a $\gamma$-stable instance of \MIS{} whose optimal solution is $I^*$ is $(\gamma, \{I^*\})$-weakly stable.

We now state a simple observation.
\begin{observation}
A $\gamma$-certified algorithm returns a solution $S \in \mathcal{N}$, when run on a $(\gamma, \mathcal{N})$-weakly stable instance.
\end{observation}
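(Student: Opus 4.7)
The plan is to unwind the two definitions and obtain the conclusion by a one-line contradiction. First, I would invoke Definition~\ref{def:certified}: a $\gamma$-certified algorithm, when run on $G=(V,E,w)$, produces a feasible independent set $S \subseteq V$ together with a $\gamma$-perturbation $G'=(V,E,w')$ of $G$ such that $S$ is a maximum independent set of $G'$. In particular, $w'(S) \geq w'(I)$ for every independent set $I$ of $G$, and so in particular $w'(S) \geq w'(I^*)$, where $I^*$ is the unique optimum of the original $(\gamma,\mathcal{N})$-weakly stable instance.

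Next I would assume, for contradiction, that $S \notin \mathcal{N}$. By Definition~\ref{def:weakly-stable}, weak stability says precisely that for the \emph{same} $\gamma$-perturbation $w'$ produced by the certified algorithm, any independent set not in $\mathcal{N}$ is strictly beaten by $I^*$, i.e., $w'(I^*) > w'(S)$. This directly contradicts the inequality $w'(S) \geq w'(I^*)$ from the previous step, so we must have $S \in \mathcal{N}$, as claimed.

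The argument is essentially immediate once one matches the quantifiers: the certified algorithm guarantees optimality of $S$ in \emph{some} $\gamma$-perturbation, and weak stability guarantees sub-optimality of every set outside $\mathcal{N}$ in \emph{every} $\gamma$-perturbation — in particular the one produced by the algorithm. There is no real obstacle; the only thing to be careful about is to use the perturbation $w'$ that the algorithm actually outputs (rather than the unperturbed weights $w$) when applying the weak stability condition, so that the two inequalities are directly comparable.
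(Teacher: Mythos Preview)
Your argument is correct and is exactly the intended reasoning. The paper does not provide an explicit proof of this observation (it is stated without proof as a ``simple observation''), so there is nothing further to compare.
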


We are now ready to present the framework of Makarychev and Makarychev~\cite{MM18}. Let $G=(V,E,w)$ be an instance of $\MIS{}$ and $w: V \to \{1, ..., W\}$, for some integer $W = \mathrm{poly}(n)$, where $n = |V|$. In this setting, we will prove Theorem~\ref{thm:MM-certified}, but before that, we prove the following lemma.

\begin{lemma}[\cite{DBLP:conf/soda/MakarychevMV14, DBLP:conf/stoc/AngelidakisMM17, MM18}]\label{lemma:weakly-stable}
Let $x: V \to [0,1]$ be an optimal fractional solution of a convex relaxation of \MIS{} whose objective value for an instance $G = (V, E,w)$ is $\sum_{u \in V} w_u x_u$. Suppose that there exists a polynomial-time $(\alpha, \beta)$ rounding for $x$ that returns a feasible independent $S$. Then, there is an algorithm that, for any $\varepsilon > 0$, given an instance of \MIS{} and a feasible independent set $S$, does the following with probability at least $\frac{1}{2}$:
\begin{itemize}
    \item if there exists an independent set $I$ such that $w(I \setminus S) > \gamma \cdot w(S \setminus I)$, then it finds an independent set $S'$ such that
        \begin{equation*}
            w(I) - w(S') \leq \left(1 - \frac{\varepsilon}{2\alpha(\alpha\beta + \varepsilon)} \right) \left(w(I) - w(S) \right),
        \end{equation*}
    \item if $w(I \setminus S) \leq \gamma \cdot w(S \setminus I)$ for every independent set $I$, it either returns an independent set $S'$ with $w(S') > w(S)$, or certifies that $S$ is a $\gamma$-certified solution.
\end{itemize}
The algorithm's running time is $\mathrm{poly}\left(n,\alpha,\beta, \frac{1}{\varepsilon} \right)$.
\end{lemma}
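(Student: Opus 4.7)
The plan is to adapt the Makarychev--Makarychev weakly-stable framework. I would begin by building the natural $\gamma$-perturbation $\tilde w$ that boosts the input solution $S$: set $\tilde w_u := \gamma\, w_u$ for $u \in S$ and $\tilde w_u := w_u$ for $u \notin S$, where $\gamma := \alpha\beta + \varepsilon$. A direct computation gives the key identity $\tilde w(J) - \tilde w(S) = w(J \setminus S) - \gamma \cdot w(S \setminus J)$ for every subset $J \subseteq V$. Thus the Case-1 hypothesis ``there exists $I$ with $w(I\setminus S) > \gamma \cdot w(S\setminus I)$'' is equivalent to ``there exists an independent set $I$ with $\tilde w(I) > \tilde w(S)$'', while the Case-2 hypothesis is equivalent to ``$S$ is optimal for $\tilde w$ among all independent sets'', which by Definition~\ref{def:certified} is exactly the statement that $S$ is $\gamma$-certified with witness $\tilde w$. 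This reduces everything to analyzing a single convex relaxation on the perturbed instance.

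Next I would solve the convex relaxation with objective $\sum_u \tilde w_u\, x_u$ to obtain an optimal fractional solution $y$ of value $V := \sum_u \tilde w_u y_u$, which by LP optimality satisfies $V \geq \tilde w(J)$ for every feasible independent set $J$. I then feed $y$ into the given polynomial-time $(\alpha,\beta)$-rounding to produce a random feasible independent set $T$. Applying $\Pr[u \in T] \geq y_u / \alpha$ on $u \notin S$ and the complementary estimate $\Pr[u \in T] \geq 1 - \beta(1 - y_u)$ on $u \in S$, splitting the expectation over these two groups and substituting $\alpha\beta = \gamma - \varepsilon$, a routine calculation yields
\[
\E[w(T)] - w(S) \;\geq\; \tfrac{1}{\alpha}\bigl(V - \tilde w(S)\bigr) + \tfrac{\varepsilon}{\alpha}\Bigl(w(S) - \textstyle\sum_{u \in S} w_u y_u\Bigr) \;\geq\; \tfrac{1}{\alpha}\bigl(V - \tilde w(S)\bigr).
\]
In particular, for every $I$ satisfying the Case-1 premise, $V \geq \tilde w(I)$ gives $\E[w(T) - w(S)] \geq (\tilde w(I) - \tilde w(S))/\alpha > 0$.

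The algorithm then outputs $S' := \arg\max\{w(S), w(T)\}$; in Case~2, if $w(T) \leq w(S)$ the perturbation $\tilde w$ itself is handed back as the $\gamma$-certification witness, while if $w(T) > w(S)$ the algorithm just reports the improvement. In Case~1, converting the expected-improvement bound into the quoted one-shot guarantee requires two further ingredients. First, I would lower bound $\tilde w(I) - \tilde w(S) = w(I\setminus S) - \gamma\, w(S\setminus I)$ by a constant times $w(I) - w(S) = w(I\setminus S) - w(S\setminus I)$, exploiting both the strict inequality $w(I \setminus S) > \gamma\cdot w(S\setminus I)$ and the slack $\varepsilon = \gamma - \alpha\beta$; after elementary manipulation this gives a lower bound of the shape $(\tilde w(I) - \tilde w(S)) \geq (\varepsilon/\gamma)\cdot(w(I) - w(S))$, up to lower-order corrections. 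Second, since $w(I) - w(S')$ is non-negative and deterministically bounded above by $w(I) - w(S)$ by the choice of $S'$, I would apply Markov's inequality to it; the combination with the expected-improvement bound just derived delivers the target inequality with probability at least $\tfrac{1}{2}$, with the factor of $2$ in $\varepsilon/(2\alpha(\alpha\beta+\varepsilon))$ arising precisely from this Markov step.

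The main obstacle I expect is this last step: obtaining a tight lower bound on $(\tilde w(I) - \tilde w(S))/(w(I) - w(S))$ that holds uniformly over every $I$ violating $\gamma$-stability. Without slack this ratio can be arbitrarily small when $w(S\setminus I)$ is only just below the threshold $w(I\setminus S)/\gamma$; it is the slack $\varepsilon$ baked into $\gamma = \alpha\beta + \varepsilon$ that prevents the degeneracy, and it propagates unchanged through the LP rounding and Markov steps to produce the quoted rate. Polynomial running time in $n$, $\alpha$, $\beta$, and $1/\varepsilon$ follows from the polynomiality of both the convex relaxation and the $(\alpha,\beta)$-rounding, together with a constant number of independent rounding repetitions used to amplify the success probability to $\tfrac{1}{2}$.
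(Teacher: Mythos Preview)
Your outline follows the same route as the paper (perturb by boosting $S$, solve the relaxation on the perturbed instance, round, bound the expected improvement, apply Markov, repeat), but there is a real gap at precisely the step you flag as the main obstacle.

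You set the boosting factor to $\gamma = \alpha\beta + \varepsilon$, so that $\tilde w(I) - \tilde w(S) = w(I\setminus S) - \gamma\, w(S\setminus I)$. The Case-1 hypothesis $w(I\setminus S) > \gamma\, w(S\setminus I)$ then says only that this quantity is positive; it gives no quantitative lower bound. Your claimed inequality $\tilde w(I) - \tilde w(S) \geq (\varepsilon/\gamma)\,(w(I) - w(S))$ is false: take $w(I\setminus S) = \gamma\, w(S\setminus I) + \eta$ with $\eta \to 0^{+}$ and $w(S\setminus I)$ fixed; the left side tends to $0$ while the right side stays bounded away from $0$. The extra nonnegative term $\tfrac{\varepsilon}{\alpha}\bigl(w(S) - \sum_{u\in S} w_u y_u\bigr)$ you derived does not rescue this, since nothing prevents $y_u = 1$ on all of $S$. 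The paper avoids the issue by boosting with factor $\alpha\beta$ rather than $\gamma$: then $w'(I) - w'(S) = w(I\setminus S) - \alpha\beta\, w(S\setminus I)$, and the hypothesis $w(S\setminus I) < w(I\setminus S)/\gamma$ yields
\[
w'(I) - w'(S) \;>\; \Bigl(1 - \tfrac{\alpha\beta}{\gamma}\Bigr)\, w(I\setminus S) \;=\; \tfrac{\varepsilon}{\gamma}\, w(I\setminus S) \;\geq\; \tfrac{\varepsilon}{\gamma}\,\bigl(w(I) - w(S)\bigr).
\]
The slack you need lives in the gap between the perturbation multiplier and the stability threshold $\gamma$, not inside $\gamma$ itself; by using $\gamma$ as the multiplier you have already spent it.

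A smaller point: a single Markov application gives success probability about $\delta/2$ with $\delta = \varepsilon/(\alpha\gamma)$, not $1/2$. The paper repeats the rounding $\Theta(1/\delta)$ times---polynomial in $\alpha,\beta,1/\varepsilon$, not a constant---and takes the best output to reach success probability $1/2$.
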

\begin{proof}
We define the perturbation $G'=(V,E,w')$, where $w_u' = (\alpha\beta) \cdot w_u$, if $u \in S$, and $w_u' = w_u$, otherwise. We solve the convex relaxation for $G'$ and obtain the fractional solution $x$ to which we can apply the rounding scheme. If $\sum_{u \in V} w_u' x_u = w'(S)$, then the algorithm terminates and certifies that $S$ is a $\gamma$-certified solution, since in this case, $S$ is optimal for $G'$ (which is a $(\gamma-\varepsilon)$-perturbation of $G$). So, let's assume that $\sum_{u \in V} w_u' x_u > w'(S)$. We then apply the rounding scheme on $x$ and obtain an independent set $S'$. It is easy to see that there must exist at least one $u \notin S$ with $x_u > 0$, and so $\Pr[S'\neq S] > 0$. We have
\begin{equation*}
\begin{split}
    \E[w(S') - w(S)] &= \E[w(S' \setminus S) - w(S \setminus S')] = \E[w(S' \setminus S)] - \E[w(S \setminus S')] \\
                     &= \sum_{u\in V \setminus S} w_u \Pr[u \in S'] - \sum_{u \in S} w_u \Pr[ u \notin S'] \geq\frac{1}{\alpha} \sum_{u\in V \setminus S} w_u x_u - \beta \sum_{u \in S} w_u (1 - x_u)\\
                     &= \frac{1}{\alpha} \sum_{u\in V \setminus S} w_u' x_u - \frac{1}{\alpha} \sum_{u \in S} w_u' (1 - x_u) = \frac{1}{\alpha} \left(\sum_{u \in V} w_u' x_u - w'(S)\right).
\end{split}
\end{equation*}
Suppose now that there exists an independent set $I \neq S$ such that $w(I \setminus S) > \gamma \cdot w(S \setminus I)$. In this case, we get
\begin{equation*}
\begin{split}
    \E[w(S') - w(S)] &= \frac{1}{\alpha} \left(\sum_{u \in V} w_u' x_u - w'(S)\right) \geq \frac{1}{\alpha} \left(w'(I) - w'(S)\right) = \frac{1}{\alpha} \left(w'(I\setminus S) - w'(S\setminus I)\right)\\
	             &= \frac{1}{\alpha} \left(w(I\setminus S) - (\alpha \beta) w(S\setminus I)\right) > \frac{1}{\alpha} \left(w(I\setminus S) - \frac{\alpha \beta}{\alpha\beta + \varepsilon} \cdot w(I\setminus S)\right)\\
		     &= \frac{\varepsilon}{\alpha\beta + \varepsilon} \cdot w(I \setminus S) \geq \frac{\varepsilon}{\alpha\beta + \varepsilon} \cdot \left(w(I) - w(S) \right).
\end{split}
\end{equation*}
We conclude that $\E[w(I) - w(S')] < \left(1 - \delta \right) \cdot \left(w(I) - w(S) \right)$, where $\delta = \frac{\varepsilon}{\alpha(\alpha\beta + \varepsilon)}$. Then, by applying Markov's inequality, we get that
\begin{equation*}
    \Pr \left[w(I) - w(S') > \left(1-\frac{\delta}{2} \right)\left(w(I) - w(S) \right) \right] < \frac{1- \delta}{1 - \delta/2} = 1 - \frac{\delta}{2 - \delta} \leq 1 - \frac{\delta}{2}.
\end{equation*}
Thus, with probability at least $\delta/2$, we get an independent set $S'$ that satisfies
\begin{equation}\label{ineq:weak--stab-progress}
    w(I) - w(S') \leq \left(1-\frac{\delta}{2} \right) \left(w(I) - w(S) \right).
\end{equation}
We now repeat the rounding process $M = \frac{2\ln 2}{\delta}$ times, independently, and obtain independent sets $S_1', S_2', ..., S_M'$. Let $S'$ be the largest independent set among $S_1', S_2', ..., S_M'$. The probability that $S'$ violates inequality~(\ref{ineq:weak--stab-progress}) is at most $\left(1 - \frac{\delta}{2} \right)^M \leq e^{-\frac{\delta \cdot M}{2}} = \frac{1}{2}$. If $w(S') > w(S)$, the algorithm returns $S'$, otherwise the algorithm certifies that $S$ is a $\gamma$-certified solution.
\end{proof}

%%%%%%%%%%%%%%%%%%%%%%%%%%%%%%%%%

\begin{proof}[Proof of Theorem~\ref{thm:MM-certified}]
The algorithm starts with any feasible independent set $S^{(0)}$. We iteratively apply the algorithm presented in Lemma~\ref{lemma:weakly-stable} as follows: we apply the algorithm $t$ times, for some $t \geq 1$ to be specified later, in order to boost the probability of success, and pick the largest of the independent sets returned, if any. Let $S^{(1)}$ be the largest such independent set. We repeat this process $T \geq 1$ times, and obtain a sequence of independent sets $S^{(1)}, S^{(2)}, ..., S^{(T)}$, where $T$ will be specified later (we clarify that in order to get $S^{(i)}$, we will again run the algorithm $t$ times in order to boost the probability of success). We note that, in order to obtain $S^{(i+1)}$, the algorithm of Lemma~\ref{lemma:weakly-stable} is given $S^{(i)}$ as input. Thus, we apply the algorithm of Lemma~\ref{lemma:weakly-stable} at most $t \cdot T$ times, for a total running time of $\mathrm{poly}\left(n,\alpha,\beta, \frac{1}{\varepsilon}, t, T \right)$

If the algorithm at any iteration reports that some $S^{(i)}$ is $\gamma$-certified, then we return $S^{(i)}$, and the algorithm terminates. So, let's assume that the algorithm always does an improving step and finds the next set $S^{(i+1)}$. Since all weights are integers, we have $w(S^{(i+1)}) \geq w(S^{(i)}) + 1$. Thus, since $\sum_{u \in V} w_u \leq n \cdot W = \mathrm{poly}(n)$, it is clear that after polynomially many steps the algorithm must terminate by certifying that a solution $S^{(i)}$ is $\gamma$-certified.

We set $T = n \cdot W$. The only remaining thing is to decide on the value of the parameter $t$. Each iteration $i$ fails with probability at most $2^{-t}$. Thus, the probability of failure over the $T$ iterations is at most $T \cdot 2^{-t}$. Thus, by setting $t = \log (n \cdot T)$, we conclude that the algorithm fails with probability at most $1/n$. As already observed, the total running time is $\mathrm{poly}\left(n,\alpha,\beta, \frac{1}{\varepsilon}, t, T \right)$, which is polynomial in the size of the input when $\varepsilon \geq 1 \ \mathrm{poly}(n)$.

\end{proof}

\section{Stable instances of the Minimum Node Multiway Cut problem}\label{appendix:mc}

We first define the problem.
\begin{definition}[Node Multiway Cut]
Let $G = (V, E)$ be a connected undirected graph and let $T = \{s_1, ..., s_k\} \subseteq V$ be a set of terminals such that for every $i \neq j$, $(s_i, s_j) \notin E$. In the Node Multiway Cut problem, we are given a function $w: V \to \R_{>0}$ and the goal is to remove the minimum weight set of vertices $V' \subseteq V \setminus T$ such that in the induced graph $G' = G[V \setminus V']$, there is no path between any of the terminals.
\end{definition}

A $\gamma$-stable instance $G = (V,E,w)$ with terminal set $T$ is defined as expected; it has a unique optimal solution $X^* \subseteq V \setminus T$, and every $\gamma$-perturbation $G'= (V, E, w')$ of the instance has the same unique optimal solution $X^*$. We observe that it is straightforward to reprove the Theorem~\ref{thm:MMV} in the setting of Node Multiway Cut, and in particular, one can easily prove that it suffices to obtain an $(\alpha, \beta)$-rounding for a half-integral optimal solution, since such a solution always exists. We now give one such rounding for the standard LP relaxation for Node Multiway Cut given in Figure~\ref{fig:MC-LP}, that satisfies $\alpha \beta = k - 1$, where $k$ is the number of terminals.

Let $G = (V, E, w)$, $T  = \{s_1, ..., s_k\} \subseteq V$, be an instance of Node Multiway Cut. The standard LP relaxation is given in Figure~\ref{fig:MC-LP}. The LP has one indicator variable for each vertex $u \in V$. For each pair of terminals $s_i$ and $s_j$, $i < j$, let $\mathcal{P}_{ij}$ denote the set of all paths between $s_i$ and $s_j$. Let $\mathcal{P} = \bigcup_{i < j} \mathcal{P}_{ij}$.

\begin{figure}[ht]
\begin{align*}
    \min:          & \quad \sum_{u \in V \setminus T} w_u x_u \\
    \textrm{s.t.:} & \quad \sum_{u \in P} x_u \geq 1, \quad\quad\;           \forall P \in \mathcal{P},\\
                   & \quad x_{s_{i}} = 0,             \quad\quad\quad\quad \forall i \in [k],\\
                   & \quad x_u \geq 0,                \quad\quad\quad\quad\, \forall u \in V.
\end{align*}
\caption{The standard LP relaxation for Node Multiway Cut.}
\label{fig:MC-LP}
\end{figure}

We now present a rounding scheme for the LP (Algorithm~\ref{alg:half-integral-rounding-mc}) that only works for half-integral solutions. Let $\{x_u\}_{u \in V}$ be a half-integral optimal solution for the LP of Figure~\ref{fig:MC-LP}. For $i \in \{0, \frac{1}{2}, 1\}$, let $V_i = \{u \in V: x_u = i\}$. Since $x$ is half-integral, we have $V = V_0 \cup V_{1/2} \cup V_1$. For a path $P$, let $len(P) = \sum_{u \in P} x_u$. Let $\mathcal{P}_{uv}$ denote the set of all paths between two vertices $u$ and $v$. We define $d(u,v) = \min_{P \in \mathcal{P}_{uv}} len(P)$; we note that this function is not an actual metric, since we always have some $u \in V$ with $d(u,u) > 0$. We consider the following rounding scheme (see Algorithm~\ref{alg:half-integral-rounding-mc}).

\begin{algorithm}[h]
\begin{enumerate}
    \item Let $G' = G[V_0 \cup V_{1/2}]$  (if graph $G'$ has more than one connected component, we apply the rounding scheme on each connected component, separately).
    \item For each $i \in [k]$, let $B_i = \{u \in V_0: d(s_i, u) = 0\}$ and $\delta(B_i) = \{u \in V_{1/2}: \exists v \in B_i \textrm{ such that }(u,v) \in E\}$ (we note that the function $d$ is computed separately in each connected component of $G'$).
    \item Pick uniformly random $j^* \in [k]$.
    \item Return $X: = V_1 \cup (\bigcup_{i \neq j^*} \delta(B_i))$.
\end{enumerate}
\caption{An $(\alpha, \beta)$-rounding for half-integral solutions of Node Multiway Cut}
\label{alg:half-integral-rounding-mc}
\end{algorithm}

\begin{theorem}
Algorithm~\ref{alg:half-integral-rounding-mc} is an $(\alpha, \beta)$-rounding for half-integral optimal solutions of Node Multiway Cut, for some $\alpha$ and $\beta$, with $\alpha \beta = k - 1$. More precisely, given an optimal half-integral solution $\{x_u\}_{u \in V}$, it always returns a feasible solution $X \subseteq V \setminus T$ such that for each vertex $u \in V \setminus T$, the following two conditions are satisfied:
\begin{enumerate}
    \item $\Pr[u \in X] \leq \alpha \cdot x_u$,
    \item $\Pr[u \notin X] \geq \frac{1}{\beta} \cdot (1 - x_u)$,
\end{enumerate}
with $\alpha = \frac{2(k - 1)}{k}$ and $\beta = \frac{k}{2}$.
\end{theorem}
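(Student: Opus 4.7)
The plan is to establish the theorem in two parts: first show that the set $X$ returned by Algorithm~\ref{alg:half-integral-rounding-mc} is a feasible node multiway cut, and then verify both probability bounds via case analysis on whether $u \in V_0$, $V_{1/2}$, or $V_1$.

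For feasibility, I would first prove two disjointness lemmas about the balls. \textbf{Ball disjointness:} if $u \in B_i \cap B_j$ with $i \neq j$, then concatenating a zero-length $s_i$-to-$u$ path with a zero-length $u$-to-$s_j$ path yields an $s_i$-$s_j$ path $P$ with $\sum_{v \in P} x_v = 0$, violating the LP constraint $\sum_{v \in P} x_v \geq 1$. \textbf{Boundary disjointness:} if $u \in \delta(B_i) \cap \delta(B_j)$ with $i \neq j$, then $u \in V_{1/2}$ has neighbors in both $B_i$ and $B_j$, giving an $s_i$-$s_j$ path of total length $x_u = \tfrac{1}{2}$, again contradicting the LP. As a corollary, no vertex of $\delta(B_i)$ is adjacent to any ball $B_\ell$ with $\ell \neq i$, since otherwise such a vertex would lie in $\delta(B_i) \cap \delta(B_\ell)$.

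Given these disjointness facts, the feasibility argument proceeds as follows. After removing $X = V_1 \cup \bigcup_{i \neq j^*} \delta(B_i)$, every terminal $s_k$ with $k \neq j^*$ is trapped inside $B_k$: any neighbor of a vertex in $B_k$ lies either in $B_k$ (by the definition of $B_k$ as a closed zero-distance neighborhood of $s_k$ inside $V_0$), in $V_1 \subseteq X$, or in $\delta(B_k) \subseteq X$. Hence $s_k$ has no path to any other terminal in the residual graph; the component of $s_{j^*}$ may extend through $\delta(B_{j^*})$ and into the rest of $V_{1/2}$, but by the corollary above it cannot reach any $B_\ell$ with $\ell \neq j^*$.

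For the probability bounds, the case analysis is routine. If $u \in V_0 \setminus T$, then $u \notin X$ always and $x_u = 0$, so both inequalities hold trivially. If $u \in V_1$, then $u \in X$ always and $x_u = 1$, so both hold with equality. If $u \in V_{1/2}$, then $x_u = \tfrac{1}{2}$ and, by boundary disjointness, $u$ lies in at most one $\delta(B_i)$. When $u \in \delta(B_i)$, $u$ is excluded from $X$ precisely when $j^* = i$, so $\Pr[u \in X] = \tfrac{k-1}{k} = \tfrac{2(k-1)}{k}\cdot x_u$ and $\Pr[u \notin X] = \tfrac{1}{k} = \tfrac{2}{k}(1-x_u)$, both at equality. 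When $u$ lies in no $\delta(B_i)$, $u \notin X$ with probability $1$, so both inequalities hold with slack.

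The main obstacle is the feasibility argument, and in particular handling the ``free'' parts of $G[V_0 \cup V_{1/2}]$ --- subcomponents of $V_0$ containing no terminal, and $V_{1/2}$ vertices not in any $\delta(B_i)$ --- which could a priori serve as bridges between different $B_k$'s. The boundary-disjointness corollary is precisely what rules this out: any path leaving some $B_k$ must first pass through a vertex of $\delta(B_k)$, so removing $\delta(B_k)$ for every $k \neq j^*$ severs every inter-terminal path. Minor extra care is needed for the parenthetical in the algorithm that the rounding is applied per connected component of $G'$, in which case $j^*$ is chosen uniformly among terminals of that component; since each component contains at most $k$ terminals and the bounds $(m-1)/m \leq (k-1)/k$, $1/m \geq 1/k$ hold for $m \leq k$, the global parameters $\alpha = \tfrac{2(k-1)}{k}$ and $\beta = \tfrac{k}{2}$ (and thus $\alpha\beta = k-1$) still apply.
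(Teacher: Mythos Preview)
Your proof is correct and differs from the paper's in two respects worth noting.

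For feasibility, the paper argues path-by-path: given any $s_i$--$s_j$ path $P$ avoiding $V_1$, the LP constraint forces at least two vertices of value $\tfrac12$ on $P$; the first such vertex encountered from the $s_i$ side lies in $\delta(B_i)$ and the first from the $s_j$ side lies in $\delta(B_j)$, and since $i\neq j$ at least one of these two boundaries is contained in $X$. Your trapping argument via ball and boundary disjointness reaches the same conclusion; the paper's route is slightly more direct, while yours makes the underlying ball structure explicit. (Your sentence about the $s_{j^*}$ component reaching other balls via the corollary is a bit loose---the corollary only controls vertices in $\delta(B_{j^*})$, not the rest of that component---but this is harmless, since the trapping of every $s_k$ with $k\neq j^*$ already rules out all inter-terminal paths.)

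For the probability bounds on $V_{1/2}$, the paper invokes \emph{optimality} of the LP solution to prove that every $u\in V_{1/2}$ lies in exactly one $\delta(B_i)$: if some such $u$ were in no boundary, one could set $x_u=0$ and strictly decrease the objective while remaining feasible. You sidestep this entirely by observing that a half-integral vertex in no boundary, were one to exist, would have $\Pr[u\in X]=0$ and $\Pr[u\notin X]=1$, so both inequalities hold with slack. This is a genuine simplification: your argument establishes the two conditions for any \emph{feasible} half-integral solution, not just an optimal one, and avoids the optimality-based structural lemma altogether.
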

\begin{proof}
We first show that $X$ is always a feasible solution. It is easy to see that $s_i \notin X$ for every $i \in [k]$. Let's fix now a path $P$ between $s_i$ and $s_j$. If there exists a vertex $u \in P$ such that $x_u = 1$, then clearly the algorithm ``cuts" this path, since $X$ contains all vertices whose LP value is 1. So, let's assume that for every $u \in P$ we have $x_u \in \{0,1/2\}$. Observe that the whole path $P$ is contained in the graph $G'$. Since $x_{s_t} = 0$ for every $t \in [k]$, we have $s_i \in B_i$ and $s_j \in B_j$ and we know that at least one of the sets $\delta(B_i)$ or $\delta(B_j)$ will be included in the solution. The LP constraints imply that $\sum_{q \in P} x_q \geq 1$. Thus, there are at least 2 vertices in $P$ whose LP value is exactly $1/2$. So, we start moving along the path $P$ from $s_i$ to $s_j$, and let $q_1 \in P$ be the first vertex with $x_{q_1} = 1 / 2$. Similarly, we start moving along the path from $s_j$ to $s_i$, and let $q_2 \in P$ be the first vertex with $x_{q_2} = 1 / 2$. Our assumption implies that $q_1 \neq q_2$. Clearly, $d(s_i, q_1) = d(s_j, q_2) = 1/2$, and it is easy to see that $q_1 \in \delta(B_i)$ and $q_2 \in \delta(B_j)$. Thus, at least one of the vertices $q_1$ or $q_2$ will be included in the final solution $X$. We conclude that the algorithm always returns a feasible solution.  

We will now show that the desired properties of the rounding scheme are satisfied with $\alpha\beta = k - 1$. For that, we first prove that $\bigcup_{i \in [k]} \delta(B_i) = V_{1/2}$, and moreover, each $u \in V_{1/2}$ belongs to exactly one set $\delta(B_i)$. By definition $\bigcup_{i \in [k]} \delta(B_i) \subseteq V_{1/2}$. Let $u \in V_{1/2}$. It is easy to see that there must exist at least one path $P$ between two terminals such that $u \in P$ and $x_v < 1$ for every $v \in P$, since otherwise we could simply set $x_u = 0$ and still get a feasible solution with lower cost. Let's assume now that $u \notin \bigcup_{i \in [k]} \delta(B_i)$. This means that for any path $P$ between two terminals $s_i$ and $s_j$ such that $u \in P$ and $x_v < 1$ for every $v \in P$, if we start moving from $s_i$ to $s_j$, we will encounter at least one vertex $q_1 \neq u$ with $x_{q_1} = 1/2$, and similarly, if we start moving from $s_j$ to $s_i$, we will encounter at least one vertex $q_2 \neq u$ with $x_{q_2} = 1/2$. Since this holds for any two terminals $s_i$ and $s_j$, it is easy to see that we can set $x_u = 0$ and get a feasible solution with a smaller cost. Thus, we get a contradiction. This shows that $\bigcup_{i \in [k]} \delta(B_i) = V_{1/2}$. We will now prove that for every $u \in V_{1/2}$ there exists a unique $i \in [k]$ such that $u \in \delta(B_i)$. Suppose that $u \in \delta(B_i) \cap \delta(B_j)$, for some $i \neq j$. Let $q_1 \in B_i$ such that $(u, q_1) \in E$, and let $q_2 \in B_j$ such that $(u, q_j) \in E$. Let $P_1$ be a shortest path between $s_i$ and $q_1$, and let $P_2$ be a shortest path between $s_j$ and $q_2$. We now consider the path $P' = P_1 \cup \{u\} \cup P_2$. This is indeed a valid path in $G'$ between $s_i$ and $s_j$. It is easy to see that $\sum_{v \in P'} x_v = 1/2$, and so an LP constraint is violated. Again, we get a contradiction, and thus, we conclude that for each $u \in V_{1/2}$ there exists exactly one $i \in [k]$ such that  $u \in \delta(B_i)$.

We are almost done. We will now verify that the two conditions of the rounding scheme are satisfied. Let $u \in V \setminus T$. If $x_u = 1$, then $u$ is always picked and we have $\Pr[u \in X] = 1 = x_u$ and $\Pr[u \notin X] = 0 = 1 - x_u$. If $x_u = 0$, then the vertex $u$ will never be picked, and so $\Pr[u \in X] = 0 = x_u$ and $\Pr[u \notin X] = 1 = 1 - x_u$. So, let's assume now that $x_u = 1/2$. By the previous discussion, $u \in \delta(B_i)$ for some unique $i \in [k]$. Since each set $\delta(B_i)$ is not included in the solution with probability $1/k$, we get that $\Pr[u \notin X] = \frac{1}{k} = \frac{2}{k} \cdot (1 - x_u)$, and $\Pr[u \in X] = \frac{k - 1}{k} = \frac{2(k-1)}{k} \cdot x_u$. Thus, the rounding scheme satisfies the desired properties with $\alpha\beta = \frac{2(k-1)}{k} \cdot \frac{k}{2} = k-1$.
\end{proof}

The above theorem, combined with the adaptation of Theorem~\ref{thm:MMV} for the problem directly gives Theorem~\ref{thm:mc-integrality}. Mimicking the techniques of~\cite{DBLP:conf/soda/MakarychevMV14}, we can also prove the following theorem about weakly-stable instances.
\begin{theorem}
There is a polynomial-time algorithm that, given a $(k - 1 + \delta, \mathcal{N})$-weakly-stable instance of Minimum Node Multiway Cut with $n$ vertices, $k$ terminals and integer polynomially-bounded weights, finds a solution $X' \in \mathcal{N}$ (for every $\delta \geq 1/\mathrm{poly}(n) > 0$).
\end{theorem}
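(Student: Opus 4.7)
The plan is to adapt the iterative local-improvement scheme of Makarychev, Makarychev and Vijayaraghavan (as formalized for the certified setting in Lemma~\ref{lemma:weakly-stable}) to the minimization setting of Node Multiway Cut, using the half-integral $(\alpha,\beta)$-rounding of Algorithm~\ref{alg:half-integral-rounding-mc}, which satisfies $\alpha\beta=k-1$. The algorithm maintains a feasible cut $X\subseteq V\setminus T$ (initialized to $X=V\setminus T$). Each iteration defines the perturbed weights $w'_u=w_u$ for $u\in X$ and $w'_u=(k-1)\,w_u$ for $u\notin X$, which is a $(k-1)$-perturbation and, a fortiori, a $(k-1+\delta)$-perturbation. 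We solve the standard LP of Figure~\ref{fig:MC-LP} on $(G,w')$ to obtain a half-integral optimum $x^*$ of value $L=\sum_u w'_u x^*_u$. If $L=w'(X)$, then $X$ is integer-optimal under $w'$, and weak stability at threshold $k-1+\delta$ forces $X\in\mathcal{N}$; the algorithm outputs $X$ and halts. Otherwise it applies the $(\alpha,\beta)$-rounding to $x^*$ a polynomial number $t$ of independent times, retains any returned cut strictly cheaper than $X$, and iterates; if no trial improves $w(X)$, it outputs the current $X$.

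For the progress bound, combining the two rounding guarantees with the fact that $w'_u=w_u$ on $X$ and $w'_u=(k-1)w_u$ off $X$, together with $\alpha\beta=k-1$, yields
\[
\E[\,w(X)-w(X')\,]\;\geq\;\tfrac{1}{\beta}\bigl(w'(X)-L\bigr).
\]
Using $L\leq w'(X^*)$ together with the identity $w'(X)-w'(X^*)=w(X\setminus X^*)-(k-1)\,w(X^*\setminus X)$, and invoking the weak-stability inequality $w(X\setminus X^*)>(k-1+\delta)\,w(X^*\setminus X)$ (which holds whenever $X\notin\mathcal{N}$), a short calculation gives
\[
\E[\,w(X')-w(X^*)\,]\;\leq\;\Bigl(1-\tfrac{\delta}{\beta(k-1+\delta)}\Bigr)\bigl(w(X)-w(X^*)\bigr).
\]
Since $w(X')-w(X^*)\geq 0$, Markov's inequality combined with $t=O\bigl(\tfrac{1}{\eta}\log(nW)\bigr)$ independent trials per iteration, where $\eta=\tfrac{\delta}{\beta(k-1+\delta)}=\Omega(1/\mathrm{poly}(n))$, guarantees with probability $1-1/\mathrm{poly}(n)$ that at least one trial returns a strictly cheaper cut whenever $X\notin\mathcal{N}$.

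Because the gap $w(X)-w(X^*)$ contracts by a constant factor in each successful iteration, only $O\bigl(\tfrac{1}{\eta}\log(nW)\bigr)$ improving iterations can occur before the integer-valued gap falls below $1$, forcing $X=X^*$ by uniqueness of the optimum. Once $X=X^*$ is reached, every feasible cut satisfies $w(\cdot)\geq w(X^*)$, so no trial improves $X$ and the algorithm halts through the no-improvement exit with the correct answer $X^*\in\mathcal{N}$. A union bound over the polynomially many iterations and trials keeps the total failure probability at $1/\mathrm{poly}(n)$, and since each iteration requires only solving one LP and running $t$ rounding trials, the total runtime is polynomial in $n$.

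The main obstacle is orienting the MMV template correctly for a minimization problem: the perturbation must boost weights \emph{outside} the current cut (so that leaving $X$ looks costly and the rounding inequality $\E[w(X)-w(X')]\geq\tfrac{1}{\beta}(w'(X)-L)$ has the correct sign), while keeping weights inside $X$ unperturbed ensures that the factor $\alpha\beta=k-1$ cancels cleanly in the expectation. A secondary delicate point is that neither the algorithm nor the analysis can directly test membership in $\mathcal{N}$; correctness must instead be extracted from the $\delta$-slack between the internal perturbation factor $k-1$ and the weak-stability threshold $k-1+\delta$ in the hypothesis, which plays exactly the role of the $\varepsilon$-slack in Lemma~\ref{lemma:weakly-stable} and Theorem~\ref{thm:MM-certified}.
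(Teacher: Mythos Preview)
Your proposal is correct and follows precisely the approach the paper indicates: the paper gives no detailed proof for this theorem, only the remark that it follows by ``mimicking the techniques of~\cite{DBLP:conf/soda/MakarychevMV14}'', and your argument does exactly this by adapting Lemma~\ref{lemma:weakly-stable}/Theorem~\ref{thm:MM-certified} to the minimization setting using the $(\alpha,\beta)$-rounding of Algorithm~\ref{alg:half-integral-rounding-mc} with $\alpha\beta=k-1$. Your orientation of the perturbation (boosting weights outside the current cut) and the resulting expectation identity $\E[w(X)-w(X')]\geq\tfrac{1}{\beta}(w'(X)-L)$ are correct, and the contraction/Markov/repetition argument is the standard one; note that, as in the proof of Theorem~\ref{thm:MM-certified}, the simpler termination bound ``each strict improvement drops the integer objective by at least $1$, so at most $nW$ iterations occur'' already suffices and avoids tracking the multiplicative contraction rate.
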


We now prove that the above analysis is tight, i.e.~there are $(k - 1 - \varepsilon)$-stable instances for which the LP is not integral.
\begin{theorem}
For every $\varepsilon > 0$, there exist $(k - 1-\varepsilon)$-stable instances of the Node Multiway Cut problem with $k$ terminals for which the LP of Figure~\ref{fig:MC-LP} is not integral.
\end{theorem}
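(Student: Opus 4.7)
The plan is to exhibit a tight integrality gap instance based on the weighted vertex cover LP on $K_k$.  Let $G$ be the graph obtained from the clique $K_k$ on non-terminals $v_1,\ldots,v_k$ by attaching each terminal $s_i$ as a pendant edge to $v_i$; the terminal set is $T=\{s_1,\ldots,s_k\}$ and there are no edges between terminals.  I would assign weights $w(v_i)=1+i\eta$ for $i=1,\ldots,k-1$ and $w(v_k)=k-1+\mu$, for small parameters $\eta,\mu>0$ to be fixed below.  Since $v_i$ is the only non-terminal neighbor of $s_i$, every $s_i$--$s_j$ path uses both $v_i$ and $v_j$; in particular the shortest such path $s_i\text{-}v_i\text{-}v_j\text{-}s_j$ contributes the LP constraint $x_{v_i}+x_{v_j}\ge 1$, and all longer path constraints are dominated by such pairs.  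Consequently the LP of Figure~\ref{fig:MC-LP} coincides with the fractional vertex cover LP on $K_k$ with weights $c_i:=w(v_i)$, and $X\subseteq\{v_1,\ldots,v_k\}$ is a feasible cut iff it is a vertex cover of $K_k$, i.e.\ it omits at most one $v_i$.

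Next I would identify the integer and fractional optima.  Every feasible cut contains at least $k-1$ of the $v_i$'s, so the integer optimum is $X^\ast=\{v_1,\ldots,v_{k-1}\}$, which drops the cheapest $k-1$ vertices and costs $(k-1)+\eta\cdot\tfrac{k(k-1)}{2}$.  By the half-integrality of the vertex-cover LP, every extreme-point optimum has either some $x_{v_i}=0$ (forcing all other $x_{v_j}=1$ and yielding the integer value $\sum_{j\ne i}c_j$, minimized by $i=k$ since $c_k$ is the largest weight) or $x_{v_i}\ge 1/2$ for all $i$ (minimized at $x_{v_i}\equiv 1/2$ with value $\tfrac{1}{2}\sum_i c_i$).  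The half-integer value is strictly smaller than the integer one precisely when $c_k<\sum_{i<k}c_i$, i.e.\ when
\begin{equation*}
    \mu<\eta\cdot\tfrac{k(k-1)}{2}. \qquad(\ast)
\end{equation*}
In that range the LP is not integral.

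For stability, the only other minimal feasible cuts are $X_j':=\{v_1,\ldots,v_k\}\setminus\{v_j\}$ with $j<k$ (the cut $\{v_1,\ldots,v_k\}$ is a superset of $X^\ast$ and trivially satisfies the stability inequality).  For $X_j'$ we have $X^\ast\setminus X_j'=\{v_j\}$ and $X_j'\setminus X^\ast=\{v_k\}$, so the stability condition $w(X_j'\setminus X^\ast)>\gamma\cdot w(X^\ast\setminus X_j')$ with $\gamma=k-1-\varepsilon$ becomes $k-1+\mu>(k-1-\varepsilon)(1+j\eta)$. The tightest case $j=k-1$ gives
\begin{equation*}
    \mu>(k-1)^2\eta-\varepsilon\bigl(1+(k-1)\eta\bigr). \qquad(\star)
\end{equation*}

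Finally I would check that $(\ast)$ and $(\star)$ admit a common solution.  Eliminating $\mu$ leaves the single condition $\eta(k-1)(\tfrac{k}{2}-1)<\varepsilon(1+(k-1)\eta)$, which for every $k\ge 3$ is satisfied whenever $\eta$ is sufficiently small, e.g.\ $\eta<\tfrac{2\varepsilon}{(k-1)(k-2)}$.  Fixing such an $\eta$ and any $\mu$ in the resulting non-empty interval yields a $(k-1-\varepsilon)$-stable instance whose standard LP is strictly fractional.  The main obstacle, as is typical in such tightness arguments, is parameter calibration: one must simultaneously push $c_k$ up enough that $X^\ast$ beats each swap cut $X_j'$ by almost a factor of $k-1$, yet keep $c_k<\sum_{i<k}c_i$ so that the all-$1/2$ LP solution strictly beats every integer assignment; the bound $(\ast)\cap(\star)$ identifies exactly the regime in which both demands can be met.
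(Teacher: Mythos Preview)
Your proof is correct and follows the same strategy as the paper's---exhibit a gadget whose feasible cuts and LP both reduce to vertex cover on $K_k$---but realises it with a different graph. The paper uses a star: a center $c$ of weight $k^3$, leaves $u_1,\ldots,u_k$ with $w(u_i)=1$ for $i<k$ and $w(u_k)=k-1-\varepsilon/2$, and terminal $s_i$ pendant to $u_i$; the prohibitively heavy center forces $x_c=0$ in any sensible (fractional or integral) solution, after which the instance behaves exactly like your clique. Your version avoids the dummy vertex and the extra stability case $X'=\{c\}$ that the paper must dispatch, at the price of the graded weights $1+i\eta$ and the two-parameter calibration $(\ast)\cap(\star)$. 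That grading is in fact unnecessary: uniqueness of $X^\ast$ already follows from $w(v_k)$ being the strictly largest weight, so taking $\eta=0$ collapses $(\ast)$ and $(\star)$ to the single window $-\varepsilon<\mu<0$, which is essentially the paper's choice $w(v_k)=k-1-\varepsilon/2$.
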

\begin{proof}
We consider a variation of the star graph, as shown in Figure~\ref{fig:mc-bad-example}. The graph $G = (V, E, w)$ is defined as follows:
\begin{enumerate}
    \item $V = \{s_1, ..., s_k\} \cup \{u_1, ..., u_k\} \cup \{c\}$, with $T = \{s_1, ..., s_k\}$ being the set of terminals. Observe that $|V| = 2k + 1$.
    \item $E = \{(c,u_i): i \in [k]\} \cup \{(s_i, u_i): i\in [k]\}$.
    \item For each $i \in \{1, ..., k - 1\}$, we have $w_{u_i} = 1$. We also have $w_{u_k} = k - 1 - \frac{\varepsilon}{2}$ and $w_c = k^3$.
\end{enumerate}

\begin{figure}[h]
\begin{center}
\scalebox{0.7}{\input{./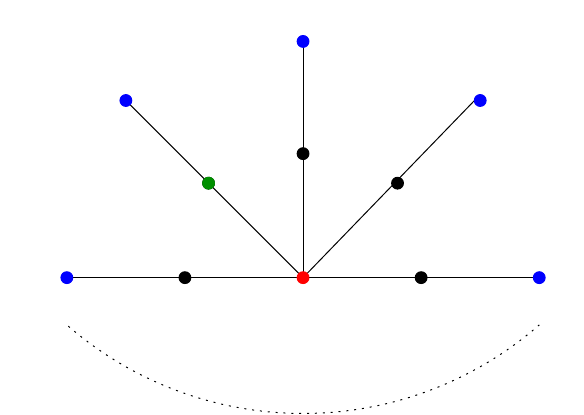_t}}
\end{center}
\caption{An integrality gap example of a stable instance of Node Multiway Cut.}
\label{fig:mc-bad-example}
\end{figure}

It is easy to see that there is unique optimal integral solution $X^* = \{u_i : 1 \leq i \leq k - 1\}$ of cost $OPT = k - 1$. It is also clear that any feasible solution must either remove vertex $c$ or must remove at least $k - 1$ vertices from the set $\{u_1, ..., u_k\}$. A minimal solution that contains $c$ is $X_c = \{c\}$. We have $(k - 1 - \varepsilon) w(X^* \setminus X_c) < (k - 1)^2$ and $w(X_c \setminus X^*) = k^3$. Let's consider now a solution that does not contain $c$. By the previous observations, we only have to consider the solutions $Y_i = \{u_1, ..., u_k\} \setminus \{u_i\}$, $1 \leq i \leq k - 1$, and $Y_0 = \{u_1, ..., u_k\}$. For any $Y_i$, $1 \leq i \leq k - 1$, we have $(k - 1 - \varepsilon) \cdot w(X^* \setminus X_i) = (k - 1 - \varepsilon) \cdot w_{u_i} = k - 1 - \varepsilon$ and $w(Y_i \setminus X^*) = w_{u_k} = k - 1 - \varepsilon / 2$. For $Y_0$ we have $(k - 1 - \varepsilon) \cdot w(X^* \setminus Y_0) = 0$ and $w(Y_i \setminus X^*) = w(u_k) = k - 1 - \varepsilon/2$. Thus, in all cases, the stability condition is satisfied with $\gamma = k - 1 - \varepsilon$.

We now look at the LP. Let $x_{u_i} = 1/2$ for every $i \in [k]$ and let $x_c = 0$. We also set $x_{s_i} = 0$ for every $i \in [k]$. Observe that this is a feasible solution. The objective function is equal to
\begin{equation*}
    \frac{k-1}{2} + \frac{k -1 - (\varepsilon/2)}{2} = k - 1 - (\varepsilon/4) < k - 1 = OPT.
\end{equation*}
Thus, the integrality gap is strictly greater than 1, and thus, the LP is not integral.
\end{proof}

Finally, we show that if there exists an algorithm for $\gamma$-stable instances of Node Multiway Cut, then there exists an algorithm for $\gamma$-stable instances of Vertex Cover. This reduction, combined with the negative results for Vertex Cover, implies strong lower bounds on the existence of efficient algorithms for stable instances of Node Multiway Cut.
\begin{theorem}
Let $\mathcal{A}$ be an algorithm for $\gamma$-stable instances of Minimum Node Multiway Cut. Then, there exists an algorithm $\mathcal{B}$ for $\gamma$-stable instances of Minimum Vertex Cover. Moreover, if $\mathcal{A}$ is robust, then $\mathcal{B}$ is robust.
\end{theorem}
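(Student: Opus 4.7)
The plan is to give a direct gadget-based reduction from a $\gamma$-stable Vertex Cover instance to a $\gamma$-stable Node Multiway Cut instance, by attaching a pendant terminal to every vertex. Concretely, given a VC instance $G = (V,E,w)$, I would build the NMC instance $G' = (V', E', w')$ with $V' = V \cup T$, $T = \{t_u : u \in V\}$, $E' = E \cup \{(t_u, u) : u \in V\}$, $w'(u) = w(u)$ for $u \in V$, and $w'(t_u) = 1$ for each terminal. The terminals in $T$ are pairwise non-adjacent pendants, so this is a valid NMC instance; if $G$ happens to be disconnected I would apply the reduction component by component.

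The first main step is to establish a feasibility equivalence: for every $X \subseteq V$, $X$ is a vertex cover of $G$ iff $X$ is a node multiway cut of $G'$. For the forward direction, when $X$ is a VC the induced graph $G[V \setminus X]$ is edgeless, so $G' \setminus X$ decomposes into components of the form $\{u, t_u\}$ (for $u \notin X$) and isolated terminals $\{t_u\}$ (for $u \in X$), so no two terminals share a component. For the converse, if $X$ leaves an edge $e = (u,v)$ uncovered then the path $t_u - u - v - t_v$ survives in $G' \setminus X$, contradicting the MC property.

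The second main step is to transfer $\gamma$-stability in both directions. Because the terminals can never appear in an MC, $w'(X) = w(X)$ for every candidate $X \subseteq V$; hence a $\gamma$-perturbation of $w'$ restricts to a $\gamma$-perturbation of $w$ on $V$, and conversely any $\gamma$-perturbation of $w$ lifts to a $\gamma$-perturbation of $w'$ by leaving the terminal weights fixed. Combining this with the feasibility equivalence yields: $G$ has a unique optimum vertex cover $X^*$ that remains optimal under every $\gamma$-perturbation of $w$ iff $G'$ has a unique optimum multiway cut (which must equal $X^*$) that remains optimal under every $\gamma$-perturbation of $w'$. In particular, $G$ is $\gamma$-stable VC iff $G'$ is $\gamma$-stable NMC.

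The algorithm $\mathcal{B}$ is then the obvious pass-through: construct $G'$ in polynomial time, invoke $\mathcal{A}$ on it, and return its output verbatim (interpreting a returned set as a vertex cover, or forwarding the ``not $\gamma$-stable'' verdict). The part I would be most careful with is the \emph{iff} in the stability transfer --- robustness of $\mathcal{B}$ requires that whenever $G$ is not $\gamma$-stable, $G'$ is not $\gamma$-stable either, so that $\mathcal{A}$'s ``not stable'' certificate is actually a legitimate certificate for the VC input; this is precisely what the lifting direction of the perturbation correspondence buys us.
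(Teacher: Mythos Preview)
Your proposal is correct and follows essentially the same approach as the paper: both attach a pendant terminal to every vertex (the reduction of Garg, Vazirani, and Yannakakis), establish the one-to-one, weight-preserving correspondence between vertex covers of $G$ and node multiway cuts of $G'$, and then transfer $\gamma$-stability through that correspondence. One small remark: the direction actually needed for robustness is that $G$ $\gamma$-stable implies $G'$ $\gamma$-stable (so that $\mathcal{A}$ never outputs ``not stable'' when $G$ is stable), rather than the converse you emphasize; but since you correctly argue the full iff, this does not affect the argument.
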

\begin{proof}
We use the straightforward approximation-preserving reduction of Garg et al.~\cite{DBLP:journals/jal/GargVY04}. Let $G = (V, E, w)$ be a $\gamma$-stable instance of Minimum Vertex Cover, with $V = \{u_1, ..., u_n\}$. We construct $G' = (V', E', w')$, where $G'$ contains the whole graph $G$, and moreover, for each vertex $u_i \in V$, we create a terminal vertex $s_i$ and we connect it to $u_i$ with an edge $(s_i, u_i) \in E'$. As implied, the set of terminals is $T = \{s_1, ..., s_n\}$. The weights of non-terminal vertices remain unchanged. This is clearly a polynomial-time reduction. We will now prove that each feasible vertex cover $X$ of $G$ corresponds to a feasible Mulitway Cut of $G'$ of the same cost, and vice versa. To see this, let $X$ be a feasible vertex cover of $G$, and let's assume that there is a path between two terminals $s_i$ and $s_j$ in $G'[V' \setminus X]$. By construction, this means that there is a path between $u_i$ and $u_j$ in $G'[V' \setminus X]$, which implies that there is at least one edge in this path that is not covered. Thus, we get a contradiction. Since the weight function is unchanged, we also conclude that $w(X) = w'(X)$. Let now $X'$ be a feasible Multiway Cut for $G'$, and let's assume that $X'$ is not a vertex cover in $G$. This means that there is an edge $(u_i, u_j) \in E$ such that $\{u_i, u_j\} \cap X' = \emptyset$. This means that the induced graph $G'[V' \setminus X']$ contains the path $s_i - u_i - u_j - s_j$, and so we get a contradiction, since we assumed that $X'$ is a feasible Node Multiway Cut. Again, the cost is clearly the same, and thus, we conclude that there is a one-to-one correspondence between vertex covers of $G$ and multiway cuts of $G'$.

Since the cost function is exactly the same, it is now easy to prove that a $\gamma$-stable instance $G$ of Vertex Cover implies that $G'$ is a $\gamma$-stable instance of Multiway Cut, and moreover, if $G'$ is not $\gamma$-stable, then $G$ cannot be $\gamma$-stable to begin with. Thus, we can run algorithm $\mathcal{A}$ on instance $G'$, and return its output as the output of algorithm $\mathcal{B}$. By the previous discussion, this is a $\gamma$-stable algorithm for Vertex Cover, and, if $\mathcal{A}$ is robust, then so is $\mathcal{B}$.
\end{proof}

The above result, combined with the results of~\cite{DBLP:conf/stoc/AngelidakisMM17} and the results of Section~\ref{sec:hardness}, implies Theorem~\ref{thm:lower-bounds-mc}.

\end{document}